\renewcommand*{\p@subsection}{}
\renewcommand*{\p@subsubsection}{}
\def\l@section#1#2{\@dottedtocline{1}{0em}{1.5em}{#1}{#2}}
\def\l@subsection#1#2{\@dottedtocline{2}{1.5em}{2.5em}{#1}{#2}}
\def\l@subsubsection#1#2{\@dottedtocline{3}{4.0em}{2.5em}{#1}{#2}}
\newcommand{\core}{essential\xspace}
\newcommand{\Core}{Essential\xspace}
\definecolor{darkblue}{rgb}{0.0, 0.0, 0.7}
\definecolor{bordeaux}{rgb}{0.34, 0.01, 0.1}
\definecolor{turquoise}{rgb}{0.0, 0.6, 0.6}
\numberwithin{equation}{section}
\numberwithin{figure}{section}
\numberwithin{table}{section}
\def\R{\mathbb{R}}
\def\P{\mathbb{C}\langle x\rangle}
\newcommand{\Pnull}{\mathcal J^Z}
\DeclareMathOperator{\SOS}{SOS}
\DeclareMathOperator{\im}{im}
\DeclareMathOperator{\Gr}{Gr}
\def\beq{\begin{equation}}
\def\eeq{\end{equation}}
\newtheorem{theo}{Theorem}[section]
\newtheorem{prop}[theo]{Proposition}
\newtheorem{cor}[theo]{Corollary}
\newtheorem{defin}[theo]{Definition}
\newtheorem{lemma}[theo]{Lemma}
\theoremstyle{definition} 
\newtheorem{remark}[theo]{Remark}
\newtheorem{eg}[theo]{Example}
\def\id{{\mathbb I}}
\def\bra#1{\langle#1|} \def\ket#1{|#1\rangle}
\def\braket#1#2{\langle#1|#2\rangle}
\def\be{\begin{equation}}
\def\ee{\end{equation}}
\def\bea{\begin{eqnarray}}
\def\eea{\end{eqnarray}}
\def\bma{\begin{mathletters}}
\def\ema{\end{mathletters}}
\def\A{{\cal A}}
\def\p{\overline{\pi}}
\def\q0{\underline{0}}
\def\H{{\cal H}}
\def\Z{\C\langle z\rangle}
\def\C{{\mathbb C}}
\def\id{{\mathbb I}}
\def\E{{\cal P}_E}
\def\M{{\cal M}}
\def\H{{\cal H}}
\def\HH{\mathbb{H}}
\def\B{{\cal B}}
\def\F{{\cal P}_F}
\def\R{\mathbb{R}}
\def\CC{\mathbb{C}}
\def\N{\mathbb{N}}
\def\g{\mathbf{g}}
\def\h{\mathbf{h}}
\def\cN{\mathcal N}
\begin{document}

\title{First-order optimality conditions for non-commutative optimization problems}

\author{Mateus Ara\'ujo}
\email{mateus.araujo@uva.es}
\affiliation{Departamento de Física Teórica, Atómica y Óptica, Laboratory for Disruptive Interdisciplinary Science (LaDIS), Universidad de Valladolid, 47011 Valladolid, Spain}
\author{Igor Klep}
\email{igor.klep@fmf.uni-lj.si}
\affiliation{Faculty of Mathematics and Physics, University of Ljubljana 
\& Famnit, University of Primorska, Koper 
\& Institute of Mathematics, Physics and Mechanics,
Ljubljana, Slovenia}
\author{Andrew J. P. Garner}
\email{ajp.garner@gmail.com}
\affiliation{Institute for Quantum Optics and Quantum Information (IQOQI) Vienna\\ Austrian Academy of Sciences, Boltzmanngasse 3, Wien 1090, Austria}
\author{Tam\'as V\'ertesi}
\email{tvertesi@atomki.hu}
\affiliation{HUN-REN Institute for Nuclear Research, P.O. Box 51, H-4001 Debrecen, Hungary}
\author{Miguel Navascu\'es}
\email{Corresponding author. miguel.navascues@oeaw.ac.at}
\affiliation{Institute for Quantum Optics and Quantum Information (IQOQI) Vienna\\ Austrian Academy of Sciences, Boltzmanngasse 3, Wien 1090, Austria}

\begin{abstract}
We consider the problem of optimizing the state average of a polynomial of non-commuting variables, over all states and operators satisfying a number of polynomial constraints, and over all Hilbert spaces where such states and operators are defined. Such non-commutative polynomial optimization (NPO) problems are routinely solved through hierarchies of semidefinite programming (SDP) relaxations. 
By formulating the general NPO problem in Lagrangian terms, we heuristically derive first-order optimality conditions via small variations in the problem variables. Although the derivation is not rigorous, it gives rise to two types of optimality conditions---state and operator---which are rigorously analyzed in the paper. Both types of conditions can be enforced through additional positive semidefinite constraints in the SDP hierarchies. State optimality conditions are shown to be satisfied by all NPO problems. 
For NPO problems with optimal solutions (such as, e.g., Archimedean ones)
they allow enforcing a new type of constraints: namely, restricting the optimization over states to the set of common ground states of an arbitrary number of operators. Operator optimality conditions are the non-commutative analogs of the Karush--Kuhn--Tucker (KKT) conditions, which are known to hold in many classical optimization problems. In this regard, we prove that a weak form of operator optimality holds for all NPO problems; stronger versions require the problem constraints to satisfy some qualification criterion, just like in the classical case (e.g.: Mangasarian--Fromovitz constraint qualification). We test the power of the new optimality conditions by computing local properties of ground states of many-body spin systems and the maximum quantum violation of Bell inequalities.
\end{abstract}

\maketitle

\noindent \emph{AMS classification:} 14A22, 47A62, 65K05, 90C23, 90C22, 90C46.

\noindent \emph{Keywords:} noncommutative polynomial optimization, optimality conditions, semidefinite programming, Bell nonlocality, statistical physics.

%\noindent Communicated by JM Landsberg.

\tableofcontents

\section{Introduction}
\label{intro}

Non-commutative polynomial optimization (NPO) studies the problem of minimizing the bottom of the spectrum of a polynomial of non-commuting variables, over all operator representations of these variables satisfying a number of polynomial equations and inequalities. As it turns out, in quantum mechanics many interesting physical quantities such as energy, spin and momentum are represented by operators satisfying polynomial constraints. Consequently, natural applications of NPO have been found in quantum information theory, quantum chemistry and condensed matter physics in the last decades. Examples of practical NPO problems include computing the maximal quantum violation of a Bell inequality~\cite{NPA2007,NPA2008}, the electronic energy of atoms and molecules~\cite{Nakata2001,Mazziotti2004,Mazziotti2023}, the ground state energies of spin systems~\cite{Barthel2012,Baumgratz2012,Haim2020,Requena2023}, or the ground state behavior of fermions at finite density~\cite{Lawrence2023}.

From the work of~\citet{PNA2010}, (see also Refs.~\cite{Helton2004,NPA2007,NPA2008,Doherty2008,Burgdorf2016}), we know that all NPO problems involving bounded operators can be solved through hierarchies of semidefinite programs (SDPs) \cite{Vandenberghe1996,nesterov} of increasing complexity. While the first levels of said hierarchies provide very good approximations for many NPO problems, sometimes there are considerable gaps between the lower bound provided by the SDP solver and the conjectured solution of the problem. That is, even though the SDP hierarchies converge for any problem, for some NPO problems they seem to converge too slowly. This leaves many important problems in quantum nonlocality and many-body physics unsolved, due to a lack of computational resources.

In this paper, we introduce an improved and stronger method to tackle NPO problems. The main idea is that NPOs, like classical optimization problems, very often obey a number of optimality relations, which in the classical case are dubbed the Karush--Kuhn--Tucker (KKT) conditions~\cite{Karush1939,Kuhn1951}. In the present work, we adapt and generalize these conditions to the non-commutative setting. We find that they come in two flavors: state and operator optimality conditions, both of which take the form of positive semidefinite constraints on top of the original SDP hierarchies \cite{PNA2010}. Optimality constraints on the solutions of specific NPO problems stemming from first-order differentiation have already been considered in the literature \cite{helton2023synchronous}, although not in the context of deriving new or improving existing numerical methods. Motivation aside, our contribution differs from this earlier work in the generality and scope of our optimality conditions, which we believe exhaust the set of first-order optimality constraints and can be applied to a large variety of NPO problems.

Our new optimality conditions come with two benefits: on one hand, they boost the speed of convergence of the original SDP hierarchy, often yielding convergence at a finite level. On the other hand, they allow us to enforce new types of constraints on NPO problems, such as demanding that the states over which the optimization takes place are the ground states of certain operators. We exploit this feature in Sections \ref{sec:stateopt}
and \ref{sec:many_body}, where we extract certified lower and upper bounds on local properties of the ground state of many-body spin systems. Remarkably, the ground state condition can be enforced in translation-invariant quantum systems featuring infinitely many particles. This allows us to make rigorous claims about the physics of quantum spin chains in the thermodynamic limit, thus solving an important open problem in condensed matter physics.

As in the classical, commutative case, careful study is required to justify exactly when the new non-commutative optimality conditions hold. While the state optimality conditions are easily seen to hold in all NPO problems, justifying the corresponding operator optimality conditions (the non-commutative Karush-Kuhn-Tucker conditions, or ncKKT conditions) requires more work. In this context, we show that: (a) \core ncKKT, the most relaxed variant of the operator optimality conditions, holds in all NPO problems; (b) normed ncKKT, a substantially strengthened version, holds in all ``bounded'' NPO problems that satisfy the non-commutative analog of the Mangasarian--Fromovitz conditions \cite{Nocedal2006}; (c) the even more restrictive strong ncKKT conditions hold if either the NPO problem is convex or its solution is achieved at a finite level of the original SDP hierarchy. 

Since the NPO formulation of quantum nonlocality only seems to satisfy \core ncKKT conditions, we also provide sufficient conditions that guarantee that either normed or strong ncKKT condition partially holds in a given NPO problem. Namely, in scenarios where the set of all variables can be partitioned into subsets that commute with each other and the remaining constraints and objective function for each of these parts are convex (satisfy the non-commutative Mangasarian--Fromovitz conditions), then a relaxed form of strong (normed) ncKKT holds. This result allows us to enforce more powerful optimality conditions on quantum nonlocality problems, with the resulting boost in convergence.

\textbf{Guide to the paper.}
The structure of this paper is as follows: 
In Section \ref{ssec:commutative} we recall optimization of commutative polynomials and then 
in Section \ref{sec:NPOprob} we move to the class of non-commutative optimization problems that we consider in this paper, and present their corresponding hierarchies of SDP relaxations. In Section \ref{sec:first_order}, %using heuristic arguments, 
we propose several generalizations of the first-order conditions for the non-commutative framework, which will allow us to incorporate extra constraints into our optimization problems. The necessity of the state optimality conditions will be already proven in Section \ref{sec:stateopt}. Sufficient criteria for the validity of the different forms of operator optimality are presented in Section \ref{sec:opop}. Section \ref{sec:partialop} investigates when it is legitimate to enforce the new optimality conditions partially. In Section \ref{sec:applications}, we will conduct numerical tests to see how the optimality conditions perform in practical problems. In this regard, we present two applications: the computation of the local properties of many-body quantum systems at zero temperature (Section \ref{sec:many_body}) and the maximum violation of bipartite Bell inequalities (Section \ref{sec:bell}). We then present our conclusions.

While conducting this research, we found that \citet{Fawzi2023} had independently arrived at the state optimality conditions \eqref{state_optimality}. In their interesting paper, the authors provide a sequence of convex optimization relaxations of the set of local averages of condensed matter systems at finite temperature. When the temperature parameter is set to zero, their convex optimization hierarchy turns into an SDP hierarchy, which coincides with the one presented in Section~\ref{sec:many_body} of this paper.

\section{Background on polynomial optimization theory}
\subsection{Classical polynomial optimization}\label{ssec:commutative}
In this section, we informally introduce the state-of-the-art on optimization of commutative polynomials through hierarchies of semidefinite programs. For a more detailed account, we refer the reader to \cite{Laurent2009,lasserrebook,niebook}.

Consider a classical polynomial optimization problem, i.e., a problem of the form:
\begin{equation}\label{classical_prob}
\begin{split}
p^\star:=&\min f(x)\\ 
\mbox{s.t.~}&g_i(x)\geq0,\quad i=1,\ldots ,m,\\ 
&h_j(x)=0, \quad j=1,\ldots ,m',
\end{split}
\end{equation}
where $x=(x_1,\ldots ,x_n)$ is a vector of real variables, and $f,g_i,h_j$ are real-valued polynomials thereof. Given a function $s(x)$, call $\partial_x s$ its gradient, i.e., $\partial_x s=\left(\frac{\partial s(x)}{\partial x_1},\ldots ,\frac{\partial s(x)}{\partial x_n}\right)$. Let $\mathbb{A}(x)$ denote the set of \emph{active} inequality constraints, i.e., the set of indices $i\in\{1,\ldots ,m\}$ for which $g_i(x)=0$. In this commutative scenario, the Karush--Kuhn--Tucker (KKT) conditions read: 
\beq
\begin{aligned}\label{KKT_classical}
&\exists \{\mu_i\}_i\subset\R_{\geq 0},\;\{\lambda_j\}_j\subset \R,\\
\mbox{such that }&\partial_xf\Bigr|_{x=x^\star}=\sum_{i\in\mathbb{A}(x^\star)}\mu_i\partial_x g_i\Bigr|_{x=x^\star}+\sum_j\lambda_j\partial_xh_j\Bigr|_{x=x^\star},\\
&\mu_ig_i(x^\star)=0,\quad\forall i, 
\end{aligned}
\eeq
where $x^\star$ is an optimizer of \eqref{classical_prob}.

The optimal solutions of many classical optimization problems, even those involving non-polynomial functions, are known to satisfy the KKT conditions, also known as first-order optimality conditions~\cite{Karush1939,Kuhn1951}. Sufficient criteria to ensure that the KKT conditions hold are dubbed `constraint qualifications'. A well-known instance of the latter is Mangasarian--Fromovitz constraint qualification MFCQ \cite{Nocedal2006}, which stipulates that any problem of the form \eqref{classical_prob} satisfies the (commutative) KKT conditions if
\begin{equation}
\{\partial_x h_j(x^\star)\}_j \mbox{ linearly independent},
\label{MF_classical_prelud}
\end{equation}
and there exists $\ket{z}\in\R^n$ such that\footnote{In the following, we use the physicist's ``braket'' notation, whereby a column vector $\psi$ is denoted by the ``ket'' $\ket{\psi}$; and its conjugate transpose, by the ``bra'' $\bra{\psi}$. In this notation, the scalar product between the vectors $\phi,\psi$ is represented by the ``braket'' $\braket{\phi}{\psi}$, which is anti-linear in its first argument and linear in the second.}
\begin{align}
&\braket{\partial_xg_i(x^\star)}{z}> 0,\quad \forall i\in\mathbb{A}(x^\star),\nonumber\\
&\braket{\partial_xh_j(x^\star)}z=0,\quad j=1,\ldots ,m'.
\label{MF_classical}
\end{align}

Polynomial optimization problems can be tackled via the Lasserre-Parrilo moment-SOS hierarchy of semidefinite programming (SDP) relaxations \cite{Lasserre2001,Parrilo}. This is a sequence of SDPs, with solutions $(p^k)_k$, such that $p^1\leq p^2\leq\cdots\leq p^\star$. Moreover, if the constraints of Problem \eqref{classical_prob} imply that any feasible point $x$ is bounded (more precisely, the constraints in 
\eqref{classical_prob} satisfy a
 condition known as Archimedeanity, see below), then it holds that $\lim_{k\to \infty}p^k=p^\star$.

In \cite{Nie2013}, Nie shows that generically, if we explicitly add the KKT constraints \eqref{KKT_classical} to Problem \eqref{classical_prob} and then we apply the Lasserre-Parrilo construction, we end up with a hierarchy of SDP relaxations that collapses at a finite order. Namely, there exists $\bar{k}\in\N$ such that $p^{\bar{k}}=p^\star$. Harrow, Natarajan and Wu go one step beyond in \cite{Harrow2017}, by upper bounding $\bar{k}$ for a large class of polynomial optimization problems with applications in quantum information theory.

The overarching goal of this paper is to generalize these methods to non-commutative polynomial optimization problems, which are the subject of the next subsection.

\subsection{Non-commutative polynomial optimization}
\label{sec:NPOprob}
In this section, we summarize current notions and methods in non-commutative polynomial optimization (NPO). For a deeper treatment of these matters, we recommend the reviews \cite{Navascues2012}, \cite{Burgdorf2016}.

First, let us set the notation to be used throughout the paper. In this work, we will be interested in polynomials with complex coefficients of $n$ non-commuting variables $x=(x_1,\ldots ,x_n)$. We will denote by $\C\langle x\rangle$ the set of all such polynomials; the set of monomials or words of the alphabet $x$ will be called $\langle x\rangle$. By $\C\langle x\rangle_d$ and $\langle x\rangle_d$ we will respectively mean the set of polynomials or words of degree smaller than or equal to $d$.

The set $\P$ comes with an involution $*$, which has the effect of conjugating the coefficients of the polynomial and inverting the order of the monomials. A polynomial $p(x)$ is called symmetric or Hermitian if $p(x)=p(x)^*$. Note that, according to this definition, the variables $x_1,\ldots,x_n$ themselves are Hermitian.

The variables $x_1,\ldots,x_n$ are meant to model Hermitian elements of a complex $C^*$-algebra $\A$, namely, a Banach algebra with an involution $*$ such that $\|aa^*\|=\|a\|^2$, for all $a\in\A$. Note that, for any Hilbert space $\H$, the set of bounded operators $B(\H)$ with the operator norm is a $C^*$-algebra. In fact, for every $C^*$-algebra, there exists a Hilbert space $\H$ such that $\A$ is isomorphic to some closed $*$-subalgebra of $B(\H)$. A \emph{state} $\psi$ in $\A$ is a linear functional $\psi:\A\to\C$ satisfying: $\psi(a^*)=\psi(a)^*$ for all $a\in \A$ (Hermiticity), $\psi(1)=1$ (normalization) and $\psi(aa^*)\geq 0$ (positivity). Given an algebra $\A$ with an operator representation in $B(\H)$, any normalized vector $\ket{\psi}\in \H$ defines a state $\psi$ through $\psi(\bullet)=\bra{\psi}\pi(\bullet)\ket{\psi}$, where $\pi$ is the $*$-homomorphism that maps elements of $\A$ into $B(\H)$. A state $\psi$ is \emph{normal} if there exist vectors $\{\ket{\psi_k}\}_k\subset \H$ such that $\psi(\bullet)=\sum_k\bra{\psi_k}\pi(\bullet)\ket{\psi_k}$. 

{For a tuple of Hermitian operators $X\in B(\H)^n$, $\A(X)$ will denote the unital $*$-subalgebra of $B(\H)$ generated by $X$, i.e., linear combinations of finite products of the $X_i$. Its closure in the norm is a $C^*$-subalgebra of $B(\H)$, denoted $C^*(X)$. Finally, its bicommutant (that is, the set of all elements of $B(\H)$ that commute with all elements that commute with all $X_i$) is the von Neumann algebra generated by $X$, denoted $W^*(X)$. By von Neumann's bicommutant theorem \cite[Theorem II.3.9]{takesaki1}, $\A(X)$ is dense in $W^*(X)$ in the weak or strong operator topology.}

We are ready to define NPO problems.
\begin{defin}
Let $x=(x_1,\ldots ,x_n)$ be a tuple of non-commuting variables, and let $f$,
$\{g_i:i=1,\ldots ,m\}$, $\{h_j:j=1,\ldots ,m'\}$ be symmetric polynomials on those variables. 
Then, the following program is a non-commutative polynomial optimization (NPO) problem:
\begin{equation}\label{nc_prob_hilbert}
\begin{split}
p^\star:=&\inf_{\H,X,\psi} \psi(f(X))\\ 
\mbox{s.t. }&g_i(X)\geq0,\quad i=1,\ldots ,m,\\ 
&h_j(X)=0, \quad j=1,\ldots ,m',
\end{split}
\end{equation}
where the minimization takes place over all Hilbert spaces $\H$, %(normalized) 
states $\psi:B(\H)\to \C$ and Hermitian operators $(X_1,\ldots ,X_n)\in B(\H)^{n}$. 
\end{defin}
A non-commutative polynomial optimization (NPO) problem~\cite{PNA2010,Burgdorf2016} is the natural analog of a polynomial optimization problem~\cite{Putinar1993,Lasserre2001,Laurent2009}.
Computing the maximal quantum violation of a Bell inequality~\cite{Bell1964,Tsirelson1987,Brunner2014} or the energy of a many-body quantum system~\cite{Anderson2018} are examples of NPO problems.

\begin{remark}
One might think of extending the definition of NPO problem to also allow equality constraints of the form $h(X)=0$, with $h$ not necessarily a Hermitian polynomial. However, such problems are also encompassed in the definition above. Indeed, any constraint of the form $h(X)=0$ is equivalent to the double constraint $\{h^1(X)=0,h^2(X)=0\}$, where
\begin{equation}
h^1:=h+h^*,h^2:=i(h-h^*)
\end{equation}
are Hermitian polynomials.
\end{remark}

\begin{remark}
Equality constraints of the form $h(x)=0$, with $h\in\P$ Hermitian, are equivalent to the inequality constraints $\{h(x)\geq 0,-h(x)\geq 0\}$. However, the relaxations that we next present to tackle Problem \eqref{nc_prob_hilbert} do make a distinction between both ways of modeling equality constraints. Ditto for the operator optimality conditions that we introduce in Sections \ref{sec:KKTcond} and \ref{sec:partialop}. In both cases, separately handling equality constraints using ideals as we do here is more efficient than handling two inequality constraints.
\end{remark}

The formulation \eqref{nc_prob_hilbert} of the NPO problem is very difficult to work with: since the optimal Hilbert space $\H^\star$ (if it exists) might well be infinite dimensional, it requires conducting an optimization over infinitely many variables!
Luckily, Problem \eqref{nc_prob_hilbert} can be relaxed to an (as we next see, more tractable) optimization over linear functionals on $\P$, i.e., elements of $\P^*$. In this regard, the following two definitions will be useful.
\begin{defin}
A linear functional $\sigma:\P\to\C$ is Hermitian if $\sigma(p)\in\R$ for all Hermitian $p\in \P$.
% \end{defin}
% \begin{defin}
A (Hermitian) linear functional $\sigma:\P\to\C$ is positive if $\sigma(pp^*)\geq 0$, for all $p\in\P$.
These notions naturally extend to linear functionals $\A\to\C$ for any $*$-algebra $\A$.
\end{defin}
Given a feasible point $(\H,\psi,X)$ of Problem \eqref{nc_prob_hilbert}, the linear functional $\sigma:\P\to\C$, defined through
\begin{equation}
\sigma(p):=\psi(p(X))
\label{corresp_NPO_relax}
\end{equation}
is obviously Hermitian and positive. Moreover, it satisfies $\sigma(1)=1$.

To model the problem constraints, we will resort to the following two notions from algebraic geometry.
\begin{defin}
Given a vector of Hermitian polynomials $\h=(h_1,\ldots,h_{m'})$, the \emph{ideal generated by $\h$} is the set of polynomials of the form:
\begin{equation}
I(\h)=\left\{\sum_{j,k}p^+_{j,k}h_jp^{-}_{j,k}:\{p^+_{jk}\}_{jk}\cup\{p^{-}_{jk}\}_{jk}\subset \P\right\}.
\end{equation}
We will denote by $I(\h)_d$ the set of polynomials admitting a decomposition as above with $\deg(p^+_{j,k})+\deg(h_j)+\deg(p^{-}_{j,k})\leq d$, for all $j,k$.
\end{defin}
\begin{defin}
Given a vector of Hermitian polynomials $\g =(g_1,\ldots,g_m)$, the \emph{quadratic module generated by $\g$} is the set of polynomials of the form:
\begin{equation}
M(\g)=\left\{\sum_{k}p_{k}p^*_{k}+\sum_{i,k}p_{ik}g_ip^*_{ik}:\{p_j\}_j\cup\{p_{ik}\}_{ik}\subset \P\right\}.
\end{equation}    
We will denote by $M(\g)_d$ the set of polynomials admitting a decomposition as above with $2\deg(p_k),2\deg(p_{ik})+\deg(g_i) \leq d$, for all $i,k$.
\end{defin}
Note that both ideals and quadratic modules are convex cones of polynomials. Some such cones satisfy an important algebraic property that will appear over and over in this paper: Archimedeanity.
\begin{defin}
Let $C$ be a nonempty convex cone of polynomials.% such that $1\in C$. 
The cone $C$ is Archimedean if, for any $p\in\P$, there exists $K\in\R^+$ such that $K-pp^*\in C$.
\end{defin}
\begin{remark}\label{rem:arch}
For $\g\in \P^m$, $\h\in \P^{m'}$, the set of polynomials $M(\g)+I(\h)$ defines a cone containing the identity. As proven in 
\cite[Lemma 4]{vidav} (or see \cite[Section 2]{cimpric}, \cite[Section 2]{Klep2007}), any such set is Archimedean iff there exists $K\in\R^+$ such that
\begin{equation}
K-\sum_i x_i^2\in M(\g)+I(\h).
\end{equation}
\end{remark}
Hermitian elements $s$ of $M(\g)+I(\h)$ are also called \emph{weighted sums of squares (SOS) polynomials}. The right-hand side of expressions of the form
\begin{equation}\label{eq:sos}
s=\sum_{k}p_{k}p^*_{k}+\sum_{i,k}p_{ik}g_ip^*_{ik}+\sum_{j,k}p^+_{j,k}h_jp^{-}_{j,k}
\end{equation}
is called an \emph{SOS decomposition for $s$}. Note that, since $s$ is Hermitian, we can rewrite the above as
\begin{equation}
s=\sum_{k}p_{k}p^*_{k}+\sum_{i,k}p_{ik}g_ip^*_{ik}+\frac{1}{2}\sum_{j,k}p^+_{j,k}h_jp^{-}_{j,k}+(p^-_{j,k})^*h_j(p^{+}_{j,k})^*.
\end{equation}

\begin{prop}
If $s\in M(\g)+I(\h)$, i.e., $s$ admits an SOS decomposition 
as in \eqref{eq:sos},
and the tuple of operators $\bar{X}\in B(\H)^{n}$ satisfies the constraints of Problem \eqref{nc_prob_hilbert}, then $s(\bar{X})$ is a positive semidefinite operator. 
\end{prop}

It is immediate that any linear functional $\sigma$, defined through Eq.~\eqref{corresp_NPO_relax} will satisfy the constraints $\sigma(s)\geq0,$ for all $s\in M(\g)+I(\h)$, a condition which we will denote by $\sigma(M(\g))\geq 0$, and $\sigma(I(\h))=0$. We arrive at the following relaxation of Problem \eqref{nc_prob_hilbert}:
%seen equivalent to 
\begin{equation}\label{nc_prob}
\begin{aligned}
q^\star:=&\inf_{\sigma:\P\to\C} \sigma(f)\\ 
\mbox{s.t. }&\sigma(1)=1,\\ 
&\sigma(M(\g))\geq0,\\ 
&\sigma(I(\h))=0,
\end{aligned}
\end{equation}
with $\g=(g_1,\ldots,g_m)$, $\h=(h_1,\ldots,h_{m'})$.
%In the following, we denote by $\sigma^\star$ the minimizer of this problem. 
Clearly, $q^\star\leq p^\star$. Moreover, if the set $M(\g)+I(\h)$ is Archimedean, then $q^\star=p^\star$ and both problems admit an optimal solution \cite{Pironio2010}. This is partly a consequence of the Gelfand--Naimark--Segal (GNS) construction~\cite{Gelfand1943,Segal1947}; see \cite[Theorem I.9.14]{takesaki1} for a modern treatment. 

Given a positive linear functional $\sigma$ satisfying the conditions of Problem \eqref{nc_prob}, the GNS construction 
proceeds as follows: $\sigma$ defines a sesquilinear form 
$(a,b):=\sigma(a^*b)$ on $\P$. By modding out nullvectors $\cN$ , this form induces a scalar product, whence upon completion we arrive at a Hilbert space $\H$. 
Left multiplication by the variables $x_j$ on $\P$ induces
a linear operator $X_j$ on $\P/\cN$, and then
bounded operators $X_j$ on $\H$ (at this step the Archimedean condition enters). These operators satisfy the constraints of Problem \eqref{nc_prob_hilbert}. Finally, 
the quotient image modulo $\cN$ of the polynomial $1\in\P$ is a unit vector  $\phi\in\H$ such that
\begin{equation}\label{eq:GNS}
\bra{\phi}p(X)\ket{\phi}=\sigma(p(x)),\quad\forall p\in \P.
\end{equation}
Defining $\psi(\bullet):=\bra{\phi}\bullet\ket{\phi}$, we thus have that $(\H, X,\psi)$ is a feasible point of Problem \eqref{nc_prob_hilbert} with the same objective value as $\sigma$. 
This observation allows us to define a notion of boundedness for positive linear functionals in $\P$.
\begin{defin}
Let $\sigma:\P\to\C$ be Hermitian and positive. We say that $\sigma$ is \emph{bounded} if its GNS construction generates bounded operators $X_1,\ldots,X_n$.    
\end{defin}

Very conveniently, Problem \eqref{nc_prob} can be relaxed through hierarchies of semidefinite programs (SDP)~\cite{NPA2007,NPA2008,PNA2010}. Call $\HH_{k}$ the set of Hermitian linear functionals defined in $\P_k$, and, for $k\in \N$ large enough, consider the finite-dimensional optimization problem
\begin{equation}\label{k_relaxation}
\begin{aligned}
{p}^k:=&\inf_{\sigma^k\in \HH_{2k}} \sigma^k(f)\\ 
\mbox{s.t. }&\sigma^k(1)=1,\\
&\sigma^k(M(\g)_{2k})\geq0,
&\sigma^k(I(\h)_{2k})=0.
\end{aligned}
\end{equation}
The relaxation \eqref{k_relaxation} can be cast as a semidefinite program~\cite{Vandenberghe1996,Anjos2011} with $\mbox{dim}(\P_{2k})$ free complex variables. To implement the constraints, it suffices to consider bases of monomials. Let $\{o_a\}_a$ ($\{o^l_a\}_a$) be monomial bases of polynomials of degree at most $k$ or smaller (degree at most $k-\left\lceil\frac{\deg (g_l)}{2}\right\rceil$). Then, the matrices
\begin{equation}
\begin{aligned}
&\left(M^k(\sigma^k)\right)_{ab}:= \sigma^k(o_a^*o_b),\\ 
&\left(M_l^k(\sigma^k)\right)_{ab}:= \sigma^k((o_a^l)^*g_lo^l_b),
\end{aligned}
\label{def_mom_loc_matrices}
\end{equation}
are respectively called the \emph{$k^{th}$-order moment matrix of $\sigma^k$} and the \emph{$k^{th}$-order localizing matrix of $\sigma^k$ for constraint $g_l$} \cite{PNA2010}. Enforcing the second line of constraints in Problem \eqref{k_relaxation} boils down to demanding that the moment matrix $M^k(\sigma^k)$ and the localizing matrices $\{M^k_l(\sigma^k)\}_l$ are positive semidefinite. The last line can be similarly dealt with: it suffices to make sure, for $j=1,\ldots,m'$, that $\sigma^k(sh_js')=0$ for all monomials $s$, $s'$ with $\deg(s)+\deg(s')+\deg(h_j)\leq 2k$.

The SDP dual of \eqref{k_relaxation} can be formulated as:
\begin{equation}\label{k_dual}
q^k:=\sup\{\theta: f-\theta\in M(\g)_{2k}+I(\h)_{2k}\},
\end{equation}
with $\g=(g_1,\ldots,g_m)$, $\h=(h_1,\ldots,h_{m'})$. This problem is also an SDP~\cite{Helton2004}. Problem \eqref{k_dual} can be interpreted as finding the maximum real number $\theta$ such that the operator $f(X)-\theta$ can be proven 
(with an SOS certificate whose degree is bounded by $2k$)
positive semidefinite for all tuples of operators $X$ satisfying the constraints of Problem \eqref{nc_prob_hilbert}.

As the degree $k$ of the available polynomials  grows, one would expect the sequences of lower bounds $(q^k)_k$, $({p}^k)_k$ to better approximate the solution $q^\star$ of Problem \eqref{nc_prob}. Clearly, ${p}^1\leq {p}^2\leq\cdots\leq q^\star$, and similarly for the $q$'s. In fact, a sufficient condition for these hierarchies to be complete, in the sense that $\lim_{n\to\infty}p^n=\lim_{n\to\infty}q^n=q^\star$, is that the cone of polynomials $M(\mathbf{g})+I(\mathbf{h})$ satisfies Archimedeanity, in which case we will say that Problems \eqref{nc_prob_hilbert} and \eqref{nc_prob} are Archimedean~\cite{PNA2010}. The Archimedean property further implies that Problems \eqref{nc_prob_hilbert} and \eqref{nc_prob} are equivalent (and so $q^\star=p^\star$) and have optimal solutions (hence we can replace the $\inf$'s by $\min$'s in the problem definitions). Note that, if the Archimedean condition holds, then all the feasible points $X$ are bounded. Conversely, if the set of feasible operators is bounded, then a relation of the form $K-\sum_ix_i^2\geq 0$ can be added to the problem inequality constraints essentially without changing the problem.

\section{First-order optimality conditions}
\label{sec:first_order}
In this section, we present first-order optimality conditions for Problems \eqref{nc_prob_hilbert} or \eqref{nc_prob}. We arrived at them through a heuristic (namely, non-rigorous) argument, which we present in Appendix \ref{app:heuristic}. The argument involves mathematically ill-defined objects and it is not needed to understand the optimality conditions, or to find sufficient conditions to ensure that the conditions apply. We decided to include it in the paper nonetheless, as it could be a useful starting point to generalize our results (e.g: to NPO problems with state constraints \cite{statePoly}).

This said, first-order optimality constraints come in two types: state and operator optimality conditions. Let us start with the former.

\subsection{State optimality conditions}
\label{sec:stateopt}
We present two definitions, one for each formulation of the NPO problem.
\begin{defin}
We say that Problem \eqref{nc_prob_hilbert} satisfies \emph{state optimality conditions} if, for any  
solution $(\H^\star,\psi^\star,X^\star)$, it holds that
\begin{subequations}
\label{state_optimality_hilbert} 
\begin{flalign}
\psi^\star([f(X^\star),s(X^\star)]) = 0, \quad& \forall s\in \P,\\
\psi^\star\left(s^*(X^\star) f(X^\star) s(X^\star)-\frac{1}{2}\{f(X^\star),s^*(X^\star) s(X^\star)\}\right) \geq 0, \quad& \forall s\in \P,
\end{flalign}
\end{subequations}
where the expression $\{a,b\}$ denotes the anti-commutator of $a,b$, i.e., $\{a,b\}:=ab+ba$.
\end{defin}

\begin{defin}
We say that Problem \eqref{nc_prob} satisfies \emph{state optimality conditions} if, for any bounded solution $\sigma^\star$, it holds that
\begin{subequations}
\label{state_optimality} 
\begin{flalign}
\sigma^\star([f,s]) = 0, \quad& \forall s\in \P,\\
\sigma^\star\left(s^* f s-\frac{1}{2}\{f,s^* s\}\right) \geq 0, \quad& \forall s\in \P.
\end{flalign}
\end{subequations}
\end{defin}
\begin{remark}
{Note that, according to the definitions above, if Problem \eqref{nc_prob_hilbert} (Problem \eqref{nc_prob}) happens not to have optimal solutions (bounded optimal solutions),
 then it will satisfy the state optimality conditions. The same will hold true for the coming operator optimality conditions, introduced in Section \ref{sec:KKTcond}.}
\label{remark_ex_bound_sol}
\end{remark}

As we will soon see, state optimality conditions are easily shown to hold for all NPO problems. Before we provide the proof, though, let us introduce a few useful definitions.
\begin{defin}
Let $\H$ be a Hilbert space and let $H\in B(\H)$ be a Hermitian operator. \emph{The ground state energy of $H$}, denoted by $E_0(H)$, is the bottom of the spectrum of $H$. A \emph{ground state of $H$} is a state $\sigma\in B(\H)^*$ such that $\sigma(H)=E_0(H)$. We will denote by $\Gr (H)$ the set of ground states of $H$.
\end{defin}

The following proposition will play a role in the derivations to follow.
\begin{prop}\label{prop:eigen}
Let $\sigma$ be a ground state of $H$. Then, for any operator $Q$ it holds that 
\begin{equation}\sigma(HQ)=E_0(H)\sigma(Q)=\sigma(QH).
\label{prop:eigen0}
\end{equation}
\end{prop}
\begin{proof}
Define $\bar{H}:=H-E_0(H)$. Then we have that $\bar{H}\geq 0$ and that $\sigma(\bar{H})=0$. Since $\sigma$ is a positive functional, it holds that the matrix
\begin{equation}\label{eq:2x2mat}
\left(\begin{array}{cc}\sigma(Q^* \bar{H}Q)&\sigma(Q^* \bar{H} )\\ \sigma(\bar{H} Q)&\sigma(\bar{H})\end{array}\right)
\end{equation}
is positive semidefinite: otherwise, there would exist a vector $(a,b)$ such that $\sigma((aQ+b)^*\bar{H}(aQ+b))<0$. Since one of the diagonals in \eqref{eq:2x2mat} is zero, it follows that the off-diagonal terms are zero too. {Thus, $\sigma(\bar{H}Q)=\sigma(Q^*\bar{H})=0$, which implies $\sigma(HQ)=E_0(H)\sigma(Q)$, $\sigma(Q^*H)=E_0(H)\sigma(Q^*)$. The second equality in Eq.~\eqref{prop:eigen0} follows by replacing $Q^*$ by $Q$ in the last relation.}
\end{proof}

If $\sigma$ is the ground state of some operator $H$, then $\sigma$ can be shown to satisfy an analog of Eq.~\eqref{state_optimality_hilbert}.
\begin{prop}\label{prop:gs}
Let $\sigma\in B(\H)^*$ be the ground state of the Hermitian operator $H\in B(\H)$. Then, for any $Q\in B(\H)$, it holds that
\begin{equation}
\sigma([H,Q])=0, \quad \sigma\left(QHQ^*-\frac{1}{2}\{H,QQ^*\}\right)\geq0.
\label{state_opt_ground}
\end{equation}
\label{prop:state_opt_ground_states}
\end{prop}
\begin{proof}
Let $Q\in B(\H)$. By Proposition \ref{prop:eigen}, we have, on one hand, that
\begin{equation}
\sigma([H,Q])=\sigma(HQ)-\sigma(QH)=E_0(H)(\sigma(Q)-\sigma(Q))=0.
\end{equation}
On the other hand,
\begin{subequations}
\begin{align}
\sigma\left(QHQ^*-\frac{1}{2}\{H,QQ^*\}\right)&  =\sigma\left(QHQ^*-E_0(H)QQ^*\right)\\
&=\sigma\left(Q(H-E_0(H))Q^*\right)\\ &\geq 0,
\end{align}
\end{subequations}
where the last inequality stems from the relation $H-E_0(H)\geq 0$.
\end{proof}

We are ready to prove the universal validity of the state optimality conditions.
\begin{theo}\label{thm:stateopt}
Both Problem \eqref{nc_prob_hilbert} and Problem \eqref{nc_prob} satisfy the state optimality conditions.
\end{theo}

\begin{proof}
We will first prove the statement for Problem \eqref{nc_prob_hilbert}. Let $(\H^\star,\Psi^\star,X^\star)$ be a solution of \eqref{nc_prob_hilbert}. Then, $p^\star\geq E_0(f(X^\star))$, since $f(X^\star)-E_0(f(X^\star))\geq 0$. Further, $E_0(f(X^\star))\geq p^\star$: otherwise, $\psi(f(X^\star))<p^\star$, for any $\psi\in \mbox{Gr}(f(X^\star))$, so $\Psi^\star$ would not be an optimal solution of \eqref{nc_prob_hilbert}.

Thus, $E_0(f(X^\star))=p^\star$, and so $\Psi^\star$ is a ground state of the operator $f(X^\star)$. It follows, by Proposition \ref{prop:state_opt_ground_states}, that $\Psi^\star$ satisfies Eqs.~\eqref{state_opt_ground} for $H=f(X^\star)$ and any operator of the form $Q=s(X^\star)$, with $s\in \P$.

We next prove that Problem \eqref{nc_prob} also satisfies the state optimality conditions. Let $\sigma^\star$ be a bounded solution. Then we can use the GNS construction 
(see the paragraph leading up to \eqref{eq:GNS})
to construct a representation $(\H^\star,\Psi^\star,X^\star)$ with $\sigma^\star(s)=\Psi^\star(s(X^\star))$ for all $s\in\P$. From here on, the proof goes exactly as for Problem \eqref{nc_prob_hilbert}.
\end{proof}

\begin{remark}
Note that, if Problem \eqref{nc_prob} is Archimedean, then: (a) it has optimal solutions; and (b) all of them are bounded. Thus, one can add the extra SDP constraints
\begin{subequations}
\begin{flalign}
\sigma^k([f,s]) = 0, \quad& \forall s\in \P,\,\deg(s)+\deg(f)\leq 2k,\\
\sigma^k\left(p^* f p-\frac{1}{2}\{f,p^* p\}\right) \geq 0, \quad& \forall p\in \P,\,2\deg(p)+\deg(f)\leq 2k.
\end{flalign}
\end{subequations}
to the SDP hierarchy \eqref{k_relaxation}. This has sometimes the effect of boosting its speed of convergence, as we show in Sections \ref{sec:numerical}, \ref{sec:many_body}.

\end{remark}

By Proposition \ref{prop:state_opt_ground_states}, conditions analog to \eqref{state_optimality} allow us to enforce a new type of constraints in non-commutative optimization problems: namely, restricting the state optimization to the set of ground states of a number of operators defined through Hermitian polynomials of the problem variables. Indeed, given any set of Hermitian polynomials $b_1(x), \ldots,b_{m''}(x)$, consider a problem of the form
\begin{equation}\label{nc_prob_extended}
\begin{aligned}
p_G^\star:=&\min_{\H,X,\psi} \psi(f(X)) &&\\ 
\mbox{s.t. }&g_i(X)\geq0, && i=1,\ldots ,m,\\ 
&h_j(X)=0,&& j=1,\ldots ,m',\\ 
&\psi\in\Gr (b_l(X)), && l=1,\ldots ,m''.
\end{aligned}
\end{equation}
By Proposition \ref{prop:state_opt_ground_states}, one can relax this problem by adding to the relaxation \eqref{k_relaxation} the SDP constraints:
\begin{subequations}\label{sdp_constr_ground}
\begin{flalign}
\sigma^k([b_l,s]) = 0, \quad& \forall s\in \P,\,\deg(s)+\deg(b_l)\leq 2k,\mbox{ for }l=1,\ldots,m'',\\
\sigma^k\left(p^* b_lp-\frac{1}{2}\{b_l,p^* p\}\right) \geq 0, \quad& \forall p\in \P,\,2\deg(p)+\deg(b_l)\leq 2k,\mbox{ for }l=1,\ldots,m''.
\end{flalign}
\end{subequations}
Call $p_G^k$ the result of running program \eqref{k_relaxation} together with the SDP constraints \eqref{sdp_constr_ground}. Then, we trivially have that $p_G^1\leq p_G^2\leq\cdots\leq p_G^\star$. The following theorem provides a sufficient condition for the completeness of this hierarchy of relaxations.
\begin{theo}
\label{theo:suff_state_optimality}
Let $M(\mathbf{g})+I(\mathbf{h})$ be Archimedean, and suppose that all tuples of Hermitian operators $X\in B(\H)^n$ satisfying the problem constraints $\{g_i(X)\geq0\}_i$ and $\{h_j(X)=0\}_j$ have these two properties:
\begin{enumerate}[\rm(1)]
    \item For some unitary $U\in B(\H)$,
    \begin{equation}
    U\A(X)U^*=\bigoplus_l \A_l\otimes\id_{\bar{H}_l},
    \end{equation}
    where the direct sum is taken to be countable and, for each $l$, $\A_l$ is an irreducible 
    $*$-algebra.
    \item There exist projectors $\{P_l\in W^*(X)\}_l$ such that, for all $l$,
    \begin{equation}\label{eq:projthm}
    UP_l\A(X)P_lU^*=0\oplus (\A_l\otimes\id_{\bar{H}_l})\oplus 0.
    \end{equation}    
\end{enumerate}
Then, $\lim_{k\to\infty} p_G^k=p_G^\star$.
% all irreducible operator representations of the problem constraints $\{g_i(X)\geq0\}_i\cup \{h_j(X)=0\}_j$ be finite-dimensional. Then, $\lim_{k\to\infty} p_G^k=p_G^\star$.
\end{theo}
\begin{remark}
\label{remark_suff_conds_state_opt}
The conditions of the theorem are satisfied if either of the conditions below hold:
\begin{enumerate}[\rm(1)]
\item For any $X$ such that $\{g_i(X)\geq0\}_i$ and $\{h_j(X)=0\}_j$, $W^*(X)$ is \emph{purely atomic}, i.e., it decomposes as a (at most countable) direct sum of irreducible von Neumann algebras. This could be the case, for instance, if the problem's equality constraints are such that, for some Hermitian polynomials $\{p_k(x)\}_k$, it holds that
\begin{align}
&[x_i,p_k], p_kp_j-\delta_{jk}p_k\in I(\mathbf{h}),
\;\forall j,k, \nonumber\\
&1-\sum_k p_k\in I(\mathbf{h}),\nonumber
\end{align}
and, in addition, for each $k$, the ideal $I_k$ generated by $\{p_kx_l\}_l$ is such that $I_k/\big(I(\mathbf{h})\cap I_k\big)$ admits a single irreducible representation $\pi$ with $\pi(p_k)=1$. For any feasible operator tuple $X\in B(\H)^n$, the operators $\{p_k(X)\}_k$ would thus be central projections in $W^*(X)$, yielding a block decomposition
\[
\H=\bigoplus_{k=1}^r p_k(X)\H,
\qquad
W^*(X)=\bigoplus_{k=1}^r W^*\Bigl(X\big|_{p_k(X)\H}\Bigr),
\]
with each summand $W^*(X|_{p_k(X)\H})$ irreducible.

\item For any $X$ such that $\{g_i(X)\geq0\}_i$ and $\{h_j(X)=0\}_j$, $\A(X)$ is finite dimensional. This is often the most straightforward condition to verify in applications, and is of importance since it covers, e.g., the Pauli constraints, see Section \ref{sec:many_body}. If the equality constraints $h_j(x)=0$ imply that the quotient $*$-algebra
$
\mathbb{C}\langle x\rangle / I(\mathbf{h})
$
is finite-dimensional (equivalently: for every feasible tuple $X$, the $*$-algebra $\A(X)$ generated by $X$ is finite-dimensional), then $\A(X)$ is isomorphic to a finite direct sum of full matrix algebras,
\[
\A(X) \cong \bigoplus_{\ell=1}^r M_{d_\ell}(\CC),
\]
and the decompositions  in Theorem \ref{theo:suff_state_optimality} are obtained by taking $P_\ell$ to be the (central) block projections corresponding to these simple summands.
\end{enumerate}
\end{remark}

To prove Theorem \ref{theo:suff_state_optimality}, we need the following lemma.
\begin{lemma}
\label{lemma:ground}
Let $\H$ be a Hilbert space, let $\psi$ be a normal state and let $X\in B(\H)^n$ be Hermitian operators. Let $X$ have no non-trivial invariant subspaces, i.e., for any non-zero $\ket{\phi}\in \H$, it holds that $\A(X)\ket{\phi}$ is dense in $\H$. 
Then, for any Hermitian $b\in W^*(X)$, the conditions
\begin{subequations}\label{condi_state}
\begin{flalign}
\psi\left([b,s]\right)=0,& \quad\forall s\in \A(X)\label{eq:acondi_state}\\
\psi\left(sbs^*-\frac{1}{2}\{ss^*,b\}\right)\geq 0,& \quad\forall s\in \A(X)
\end{flalign}
\end{subequations}
imply that $\psi$ is a ground state of $b$.
\end{lemma}
\begin{proof}
Note that the first condition of \eqref{condi_state} implies that $\psi([b^n,s])=0$ for all $n\in\N$, $s\in \A(X)$.
Indeed, consider the following chain of equalities:
\[
\begin{split}
\psi(b^ns) & = \psi(b(b^{n-1}s))=\psi((b^{n-1}s)b)=\psi(b(b^{n-2}sb))= \psi(b^{n-2}sb^2) = \cdots = \psi(sb^n).
\end{split}
\]
 Since $\A(X)$ is dense in $W^*(X)$ in the weak topology, the relation \eqref{eq:acondi_state} also holds for all $s\in W^*(X)$. Invoking again the density of $\A(X)$ and the equation above, it follows that, for any measurable function $r:\R\to\R$,
\begin{equation}
\psi([r(b),s])=0,\quad %\forall s\in \A(X). 
{\forall s\in W^*(X).}
\label{super_comm}
\end{equation}
For any measurable subset $A$ of $[E_0(b),-E_0(-b)]$, define $\Pi(A):=\chi_A(b)$, i.e., $\Pi(A)$ is the projector onto the generalized eigenstates of $b$ with eigenvalues within $A$. Note that, for any set of disjoint intervals $\{A_j\}\subset [E_0(b),-E_0(-b)]$ with $\cup_jA_j=[E_0(b),-E_0(-b)]$ it holds that
\begin{equation}
\psi(\bullet)= \psi(\sum_j\Pi(A_j)^2\bullet)=\psi(\sum_j\Pi(A_j)\bullet \Pi(A_j)),
\end{equation}
where the first equality follows from $\sum_j\Pi(A_j)=1$ and $\Pi(A_j)^2=\Pi(A_j)$, and the second one is a consequence of Eq.~\eqref{super_comm}, with $r(b)=\Pi(A_j)$, $s=\Pi(A_j)\bullet$.

For $\epsilon>0$ sufficiently small  consider the projectors $\Pi_G:=\Pi([E_0(b),E_0(b)+\epsilon])$, $\Pi_E:=\Pi([E_0(b)+2\epsilon,-E_0(-b)])$. Let $N\in \N$ and, for $j=0,\ldots,N-1$, define $\Pi_j:=\Pi([E_j,E_j+\delta)$, with $E_j:=E_0(h)+2\epsilon+ j\delta$ and $\delta:=\frac{-E_0(-b)-E_0(b)-2\epsilon}{N}$. Note that $\{\Pi_j\}_j$ are orthogonal projectors and  they add up to $\Pi_E$.

Setting $s=\Pi_E\tilde{s}\Pi_G$, in Eq.~\eqref{condi_state}, we have that
\begin{align}\label{eq:longcalc}
&\psi\left(sbs^*-\frac{1}{2}\{ss^*,b\}\right)\nonumber\\
&=\psi\left(\Pi_E \tilde{s}\Pi_G b \Pi_G\tilde{s}^*\Pi_E\right)-\psi\left(\Pi_E \tilde{s}\Pi_G\tilde{s}^*\Pi_Eb\right)\nonumber\\
&\leq (E_0(b)+\epsilon)\psi\left(\Pi_E \tilde{s}\Pi_G\tilde{s}^*\Pi_E\right)-\sum_j\psi\left(\Pi_j\Pi_E \tilde{s}\Pi_G\tilde{s}^*\Pi_Eb\Pi_j\right)\nonumber\\
&= (E_0(b)+\epsilon)\psi\left(\Pi_E \tilde{s}\Pi_G\tilde{s}^*\Pi_E\right)-\sum_jE_j\psi\left(\Pi_j\tilde{s}\Pi_G\tilde{s}^*\Pi_j\right)-\sum_j\psi\left(\Pi_j\tilde{s}\Pi_G\tilde{s}^*\Pi_E(b\Pi_j-E_j\Pi_j)\right)\nonumber\\
&\leq \sum_j(E_0(b)+\epsilon-E_j)\psi\left(\Pi_j \tilde{s} \Pi_G\tilde{s}^*\Pi_j\right)+\sum_j\psi\left(\Pi_j\right)\|\tilde{s}\|^2\delta\nonumber\\
&\leq -\epsilon\sum_j\psi\left(\Pi_j \tilde{s}\Pi_G\tilde{s}^*\Pi_j\right)+\|\tilde{s}\|^2\delta\nonumber\\
&= -\epsilon\psi\left(\Pi_E \tilde{s}\Pi_G \tilde{s}^*\Pi_E\right)+\|\tilde{s}\|^2\delta.
\end{align}
We can make $\delta$ as small as we wish while keeping $\epsilon$ the same. Since the first expression of Eq.~\eqref{eq:longcalc} above is non-negative, it follows that 
\begin{equation}
\psi\left(\Pi_E \tilde{s}\Pi_G \tilde{s}^*\Pi_E\right)=0.
\label{null_space}
\end{equation}
Now, $\psi$ is a normal state, and thus $\psi(\bullet)=\sum_k\bra{\psi_k}\bullet\ket{\psi_k}$, for some non-zero vectors $\ket{\psi_k}$. Eq.~\eqref{null_space} implies that 
\begin{equation}
\Pi_G\tilde{s}^*\Pi_E\ket{\psi_k}=0,
\end{equation}
for all $\tilde{s}\in \A(X)$. Since by assumption there are no invariant subspaces, this implies that $\Pi_E\ket{\psi_k}=0$ for all $k$. Hence, $\psi(\Pi_E)=0$, which implies that
\begin{equation}
\psi(b)=\psi(\Pi_E b\Pi_E)+\psi((1-\Pi_E)b(1-\Pi_E))=\psi((1-\Pi_E)b(1-\Pi_E))\leq E_0(b)+2\epsilon.
\end{equation}
Now let $\epsilon\searrow0$ to deduce $\psi(b)\leq E_0(b)$. On the other hand, $\psi(b)\geq E_0(b)$, since $\psi$ is a  state. It follows that $\psi(b)=E_0(b)$, and so $\psi$ is a ground state of $b$.
\end{proof}

\begin{proof}[Proof of Theorem \ref{theo:suff_state_optimality}]
If Problem \eqref{nc_prob_extended} is feasible and Archimedean, then one can follow the proof of Theorem $1$ in \cite{PNA2010} to conclude that there exist $(\H^\star,\psi^\star,X^\star)$ such that:
\begin{enumerate}
\item $\psi^\star(\bullet)=\bra{\phi^\star}\bullet\ket{\phi^\star}$, for some unit vector $\ket{\phi^\star}\in\H^\star$.    
\item $X^\star$ satisfies $g_i(X^\star)\geq 0$, for $i=1,\ldots,m$ and $h_j(X^\star)=0$, for $j=1,\ldots,m'$.
\item $\psi^\star(f(X^\star))=\lim_{k\to \infty}p_G^k$.
\item For $l=1,\ldots,m''$, it holds that
\begin{subequations}\label{cosa_fin}
\begin{flalign}
\psi^\star\left([b_l(X^\star),s]\right)=0,&\quad \forall s\in \A(X^\star)\\
\psi^\star\left(sb_l(X^\star)s^*-\frac{1}{2}\{ss^*,b_l(X^\star)\}\right)\geq 0,&\quad\forall s\in \A(X^\star).
\end{flalign}
\end{subequations}
\end{enumerate}
Since the operators $X^\star$ satisfy the problem constraints, by assumption we have that there exists a unitary $U\in B(\H)$ such that
\begin{equation}
UX_i^\star U^*=\bigoplus_{j}X^\star_{i,j}\otimes \id_{\bar{\H}_j},    
\label{decomp}
\end{equation}
where $\{X^\star_{i,j}:i\}\subset B(\H_j)$ generate an irreducible $*$-algebra in $B(\H_j)$.

By assumption, for any $j$, there exists a projector $P_j\in W^*(X^\star)$ onto the $j^{th}$ block of \eqref{decomp}. Then, $\pi_j(\bullet)= P_j\bullet P_j$ is a homomorphism on $\A(X^*)$, and thus $g_i(\pi_j(X^\star))\geq 0$, for all $i$ and $h_l(\pi_j(X^\star))= 0$, for all $l$.

Let $W_l(X^\star)$ be the algebra generated by $\{X^\star_{i,l}\}_i$, and let $\psi^\star_j:W_j(X^\star)\to\C$ be defined as 
\begin{equation}
 \psi^\star_j(\bullet):=\bra{\phi_j^\star}(0\oplus\bullet \otimes\id_{\bar{H}_j}\oplus 0)\ket{\phi_j^\star},
\end{equation}
with $\ket{\phi^\star_j}=U^*P_j\ket{\phi^\star}$. Then, for $\psi^\star_j\not=0$, $\frac{1}{\psi^\star_j(1)}\psi^\star_j$ is a normal state. 
In addition,
\begin{equation}
\psi^\star(\bullet)=\sum_j\psi^\star(P_j\bullet P_j)=\sum_j\psi^\star_j(\pi_j(\bullet)).
\end{equation}
Now, for all $j$ such that $\psi^\star_j\not=0$, 
Eq.~\eqref{cosa_fin} is satisfied if we replace $\psi^\star$ by $\psi^\star_j$ and all operators $\bullet$ by $\pi_j(\bullet)$. Indeed, it suffices to take $s=P_jsP_j$ in Eq.~\eqref{cosa_fin}. By Lemma \ref{lemma:ground}, it thus follows that $\psi^\star_j$ is a ground state of $\pi_j(b_l(X^\star))$ for $l=1,\ldots,m''$.

Hence, $\psi^\star(f(X^\star))$ can be expressed as a convex combination of the objective values of the feasible points $(\H_j,\frac{1}{\psi^\star_j(1)}\psi^\star_j,\pi_j(X^\star))$ of Problem \eqref{nc_prob_extended} (for $j$ such that $\psi^\star_j\not=0$). Thus, $\lim_{k\to\infty}p_G^k\geq p_G^\star$. Since $p_G^k\leq p_G^\star$ for all $k$, we arrive at the statement of the theorem.
\end{proof}

As we will see in Section \ref{sec:many_body}, the ability to conduct optimizations over ground states has important applications in many-body quantum physics.

\subsection{Operator optimality conditions}
\label{sec:KKTcond}
Operator optimality conditions are the non-commutative analogs of the Karush--Kuhn--Tucker (KKT) conditions \cite{Nocedal2006}. Such non-commutative Karush--Kuhn--Tucker (ncKKT) conditions come in several variants of different strength. To formulate them, we must first introduce an analog of the gradient for non-commutative polynomials. Closely related notions have been studied in free function theory \cite{vinnikov} and non-commutative real algebraic geometry \cite{HKMsurvey}.
\begin{defin}
The gradient of a (non-commutative) polynomial $p(x)$ is a polynomial of the original non-commuting variables $x$ and their `variations' $y=(y_1,\ldots ,y_n)$, linear in $y$. This polynomial, denoted as $\nabla_x p(y)$, is obtained from $p(x)$ by evaluating $p(x+\epsilon y)$ and keeping only the terms linear in $\epsilon$. Informally:
\begin{equation}
\nabla_xp(y)%\Bigr|_{X=Z}
:=\lim_{\epsilon\to 0}\frac{p(x+\epsilon y)-p(x)}{\epsilon}.
\end{equation}
The evaluation of the $x$ variables in the expression $\nabla_x p(y)$, i.e., any replacement of the form $x\to z$ will be denoted as $\nabla_xp(y)\Big|_{x=z}$.
\end{defin}

Note that this definition of a gradient is the non-commutative analog of directional derivatives.

\begin{eg}
Let $p(x)=x_1^3$. Then, $\nabla_xp(y)=y_1x_1^2+x_1y_1x_1+x_1^2y_1$.
\end{eg}

Armed with this definition, we can state the strongest of the ncKKT conditions:
\begin{defin}[Strong ncKKT]
\label{def:strong_ncKKT}
We say that the NPO Problem \eqref{nc_prob_hilbert} satisfies the \emph{strong non-commutative Karush--Kuhn--Tucker conditions} if, for any solution $(\H^\star, X^\star,\psi^\star)$ of Problem \eqref{nc_prob_hilbert}, there exist positive linear functionals $\{\mu_i:{\A}(X^\star)\to\C\}_{i=1}^m$ and Hermitian linear functionals $\{\lambda_j:{\A}(X^\star)\to\C\}_{j=1}^{m'}$ such that
\begin{subequations}
\label{strong_conditions}
\begin{align}
&\mu_i(g_i(X^\star))=0,\quad i=1,\ldots ,m,\label{strong_comp_slackness}\\
&\psi^\star\left(\nabla_xf(p)\Bigr|_{x=X^\star}\right)-\sum_i\mu_i\left(\nabla_x g_i(p)\Bigr|_{x=X^\star}\right)-\sum_j\lambda_j\left(\nabla_x h_j(p)\Bigr|_{x=X^\star}\right)=0,\quad \forall p\in\A(X^\star)^{n}.\label{strong_operator_optimality}   
\end{align}
\end{subequations}

\end{defin}
Note that, if Problem \eqref{nc_prob_hilbert} does not have optimal solutions, then it satisfies strong ncKKT, see Remark \ref{remark_ex_bound_sol}. 

The problem with the above definition is that we do not know how to enforce conditions (\ref{strong_comp_slackness}) fully. Indeed, current NPO techniques do not allow optimizing over both $\star$-algebras $\A(X^\star)$ and several (more than one) independent linear functionals defined on $\A(X^\star)$, unless $\A(X^\star)$ is uniquely defined, modulo isometries, by the problem constraints. Replacing ${\A}(X^\star)$ by $\P$ in the definition above, we arrive at a much more convenient set of constraints, phrased in terms of the solutions of Problem \eqref{nc_prob}, rather than Problem \eqref{nc_prob_hilbert}.
\begin{defin}[Normed ncKKT]
\label{def:normed_ncKKT}
We say that the NPO Problem \eqref{nc_prob} satisfies the \emph{normed non-commutative Karush--Kuhn--Tucker  conditions} if, for any bounded solution $\sigma^\star$ of Problem \eqref{nc_prob}, it holds that 
\begin{subequations}\label{eq:normed_ncKKT}
\begin{align}
&\forall i,j,\; \exists\;\mbox{\rm Hermitian }\mu_i,\lambda^{\pm}_j:\P\to\C, \quad \mu_i(M(\mathbf{g})+I(\mathbf{h})),\;\lambda_j^{\pm}(M(\mathbf{g})+I(\mathbf{h}))\geq 0,\label{normed_operator_duality}\\
\mbox{\rm{}s.t. }&\mu_i(g_i)=0,\quad i=1,\ldots ,m,\label{normed_comp_slackness}\\
&\sigma^\star\left(\nabla_x f(p)\right) - \sum_i\mu_i\left(\nabla_x g_i(p)\right)-\sum_j\lambda_j\left(\nabla_x h_j(p)\right)=0,\quad \forall p\in\P^{n}, \label{normed_operator_optimality}\\
&\mbox{\rm with }\lambda_j:=\lambda^+_j-\lambda^-_j,\quad j=1,\ldots ,m'.
\end{align}
\end{subequations}
\end{defin}
Note that we demand each functional $\lambda_j$ to be the difference of two positive functionals $\lambda^{\pm}_j$. This is a relaxation of the property that the functional $\lambda_j:\A(X^\star)\to\C$ appearing in the definition of strong ncKKT is bounded (or normed). Indeed, the Jordan decomposition %\cite[Theorem 3.3.10]{MURPHY1990} 
\cite[Proposition III.2.1]{takesaki1}
shows that any Hermitian linear functional $\lambda:\A(X^\star)\to\C$ of bounded norm can be expressed as the difference between two positive linear functionals. 

If we are willing to drop the boundedness assumption on $\{\lambda_j\}_j$, then we can simply define the latter to be Hermitian functionals of $\P$. Without the variables $\lambda^{\pm}_j$, though, it is no longer possible to enforce that $\lambda_j$ is compatible with the inequality constraints $\{g_i(x)\geq 0\}_i$ 
in any meaningful way. We arrive at a weaker variant of the ncKKT conditions.
\begin{defin}[Weak ncKKT]
\label{def:weak_ncKKT}
We say that the NPO Problem \eqref{nc_prob} satisfies the \emph{weak non-commutative Karush--Kuhn--Tucker conditions} if, for any bounded solution $\sigma^\star$ of Problem \eqref{nc_prob}, it holds that
\begin{subequations}\label{operator_optimality}
\begin{align}
&\forall i,j,\; \exists\;\mbox{\rm Hermitian }\mu_i,\lambda_j:\P\to\C,\quad \mu_i(M(\mathbf{g})+I(\mathbf{h}))\geq0,\;\lambda_j(I(\mathbf{h}))= 0,\label{operator_duality}\\
\mbox{\rm{}s.t. }&\mu_i(g_i)=0,\quad i=1,\ldots ,m,\label{_comp_slackness}\\
&\sigma^\star\left(\nabla_x f(p)\right) - \sum_i\mu_i\left(\nabla_x g_i(p)\right)-\sum_j\lambda_j\left(\nabla_x h_j(p)\right)=0,\quad \forall p\in\P^{n}.
\end{align}
\end{subequations}
\end{defin}

If we were guaranteed that Problem \eqref{nc_prob} had bounded solutions and that the weak ncKKT conditions held, then we could add some further constraints to our SDP relaxation \eqref{k_relaxation}, namely:
\begin{subequations}
\label{KKT_SDP}
\begin{align}
\mathrm{For~each~}i=1,\ldots,m: &&& \nonumber\\
\exists\,\mbox{ Hermitian } \mu^k_i:\P_{2k}\to\C,\quad
& \mu^k_i(M(\mathbf{g})_{2k}+I(\mathbf{h})_{2k}) \geq 0, 
& \mu^k_i(g_i)=0, \label{KKT_SDP:mu} & \\ 
\mathrm{For~each~}j=1,\ldots, m': &&&\nonumber\\
\exists\,\mbox{ Hermitian }\lambda_j^k:\P_{2k}\to\C,\quad
&  \lambda^k_j(I(\mathbf{h})_{2k})=0,    
\end{align}
and
\begin{multline}
\sigma^k\!\left(\nabla_x f(p(x))\right)- \sum_i\mu^k_i\!\left(\nabla_x g_i(p(x))\right)+\sum_j\lambda^k_j\!\left(\nabla_x h_j(p(x))\right) = 0, \hspace{20em} \\
\forall p\in \P^{n},
 \; \deg \left(\nabla_x f\right),
 \; \deg \left(\nabla_x g_i\right),
 \; \deg \left(\nabla_x h_j\right)\leq 2k-\deg(p),\label{KKT_SDPlast} 
\end{multline}
\end{subequations}
where, for any polynomial $s$ with $\nabla_x s(p)=\sum_{i,k}s_{ik}^+p_k s_{ik}^-$, the expression $\deg(\nabla_x s)$ denotes $\max_{i,k}\deg(s_{ik}^+)+\deg(s_{ik}^-)$, and, for any $n$-tuple of polynomials $p$, $\deg(p)=\max_k\deg(p_k)$. Note that, in order to enforce the last condition \eqref{KKT_SDPlast}, it is enough to consider tuples of monomials $p$ with just one non-zero entry.

Similarly, to enforce normed ncKKT, one just needs to add to Eq.~\eqref{KKT_SDP}, for each $j = 1,\ldots m'$, the positive semidefinite constraints:
\begin{subequations}
\begin{align}
\exists \mbox{ Hermitian } \lambda^{\pm, k}_j:\P_{2k}\to\C,\hspace{1em}& \lambda^{\pm,k}_j(M(\mathbf{g})_{2k}+I(\mathbf{h})_{2k})\geq 0,\\
&\lambda^k_j=\lambda^{+,k}_j-\lambda^{-,k}_j. &&
\end{align}
\end{subequations}

As we will see, there are Archimedean problems that do not satisfy weak ncKKT conditions. This led us to introduce an even weaker variant of the ncKKT conditions, the so-called \core ncKKT conditions. In Theorem \ref{essential_theo} below we show that these conditions are satisfied by all NPO problems.

\begin{defin}[\Core ncKKT]
\label{def:core}
An NPO Problem \eqref{nc_prob} satisfies the \emph{\core ncKKT conditions} if, for any 
bounded minimizer $\sigma^\star$ of Problem \eqref{nc_prob} and for all $k\in\N$, $\{\epsilon_i\}_i \subset \R^+$ there exist Hermitian linear functionals $\mu_i^k:\P_{2k}\to\C$, $i=1,\ldots ,m$, $\lambda_j^k:\P_{2k}\to\C$, $j=1,\ldots ,m'$, such that 
{\begin{subequations}
\label{essential_KKT}
\begin{align}
\text{For each }i=1,\ldots,m:\hspace{-3em}&&&\nonumber\\
&\mu^k_i(pp^*)+\epsilon_i\|p\|^2_2\geq 0, \quad && \forall p\in\P,\; \deg (p)\leq k, \\ 
&\mu^k_i(pg_lp^*)+\epsilon_i\|p\|^2_2\geq 0, \quad && \forall p\in\P,\; \deg (p)\leq k-\left\lceil\frac{\deg(g_l)}{2}\right\rceil,\; l=1,\ldots ,m,\\ 
&\mu^k_i(I(\mathbf{h})_{2k})=0, \\ 
&\mu^k_i(sg_i)=\mu^k_i(g_is)=0, \quad && \forall s\in\P,\;\deg(s)\leq 2k-\deg(g_i), 
\label{essential_KKT:mu}
\\ 
\text{For each }j=1,\ldots,m':\hspace{-3em}&&\nonumber\\
&\lambda_j^k(I(\mathbf{h})_{2k})=0,
\end{align}
and
\begin{multline}
\sigma^\star\left(\nabla_x f(p)\right)- \sum_i\mu^k_i\left(\nabla_x g_i(p)\right)-\sum_j\lambda^k_j\left(\nabla_x h_j(p)\right)=0,\quad \\
\forall p\in \P^{n}, \deg \left(\nabla_x f\right),\deg \left(\nabla_x g_i\right),\deg \left(\nabla_x h_j\right)\leq 2k-\deg(p), % 
\label{essential_operator_optimality}
\end{multline}
\end{subequations}}
where $\|p\|_2$ denotes the $2$-norm of the vector of coefficients of $p$.
\end{defin}

\begin{remark}
\Core ncKKT is a relaxation of weak ncKKT, whereby we relax each positive semidefiniteness condition in the SDP \eqref{KKT_SDP}, for each $k$. Indeed, enforcing the first two lines of Eq.~\eqref{essential_KKT} amounts to demanding that the $k^{th}$-order moment and localizing matrices (see Eq.~\eqref{def_mom_loc_matrices} for a definition) $M^k(\mu_i), M^k_l(\mu_i)$, rather than being positive semidefinite, satisfy 
\begin{equation}
M^k(\mu_i)+\epsilon_i\id\geq 0,\; M^k_l(\mu_i)+\epsilon_i\id\geq 0,\quad l=1,\ldots ,m.   
\end{equation}
The \core ncKKT conditions thus translate to non-trivial positive semidefinite constraints, which one can use to boost the speed of convergence of the SDP relaxation \eqref{k_relaxation}. 

\end{remark}

\begin{remark}
Rather than demanding $\mu_i(g_i)=0$ (as we did in Eq.~\eqref{KKT_SDP:mu}), in Eq.~\eqref{essential_KKT:mu} we demand the stronger condition $\mu_i(sg_i)=\mu_i(g_is)=0$ for all $s\in\P$.
The explanation is as follows: if $\epsilon_i$ were to vanish, then this condition would follow from $\mu_i(g_i)=0$ and the positivity of $\mu_i$ and $g_i$. For $\epsilon_i \in \mathbb R^+$, this is no longer the case, so we impose the condition by hand.    
\end{remark}

\begin{prop}
Call $\mbox{\textbf{\Core}},\ \mbox{\textbf{Weak}},\ \mbox{\textbf{Normed}},\ \mbox{\textbf{Strong}}$ the set of NPO problems respectively satisfying the \core, weak, normed and strong ncKKT conditions. Then we have that 
\begin{equation}
\mbox{\textbf{\Core}}\supsetneq \mbox{\textbf{Weak}}\supsetneq \mbox{\textbf{Normed}}\supseteq \mbox{\textbf{Strong}}.    
\end{equation}
The above relations also hold if we restrict to Archimedean NPO problems.
\end{prop}
\begin{proof}
$\mbox{\textbf{\Core}}$ is a relaxation of $\mbox{\textbf{Weak}}$, which is a relaxation of $\mbox{\textbf{Normed}}$, which is a relaxation of $\mbox{\textbf{Strong}}$. Thus, 
\begin{equation}
\mbox{\textbf{\Core}}\supseteq \mbox{\textbf{Weak}}\supseteq \mbox{\textbf{Normed}}\supseteq \mbox{\textbf{Strong}}.
\label{inclusions_proof}
\end{equation}
To prove the proposition, we need to show that there exists an Archimedean problem that satisfies \core ncKKT, but not weak ncKKT, and another one that satisfies weak ncKKT, but not normed ncKKT.

Consider thus the following instance of Problem \eqref{nc_prob}:
\begin{equation}
\begin{aligned}
\min\ & \sigma(x)\\
\text{s.t. } & \sigma(1)=1,\,\sigma(M(-x^2))\geq 0.
\end{aligned}
\label{counterex_KKT}
\end{equation}
By Remark \ref{rem:arch}, this problem is clearly Archimedean. The solution is, obviously, $\sigma^\star(x^n)=0$, for $n\geq 1$. The weak ncKKT conditions would demand the existence of a positive linear functional $\mu$, with $\mu(M(-x^2))\geq 0$, and such that
\begin{equation}
-\mu(\{x,p\})=\sigma^\star(p),\quad \forall p\in\P.
\label{rel_opt}
\end{equation}
Taking $p=x^n$, this implies that
\begin{equation}
\mu(x)=-\frac{1}{2},\; \mu(x^n)=0,\,\forall n\geq 2.
\end{equation}
The value of $\mu(1)$ is not fixed by relation \eqref{rel_opt}. This means that, for every order $k\geq 2$, the moment matrix $M_k(\mu)$ is of the form
\begin{equation}
M_k(\mu)=\left(\begin{array}{cccc}\mu(1)&-\frac{1}{2}&0&\cdots\\-\frac{1}{2}&0&0&\cdots\\0&0&0&\cdots\\\vdots&\vdots&\vdots&\ddots\end{array}\right)
\label{matrix22_counter}
\end{equation}
This matrix cannot be made positive semidefinite, no matter the choice of $\mu(1)$. Thus, Problem \eqref{counterex_KKT} does not satisfy weak ncKKT. Note, however, that, for any $\epsilon>0$, $M_k(\mu)+\epsilon\id$ is positive semidefinite if one takes $\mu(1)=\frac{1}{\epsilon}$. In addition, $-\mu(p^*x^2p)=0$, for all $p\in\P$, and so the localizing matrix associated to the constraint $-x^2\geq 0$ is positive semidefinite. Thus, Problem \eqref{counterex_KKT} satisfies \core ncKKT and the first inclusion relation of \eqref{inclusions_proof} is strict.

Next, consider the NPO problem
\begin{equation}
\begin{aligned}
\min\ & \sigma(x) \\
\text{s.t. } &\sigma(1)=1,\, %\sigma\geq 0, 
\sigma(I(x^2))=0.
\label{example_prob_nc}
\end{aligned}
\end{equation}
This problem is also Archimedean. Its only solution is $\sigma^\star(x^n)=0$, for all $n\geq 1$, which is bounded. Weak ncKKT demands that there exists a Hermitian $\lambda:\P\to\C$ such that
\begin{equation}
\label{the_thing}
\begin{split}
\lambda(\{x,p\})&=\sigma^\star(p), \quad \forall p\in\P,   \\
\lambda(I(x^2))&=0.    
\end{split}
\end{equation}
The above is equivalent to
\begin{equation}\label{unbounded}
\begin{split}
\lambda(x)&=\frac{1}{2}, \\
\lambda(x^n)&=0,\quad \forall n\geq 2.
\end{split}
\end{equation}
%These relations do not lead to any contradiction, 
These relations clearly give rise to a Hermitian linear functional $\lambda:\P\to\C$ satisfying \eqref{the_thing}
so Problem \eqref{example_prob_nc} satisfies weak ncKKT. However, Problem \eqref{example_prob_nc} does not satisfy normed ncKKT. Indeed, the requirement that the $2\times 2$ matrices $M^{\pm}_{kl}:=\lambda^{\pm}(p_kp_l^*)$, with $p_1=1,p_2=x$, be positive semidefinite implies that the off-diagonal elements $\lambda^{\pm}(x)$ vanish, and so $\lambda(x)=0$.
\end{proof}

In Section \ref{sec:opop} we derive constraint qualification criteria which, similarly to the classical case, ensure that each of the ncKKT conditions holds for a given NPO problem. These are the main conclusions:
\begin{enumerate}
    \item All NPO problems satisfy the \core ncKKT conditions (Theorem \ref{essential_theo}).
    \item 
    Under Archimedeanity and the (verifiable) algebraic condition \eqref{MF}, all the SDP relaxations derived from weak ncKKT hold (Theorem \ref{theo_null_epsilon}). That is, for any $k\in\N$, it is sound to add the SDP constraints \eqref{KKT_SDP} to the SDP relaxation \eqref{k_relaxation}.

    \item Normed ncKKT holds for Archimedean problems if the operator constraints of the NPO problem satisfy a non-commutative, algebraic version of Mangasarian--Fromovitz constraint qualification \cite{Nocedal2006} (Theorem \ref{theo_MF}).
    \item NPO problems admitting an exact sum-of-squares resolution satisfy strong ncKKT (Theorem \ref{thm:k->strong}).
    
\end{enumerate}

As it turns out, some interesting NPO problems in quantum information theory seem to satisfy neither normed nor strong ncKKT. Such NPO problems have a very peculiar structure, namely, the set of variables is partitioned in such a way that: (a) all variables within a partition commute with all the variables within any other partition; (b) there are no further operator equalities or inequalities relating variables from different partitions. In view of this, in Section \ref{sec:partialop} we further define variants of strong and normed ncKKT where Eqs.~\eqref{strong_operator_optimality} or \eqref{normed_operator_optimality} only apply to polynomials of non-commuting variables within a partition. We identify sufficient (and very mild) criteria for those `partial' operator optimality conditions to hold (Theorems \ref{theo_partial_MF} and \ref{theo_partial_conv}). In Section \ref{sec:bell}, we show that the strongest of them, partial strong ncKKT, is satisfied by the NPO formulation of the quantum nonlocality problem.

\subsection{A numerical example}
\label{sec:numerical}
We next apply the numerical tools presented so far to a problem adapted from Nie's \cite[Example 5.4]{Nie2013}. Let $x=(x_1, x_2, x_3)$, and, for $\delta, r_{\rm min}, r_{\rm max} \in \R^+$, with $r_{\rm min} < r_{\rm max}$, consider the following NPO problem:
\begin{equation}
\begin{aligned}
\min\ & \sigma\left(f(x)\right),\\
\text{s.t. } & r_{\rm min}  \leq x_1^2 + x_2^2 + x_3^2 \leq r_{\rm max}, \\
&\delta\pm i[x_k,x_l]\geq 0,\quad (k,l) = (1,2),\;(2,3),\;(3,1),
\label{test_problem}
\end{aligned}
\end{equation}
where the objective function
\begin{equation}
f(x):= \frac{1}{2} \{x_1^2, x_2^4\} + \frac{1}{2}\{x_1^4, x_2^2\} + x_3^6 
  - \frac{3}{2}\left(x_1^2 x_2^2 x_3^2 + x_3^2 x_2^2 x_1^2\right)
\end{equation}
is a symmetrized version of the Motzkin polynomial \cite{Motzkin}.

The constraint involving $r_{\rm max}$ ensures that this problem is Archimedean.
Hence, it satisfies both \core ncKKT (see Theorem \ref{essential_theo} below) and the state optimality conditions \eqref{state_optimality}.

Let us see what enforcing the \core ncKKT conditions entails. To begin with, we need to invoke the existence of eight Lagrange multipliers: $\mu^{\max},\mu^{\min}, \{\mu^{\pm}_{kl}\}_{kl}$, one for each inequality constraint. Each such multiplier is required to have moment and localizing matrices $\epsilon$-away from positive semidefinite matrices, see the first two lines of Eq.~\eqref{essential_KKT}.

Complementary slackness --Eq.~\eqref{essential_KKT:mu}-- translates to 
\begin{subequations}
\begin{align}
\mu^{\max}\left(\left(r_{\rm max}-\sum_kx_k^2\right)s\right)&=\mu^{\max}\left(s\left(r_{\rm max}-\sum_kx_k^2\right)\right)=0,\\
\mu^{\min}\left(\left(\sum_kx_k^2-r_{\rm min}\right)s\right)&=\mu^{\min}\left(s\left(\sum_kx_k^2-r_{\rm min}\right)\right)=0,\\
\mu^\pm_{k,k+1}\left(\left(\delta\pm i[x_k,x_{k+1}]\right)s\right)& =\mu^\pm_{k,k+1}\left(s\left(\delta\pm i[x_k,x_{k+1}]\right)\right)=0,\quad k=1,2,3,
\end{align}
\end{subequations}
for all monomials $s\in\P$. In the equation above, we respectively identify $\mu_{3,4}^\pm, x_4$ with $\mu_{3,1}^\pm, x_1$.

Finally, Eq.~\eqref{essential_operator_optimality} reduces to:
\begin{equation}
\begin{split}
\sigma^\star\left(\nabla_xf(p)\right)&=-\mu^{\max}\left(\sum_k\{x_k,p_k\}\right)+\mu^{\min}\left(\sum_k\{x_k,p_k\}\right)\\
&\phantom{{}={}}
+i\sum_{k}\left\{\mu_{kl}^{+}\left([x_k,p_{k+1}]+[p_k,x_{k+1}]\right)-\mu_{kl}^{-}\left([x_k,p_{k+1}]+[p_k,x_{k+1}]\right)\right\},
\end{split}
\end{equation}
for any tuple of polynomials $p=(p_1,p_2,p_3)$ (again, $x_4,p_4$ are to be understood as $x_1, p_1$).

Setting $\delta=0.01$, $r_{\rm min} =1$, and $r_{\rm max}=7.5$, we ran the SDP relaxation \eqref{k_relaxation} for $k=4$, with and without \core ncKKT and state optimality conditions. Table~\ref{table:SimpleExample} shows the resulting lower bounds on the solution of Problem \eqref{test_problem}\footnote{
A MATLAB implementation of this is available at \url{https://github.com/ajpgarner/nckkt-nie-motzkin}.%{https://github.com/ajpgarner/nckkt-nie-motzkin}.
}.

 \begin{table}[h!]
 \centering
 \begin{tabular}{ r| c| c }
 &~without ncKKT~&~\core ncKKT~\\\hline 
 without state optimality & $-1.9774$ & $-1.8179$   \\  
 with state optimality & $-1.8522$ & $-1.4266$
 \end{tabular}
 \caption{Lower bounds on the solution of Problem \eqref{test_problem}, computed at level $k=4$ of the hierarchy of SDP relaxations \eqref{k_relaxation}, under different optimality constraints. \Core ncKKT was enforced with $\epsilon_i=0.0001$ for all $i$.}
 \label{table:SimpleExample}
 \end{table}

This example illustrates that even the very general \core ncKKT conditions can greatly boost the speed of convergence of the hierarchy of SDPs \eqref{k_relaxation}, especially when combined with the state optimality conditions \eqref{state_optimality}.

\section{Non-commutative constraint qualification}
\label{sec:opop}
In this section, we provide sufficient criteria for the different variants of the ncKKT conditions to hold. The roadmap from \core ncKKT to strong ncKKT is as follows:

\begin{itemize}
    \item 
In section \ref{sec:essential}, we prove that all NPO problems satisfy the \core ncKKT conditions. Moreover, we will further identify, in subsection \ref{sec:epsilon_zero}, `boundedness conditions' that allow replacing the variables $\mu_i^k$ in Definition \ref{def:core} by bounded, positive linear functionals $\{\mu_i\}_i$ independent of $k$ and satisfying (\ref{operator_duality}) and (\ref{_comp_slackness}). This is not the same as having weak ncKKT conditions: those would in addition demand the existence of $k$-independent linear functionals $\{\lambda_j\}_j$. NPO problems satisfying the boundedness condition admit, however, the same SDP relaxations as problems satisfying weak ncKKT. 

\item
To prove the existence of $k$-independent $\{\lambda_j\}_j$, we introduce in section \ref{sec:linear_indep} the notion of linear independence of non-commutative gradients. If both the boundedness condition and gradient linear independence hold, then one can prove that there exist bounded, $k$-independent linear functionals $\{\mu_j\}_i$, $\{\lambda_j\}_j$ complying with the conditions of normed ncKKT, see section \ref{sec:MF}. 

\item 
We were not able to carry this argument any further to find even more stringent conditions that would guarantee strong ncKKT. Thus, we switched to a completely different route: in section \ref{sec:strong}, we prove that NPO problems admitting an exact SOS resolution satisfy strong ncKKT.
\end{itemize}

\subsection{\Core KKT conditions}
\label{sec:essential}
In this section we will prove the universal validity of the \core ncKKT conditions.
\begin{theo}
\label{essential_theo}
Any \emph{bounded} solution $\sigma^\star$ of the NPO problem \eqref{nc_prob} satisfies Eq.~\eqref{essential_KKT}. That is, all NPO problems satisfy the \core ncKKT conditions.
\end{theo}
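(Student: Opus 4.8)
The plan is to derive the \core ncKKT conditions from the optimality of $\sigma^\star$ by a compactness/duality argument at each finite level $k$. First I would fix $k\in\N$, $\{\epsilon_i\}_i\subset\R^+$, and set up the finite-dimensional picture: by the GNS construction (Archimedean hypothesis), $\sigma^\star$ comes from a bounded solution $(\H^\star,X^\star,\psi^\star)$ of Problem \eqref{nc_prob_hilbert}, so that $p^\star=\psi^\star(f(X^\star))$ and $\psi^\star$ is a normal state with cyclic vector $\ket{\phi^\star}$, which is a bottom-of-spectrum eigenvector of $f(X^\star)$. The key reformulation is that the validity of Eq.~\eqref{essential_KKT} for this $k$ is itself a feasibility question for a (finite-dimensional) semidefinite system in the unknowns $\{\mu_i^k\}$, $\{\lambda_j^k\}$: the constraints of the first block describe a closed convex cone of functionals on $\P_{2k}$ (moment and localizing matrices $\epsilon_i$-away from PSD, complementary-slackness equalities, $h_j$-compatibility), and Eq.~\eqref{essential_operator_optimality} is an affine condition tying the $\mu_i^k,\lambda_j^k$ to the fixed data $\sigma^\star\!\left(\nabla_x f(t)\right)$.

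The core of the argument is a Hahn--Banach / SDP-duality separation: if no such multipliers existed, there would be a separating "direction" — concretely, a tuple of polynomials $t\in\P^{\times n}$ together with dual certificates for the cone constraints — witnessing that the affine functional $t\mapsto\sigma^\star\!\left(\nabla_x f(t)\right)$ is \emph{not} in the (closed) sum of the images of the maps $t\mapsto\nabla_x g_i(t)$ and $t\mapsto\nabla_x h_j(t)$ paired against the admissible cones. I would then show this separating object contradicts the optimality of $\sigma^\star$. The mechanism: from such a $t$ one builds an infinitesimal perturbation $X^\star+\epsilon\, t(X^\star)$ of the operators (living in $\A(X^\star)$, hence legitimate) which, to first order in $\epsilon$, strictly decreases $\psi^\star(f(X^\star))$ while not violating any active constraint $g_i\geq 0$ or $h_j=0$ more than $O(\epsilon^2)$ — after a further $O(\epsilon^2)$ correction restoring feasibility exactly, this yields a feasible point of Problem \eqref{nc_prob_hilbert} with value $<p^\star$, a contradiction. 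The role of the $\epsilon_i$-relaxation in the moment/localizing matrices is precisely to keep the dual cones "fat" enough that this separation argument produces honest finite-dimensional certificates rather than merely approximate ones, and to absorb the $O(\epsilon^2)$ feasibility slack; the role of passing to $\P_{2k}$ (rather than $\A(X^\star)$) is that the feasibility system, and hence the separation, is genuinely finite-dimensional, so the relevant cones are closed and strong duality holds without regularity assumptions.

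Concretely the steps are: (1) GNS reduction, fixing $(\H^\star,X^\star,\psi^\star)$ and the eigenvector $\ket{\phi^\star}$; (2) write the level-$k$ ncKKT system as an SDP feasibility problem and identify its Lagrangian dual; (3) assume infeasibility and extract, via SDP strong duality (the relaxation by $\epsilon_i$ guaranteeing a strictly feasible point, hence no duality gap and attainment), a finite certificate which includes a "descent tuple" $t$; (4) translate $t$ into the operator perturbation $X^\star+\epsilon t(X^\star) + \epsilon^2 (\text{correction})$, check that complementary slackness and $h_j$-compatibility of the dual variables make all constraint violations second order and repairable, and that the first-order change of the objective is strictly negative; (5) conclude $p^\star$ was not optimal — contradiction. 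Finally, the "in particular" is immediate since any Archimedean problem has a bounded optimizer by GNS.

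I expect step (3)--(4), the passage from an abstract separating functional back to a genuine feasible \emph{operator} perturbation that simultaneously handles \emph{all} active inequality constraints to second order, to be the main obstacle: one must check that the dual certificate really encodes data of the form $\nabla_x g_i(t)$, $\nabla_x h_j(t)$ (not some exotic functional), and that complementary slackness $\mu_i^k(s g_i)=\mu_i^k(g_i s)=0$ — which is why it is imposed by hand rather than derived — is exactly what makes the perturbed $g_i(X^\star+\epsilon t)$ nonnegative to the needed order on the relevant subspace. Managing these degree bookkeeping and the $\epsilon$ versus $\epsilon_i$ interplay cleanly is the technical heart of the proof.
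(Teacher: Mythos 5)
Your proposal follows the same architecture as the paper's proof: reformulate the level-$k$ \core ncKKT system as an SDP, invoke strong duality with the $\epsilon_i$-relaxation supplying a strictly feasible (Slater) point, and rule out a dual obstruction by translating it into a feasible operator perturbation of $(\H^\star,X^\star,\psi^\star)$ with smaller objective. The paper phrases this not as a reductio but as a primal--dual pair $\mathbb{P}$ (minimize $\epsilon$ over admissible multipliers) and $\mathbb{P}^*$ (maximize $-\sigma^\star(\nabla_xf(q))$ over certain $q,s,Z$), and proves $\mathbb{P}^*=0$; this is the same argument.

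What your plan leaves unresolved is exactly the step you flag as the main obstacle, and your proposed mechanism for it is wrong. ``Perturb by $\epsilon\,t(X^\star)$ and repair feasibility with an $O(\epsilon^2)$ correction'' is not a valid move in general -- that is precisely the kind of step that fails in degenerate problems such as Eq.~\eqref{example_prob_c}. The paper instead proves a dedicated result, Lemma~\ref{lemma_ODE}: if $q$ is an $n$-tuple with $\nabla_xh_j(q)\in\PP$ for all $j$, and $\nabla_xg_i(q)+s_ig_i+g_is_i^*$ is weighted-SOS, then the ODE $\dot X=q(X)$, $X(0)=X^\star$ yields an \emph{exactly} feasible analytic trajectory: the equalities $h_j(X(t))\equiv 0$ propagate from $t=0$ by an explicit recursion, and the inequalities are preserved by conjugation with an auxiliary cocycle $\omega_i(t)$ -- there is no order-by-order repair. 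The reason the abstract dual certificate ``encodes data of the form $\nabla_xg_i(t),\nabla_xh_j(t)$'' (your concern) is that the dual $\mathbb{P}^*$ is written so that its feasible variables are literally a descent tuple $q$ together with the $s_i$ and PSD matrices $Z$ that guarantee the lemma's hypotheses; a generic Farkas-type separating functional would not come with this structure. The same lemma is also used, in the paper, to verify that $\mathbb{P}$ is feasible at all (a prerequisite for invoking Slater), something your sketch does not address. So the missing idea is Lemma~\ref{lemma_ODE} together with the observation that the dual SDP's feasible set is engineered to hand you its hypotheses.
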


\begin{remark}
\label{epsilon_zero}
Setting $\epsilon_i=0$ for all $i$, \core ncKKT reduces to the SDP constraints conditions \eqref{KKT_SDP} implied by weak ncKKT on the SDP relaxation \eqref{k_relaxation}. The reason why Theorem \ref{essential_theo} demands non-zero $\epsilon_i$ is to prevent that $\mu^k_i(p)=\infty$ for some polynomials $p\in\P_{2k}$. See, e.g., the NPO problem \eqref{counterex_KKT}: the requirement that the moment and localizing matrices of $\mu^k$ be $\epsilon$-close to positive implies that $\mu(1)\geq O\left(\frac{1}{\epsilon}\right)$.
\end{remark}

To prove Theorem \ref{essential_theo}, we will need the following technical result.
\begin{lemma}
\label{lemma_ODE}
Let the $n$-tuple of symmetric polynomials $q(x)$ satisfy 
\begin{equation}
 \nabla_xh_j(q(x))\in I(\mathbf{h}),\quad j=1,\ldots ,m',
 \label{cond_zeros}
\end{equation}
and let $X^\star\in B(\H^\star)^{n}$ be a tuple of Hermitian operators acting on the Hilbert space $\H^\star$, such that 
\begin{equation}
h_j(X^\star)=0,\quad j=1,\ldots ,m'.
\label{cond_ann}
\end{equation}
Then, there exists $\epsilon>0$ and an analytic trajectory $\{X(t):t\in [-\epsilon,\epsilon]\}\subset B(\H^\star)^{n}$ of Hermitian operators such that 
\begin{subequations}\label{boundary}
\begin{flalign}
h_j(X(t))&=0,\quad j=1,\ldots ,m',\; t\in[-\epsilon,\epsilon],\label{boundary:1}\\ 
X(0)&=X^\star,\label{boundary:2}\\ 
\frac{dX(t)}{dt}\Bigr|_{t=0}&=q(X^\star).\label{boundary:3}
\end{flalign}
\end{subequations}
If, in addition, $g_i(X^\star)\geq0$ for all $i$, and, for some $\{s_i\in\P\}_i$, it holds that
\begin{equation}
s_i(x)g_i(x)+g_i(x)s_i(x)^*+\nabla_xg_i(q(x))
\in M(\g)+I(\h)
,\quad i=1,\ldots ,m,
\label{SOS_boundary}
\end{equation}
%are SOS, 
then the trajectory $X(t)$ also satisfies:
\begin{equation}
g_i(X(t))\geq 0,\quad i=1,\ldots ,m,
\label{positivity_diff}
\end{equation}
for $t\in[0,\epsilon]$.

\end{lemma}

\begin{proof}
Consider the system of ordinary differential equations (ODEs)
\beq
\begin{aligned}
\frac{dX(t)}{dt}&=q(X),\\ 
X(0)&=X^\star.
\end{aligned}
\eeq
Since $X_1^\star,\ldots ,X^\star_n$ are bounded, we can apply the Cauchy-Kovalevskaya theorem (in Banach spaces, see, e.g., \cite{rosenbloom61} or \cite{SW76}\footnote{The basic idea of the proof of the Cauchy-Kovalevskaya theorem is fairly simple. Differentiate the ODE to compute a formal power series for $X(t)$ about $0$. The method of majorants was developed by Cauchy and Kovalevskaya to prove that this power series has a positive radius of convergence and thus determines an analytic function that solves the ODE.\label{foot:1}}) and conclude that there exists a ball in the complex plane of radius $\epsilon>0$ and with center at $0$ where the solution of this differential equation is analytic. In particular, for $t\in[-\epsilon,\epsilon]$, $X(t)$ exists and, from the equation above, it satisfies the boundary conditions \eqref{boundary:2} and \eqref{boundary:3}.

Being polynomials of a tuple of analytic operators, $\{h_j(X(t))\}_j$ are also analytic in the region $\{t\in\C:|t|\leq\epsilon\}$. {Therefore, for $t\in[-\epsilon,\epsilon]$,
\begin{equation}
h_j(X(t))=\sum_{k=0}^\infty\frac{t^k}{k!}\frac{d^kh_j(X(\tau))}{d\tau^k}\Bigr|_{\tau=0}.
\label{taylor_exp}
\end{equation}}

{Due to relation \eqref{cond_zeros}, we have that
\begin{equation}\label{induct_hypo}
\begin{aligned}
\frac{dh_j(X(t))}{dt}&=\nabla_xh_j\left(\frac{dX(t)}{dt}\right)%\Bigr|_{X=X(t)}
=\nabla_xh_j(q(X(t)))\\ %\Bigr|_{X=X(t)}\nonumber\\
&=\sum_{j',l}r^+_{jj'l}(X(t))\,h_{j'}(X(t))\,r^-_{jj'l}(X(t)).
\end{aligned}
\end{equation}
By induction, it therefore holds that 
\begin{equation}
\frac{d^kh_j(X(t))}{dt^k}=\sum_{j',l}p^+_{jj'kl}(X(t))\,h_{j'}(X(t))\,p^-_{jj'kl}(X(t)),
\end{equation}
for some polynomials $\{p^{\pm}_{jj'kl}:j,j',k,l\}\subset\P$. Since $h_{j'}(X(0))=0$ for all $j'$, we have that
\begin{equation}
\frac{d^kh_j(X(\tau))}{d\tau^k}\Bigr|_{\tau=0}=0,\quad\forall k.
\end{equation}
Consequently, by Eq.~\eqref{taylor_exp}, $h_j(X(t))=0$, for $t\in[-\epsilon,\epsilon]$.}

{Let us now assume that Eq.~\eqref{positivity_diff} holds. Define the function $G_i(t):=\omega_i(t)g_i(X(t))\omega_i(t)^*$, where $\omega_i(t)$ is the solution of the differential equation:
\begin{equation}
\frac{d\omega_i(t)}{dt}=\omega_i(t)s_i(X(t)),\quad \omega_i(0)=\id,
\end{equation}
which, by the Cauchy-Kovalevskaya theorem, is analytic for short times $t$. For short times, $\omega(t)$ is also close to the identity, and thus invertible. Let $\epsilon'>0$ be such that $\epsilon\geq \epsilon'$ and $\{\omega_i(t)\}_i$ are analytic and invertible for $t\in [0,\epsilon']$.}

Being the product of three analytic functions, the function $G_i(t)$ is also analytic in $t\in[0,\epsilon']$. Taking differentials, we find
\begin{align}\label{eq:dGi}
&\frac{dG_i(t)}{dt}=\omega_i(t)(s_i(X(t))g_i(X(t))+g_i(X(t))s^*_i(X(t))+\nabla_xg_i(q(X(t)))\omega_i(t)^*\nonumber\\
&\overset{\eqref{SOS_boundary}}{=}\omega_i(t)\left(\sum_ls_{il}^*(X(t))s_{il}(X(t))+\sum_{i',l}s_{ii'l}^*(X(t)) g_{i'}(X(t))s_{ii'l}(X(t))\right.\nonumber\\
&\phantom{{}={}}\left.\; +\sum_{j,l}s^+_{ijl}(X(t)) h_j(X(t))s^-_{ijl}(X(t))\right)\omega_i(t)^*\nonumber\\
&\overset{\eqref{boundary:1}}{=}\omega_i(t)\left(\sum_ls_{il}^*(X(t))s_{il}(X(t))+\sum_{i',l}s_{ii'l}^*(X(t))\omega_{i'}(t)^{-1} G_{i'}(t)(\omega_{i'}(t)^{-1})^*s_{ii'l}(X(t))\right)\omega_i(t)^*.
\end{align}

{Since all the operators are bounded, this equation might be solved through any standard numerical method, e.g.: Euler's explicit method. Take $\Delta>0$ and consider the following time discretization:
\begin{equation}
\begin{split}
&G_i(k+1;\Delta)=G_i(k;\Delta)+\Delta\omega_i(\Delta k)\left(\sum_ls_{il}^*(X(\Delta k))s_{il}(X(\Delta k))\right)\omega_i(\Delta k)^*\\
& +\Delta\omega_i(\Delta k)\left(\sum_{i'l}s_{ii'l}^*(X(\Delta k))\omega_{i'}(\Delta k)^{-1} G_{i'}(k;\Delta))(\omega_{i'}(\Delta k)^{-1})^*s_{ii'l}(X(\Delta k))\right)\omega_i(\Delta k)^*.
\end{split}
\end{equation}
Clearly, starting from positive semidefinite operators
\begin{equation}
G_i(0;\Delta)=g_i(X^\star)\geq 0, \quad i=1,\ldots ,m,
\end{equation}
it holds that $G_i(k;\Delta)\geq 0$, for all $k$. }

{Suppose that the Euler explicit method converged: namely, for any $t\in [0,\epsilon']$, $\lim_{k\to \infty}G_i(k;\frac{t}{k})=G_i(t)$, for all $i$. Then, the observation above would imply that $\omega_i(t)g_i(X(t))\omega_i(t)^*\geq 0$, for $t\in[0,\epsilon']$ and for all $i$. Since $\omega_i(t)$ is invertible, this would imply that $g_i(X(t))\geq 0$ and thus Eq.~\eqref{positivity_diff} would hold.}

{To finish the proof, it thus suffices to show that the Euler method converges. This is not difficult: we just need to follow the proof for ordinary differential equations. First, for fixed $\Delta$, define $t_k:=k\Delta$ and the error
\begin{equation}
\delta_k:=\max_i\|G_i(t_k)-G_i(k;\Delta)\|.
\end{equation}
Also, rewrite the right-hand side of Eq.~\eqref{eq:dGi} as
\begin{equation}
\frac{dG(t)}{dt}=\rho(t,G(t)).
\label{diff_gen}
\end{equation}
Note that $\rho(t,G(t))$ is affine linear  in $G(t)$. This implies that it is Lipschitz, i.e., for all $t\in [0,\epsilon']$, it holds that $\|\rho(t,G)-\rho(t,G')\|\leq K\|G-G'\|$, for some $K\in\R^+$.
Eq.~\eqref{diff_gen} implies that
\begin{equation}
G(t)=\int_{0}^td\tau \rho(\tau,G(\tau)). \footnote{Here and below we only use the Riemann integral in Banach spaces, where 
the basic properties are essentially the same as those in Euclidean spaces. For details and a deeper study in integration in Banach spaces we refer to one the standard works, e.g.,
\cite{dieudonne,cartan,ambrosetti}}   
\end{equation}
Integrating by parts, we have that
\begin{equation}
G(t_{j+1})=G(t_j)+\rho(t_j,G(t_j))\Delta+\int_{t_j}^{t_{j+1}}d\tau(t_{j+1}-\tau) \frac{d}{d\tau}\rho(\tau,G(\tau)).
\end{equation}
Subtracting on both sides by $G(j+1;\Delta)=G(j;\Delta)+\Delta \rho(t_j,G(j;\Delta))$, we arrive at
\begin{equation}
\begin{split}
\delta_{j+1} &\leq (1+ K\Delta)\delta_j+\left\|\int_{t_j}^{t_{j+1}}d\tau(t_{j+1}-\tau) \frac{d}{d\tau}\rho(\tau,G(\tau))\right\|\\
&\leq (1+ K\Delta)\delta_j+\left(\int_{t_j}^{t_{j+1}}(t_{j+1}-\tau)^2d\tau\right)^{1/2} \left(\int_{t_j}^{t_{j+1}}d\tau\left\|\frac{d}{d\tau}\rho(\tau,G(\tau))\right\|^2\right)^{1/2}\\
&\leq (1+ K\Delta)\delta_j +\gamma \Delta^{3/2},
\label{recursion_deltas}
\end{split}
\end{equation}
with
\begin{equation}
\gamma=\frac{1}{\sqrt{3}}\left(\int_{0}^{\epsilon'}d\tau\left\|\frac{d}{d\tau}\rho(\tau,G(\tau))\right\|^2\right)^{1/2}.
\end{equation}
It can be verified, by induction, that 
\begin{equation}\label{eq:above}
\delta_{j}\leq e^{(\gamma+K)t_j}\Delta^{1/2}.    
\end{equation}
Indeed, $\delta_0=0\leq \Delta^{1/2}$. Now, suppose that the equation \eqref{eq:above} above holds for $j$. Then, by Eq.~\eqref{recursion_deltas}, we have that
\begin{align}
\delta_{j+1}&\leq (1+K\Delta)e^{(\gamma+K)t_j}\Delta^{1/2}+\gamma\Delta^{3/2}\nonumber\\
&\leq(1+K\Delta)e^{(\gamma+K)t_j}\Delta^{1/2}+e^{(\gamma+K)t_j}\gamma\Delta^{3/2}\nonumber\\
&=(1+(K+\gamma)\Delta)e^{(\gamma+K)t_j}\Delta^{1/2}\\
&\leq e^{(K+\gamma)\Delta}e^{(\gamma+K)t_j}\Delta^{1/2}\nonumber\\
&=e^{(\gamma+K)t_{j+1}}\Delta^{1/2}\nonumber.
\end{align}
For any $j\in\{0,\ldots,\left\lfloor\frac{\epsilon'}{\Delta}\right\rfloor\}$, it thus follows that $\delta_j\leq e^{(\gamma+K)\epsilon'}\Delta^{1/2}$, which tends to zero in the limit $\Delta\to 0$. Thus, the Euler explicit method converges.}
\end{proof}

\begin{proof}[Proof of Theorem \ref{essential_theo}.]
It is enough to prove the statement for $\epsilon_i=\epsilon>0$, for all $i$. Let $\sigma^\star$ be a bounded minimizer of Problem \eqref{nc_prob}. For fixed $k\in\N$, consider the semidefinite program
\begin{subequations}\label{eq:SDPP}
\begin{align}
\mathbb{P}:=&\min_{\epsilon,\mu,\lambda} \epsilon\nonumber\\
\mbox{s.t. }&\epsilon\geq 0,\nonumber\\
&\mu^k_i(pp^*)+\epsilon\|p\|^2_2\geq 0,\quad \forall p\in\P,\; \deg (p)\leq k,\label{moment_proof}\\
&\mu^k_i(pg_lp^*)+\epsilon\|p\|^2_2\geq 0, \quad \forall p\in\P,\; \deg (p)\leq k-\left\lceil\frac{\deg(g_l)}{2}\right\rceil,\;i=1,\ldots ,m,\;l=1,\ldots ,m,\label{localizing_proof}\\ 
&\mu^k_i(I(\mathbf{h})_{2k})=0, i=1,\ldots ,m,\label{feas_proof}\\ 
&\mu^k_i(sg_i)=\mu^k_i(g_is)=0,\quad \forall s\in\P,\;\deg(s)\leq 2k-\deg(g_i)\quad i=1,\ldots ,m,\label{slackness_proof}\\ 
&\lambda_j^k(I(\mathbf{h})_{2k})=0,j=1,\ldots ,m',\label{feas_proof2}\\
&\sigma^\star\left(\nabla_x f(p(x))\right)-\sum_i\mu^k_i\left(\nabla_x g_i(p(x))\right)-\sum_j\lambda^k_j\left(\nabla_x h_j(p(x))\right)=0,\quad \forall p\in \P^{n}, \label{balance_proof}\\ 
&\hspace{1cm}\deg \left(\nabla_x f\right),\deg \left(\nabla_x g_i\right),\deg \left(\nabla_x h_j\right)\leq 2k-\deg(p),\label{degrees_SDP}
\end{align}
\end{subequations}
where the minimization takes place over Hermitian functionals $\{\mu_i^k\}_i,\{\lambda_j^k\}_j$.
We next prove that this problem has feasible points. 

{First, let us focus on Eqs.~\eqref{feas_proof}--\eqref{balance_proof}. Let $\M$ be the set of polynomials of the form $r+r^*$ and $i(r-r^*)$, where $r\in\langle x\rangle$. Note that all the elements of $\M$ are Hermitian. Moreover, $\M_{k}$, the set of elements of $\M$ with degree smaller than or equal to $k$, satisfies $\mbox{span}(\M_k)=\P_k$. With this notation, satisfying Eqs.~\eqref{feas_proof}--\eqref{balance_proof} with Hermitian functionals can be seen equivalent to the feasibility problem:
\begin{align}
&\exists \{\mu_i^k(p):p\in \M_{2k}\}_i, \{\lambda_j^k(p):p\in \M_{2k}\}_j\subset \R,\nonumber\\
\mbox{s.t. }&\mu^k_i\left(r^+h_jr^-+(r^-)^*h_j(r^+)^*\right)=0,\;\forall r^+,r^-\in \langle x\rangle,\, \deg(r^-)+\deg(r^+)\leq \deg(h_j), \, \forall i,j\nonumber\\
&\mu^k_l\left(i(r^+h_jr^--(r^-)^*h_j(r^+)^*)\right)=0,\;\forall r^+,r^-\in \langle x\rangle,\, \deg(r^-)+\deg(r^+)\leq \deg(h_j),\, \forall j,l,\nonumber\\
&\mu^k_l(sg_l+g_ls^*)=0,\quad \forall s\in \langle x\rangle,\,\deg(s)\leq 2k-\deg(g_l),\,\forall l,\nonumber\\
&\mu^k_l(i(sg_l-g_ls^*))=0,\quad \forall s\in \langle x\rangle,\,\deg(s)\leq 2k-\deg(g_l),\,\forall l,\nonumber\\
&\lambda^k_l\left(r^+h_jr^-+(r^-)^*h_j(r^+)^*\right)=0,\;\forall r^+,r^-\in \langle x\rangle, \,\deg(r^-)+\deg(r^+)\leq \deg(h_j), \, \forall j,l,\nonumber\\
&\lambda^k_l\left(i(r^+h_jr^--(r^-)^*h_j(r^+)^*)\right)=0,\;\forall r^+,r^-\in \langle x\rangle,\, \deg(r^-)+\deg(r^+)\leq \deg(h_l),\, \forall j,l,\nonumber\\
&\sum_i\mu^k_i\left(\nabla_x g_i(r)\right)+\sum_j\lambda^k_j\left(\nabla_x h_j(r)\right)=\sigma^\star\left(\nabla_x f(r)\right),\quad \forall r\in \M^n, \nonumber\\ 
&\hspace{1cm}\deg \left(\nabla_x f\right),\deg \left(\nabla_x g_i\right),\deg \left(\nabla_x h_j\right)\leq 2k-\deg(r).
\label{linear_sys}
\end{align}
The left-hand side of each constraint can be expressed as a real linear combination of the real variables defined in the first line. The feasibility problem \eqref{linear_sys} is therefore a real system of linear equations. Note that the first conditions are homogeneous (namely, satisfied by taking all variables equal to zero); the only non-homogeneous conditions can be found in the last equality in the line next to the last of Eq.~\eqref{linear_sys}.

An arbitrary real linear combination of the equalities of the last line results in an expression of the form
\begin{equation}
\sum_i\mu^k_i\left(\nabla_x g_i(p)\right)+\sum_j\lambda^k_j\left(\nabla_x h_j(p)\right)=\sigma^\star\left(\nabla_x f(p)\right),
\end{equation}
where the vector of polynomials $p\in\P^n$ has Hermitian entries. By basic linear algebra, the system is not solvable iff, for some $p\in\P^n$, the right-hand side of the equation above is non-zero, whereas the left-hand side can be expressed a real linear combination of the homogeneous constraints. Since the homogeneous constraints do not relate the different variable types (namely, those that go with each functional), this would imply that the polynomial evaluated by $\lambda^k_j$ belongs to $I(\mathbf{h})_{2k}$, and the polynomial evaluated by $\mu_i^k$ is of the form $s_ig_i+g_is_i^*+ r_i$, for some polynomials $s_i\in\P$, $r_i\in I(\mathbf{h})_{2k}$.

We have just proven that there exist Hermitian $\lambda,\mu$ satisfying Eqs.~\eqref{feas_proof}--\eqref{balance_proof} iff, for all Hermitian $p\in\P^{n}$ and (not necessarily Hermitian) $\{s_i\}_i\subset\P$, the conditions 
\begin{equation}
\begin{split}
&\deg \left(\nabla_x f\right),\deg \left(\nabla_x g_i\right),\deg \left(\nabla_x h_j\right)\leq 2k-\deg(p),\\
&\mbox{deg}(s_i)\leq 2k-\mbox{deg}(g_i),i=1,\ldots ,m,
\label{degrees_p}
\end{split}
\end{equation}
and
\begin{equation}
\begin{split}
&\nabla_xg_i(p)-s_ig_i-g_is_i^*\in I(\mathbf{h})_{2k},i=1,\ldots ,m,\\
&\nabla_x h_j(p)\in I(\mathbf{h})_{2k}, j=1,\ldots ,m',
\label{feas_p}
\end{split}
\end{equation}
imply that
\begin{equation}
\sigma^\star(\nabla_x f(p))=0.
\label{null_grad}
\end{equation}}

Let $(\H^\star, X^\star, \psi^\star)$ be the result of applying the GNS construction on $\sigma^\star$, and suppose that $p\in\P^{n}$ satisfies Eqs.~\eqref{degrees_p} and \eqref{feas_p}. Then, $X^\star, q:=\pm p$ satisfy the conditions of Lemma \ref{lemma_ODE}, and thus there exist two feasible operator trajectories $\{X^{\pm}(t):t\in[0,\delta]\}\subset B(\H^\star)$ such that
\begin{equation}
X^{\pm}(0)=X^\star,\; \frac{dX^{\pm}}{dt}\Bigr|_{t=0}=\pm p(X^\star).
\end{equation}
Hence,
\begin{equation}
\pm\sigma^\star(\nabla_xf(p))=\sigma^\star(\nabla_xf(\pm p))=\psi^\star\left(\nabla_xf(\pm p(X^\star))\Bigr|_{X=X^\star}\right)=\frac{d\psi^\star\left(f(X^\pm(t))\right)}{dt}\Bigr|_{t=0}\geq 0,
\end{equation}
where the last inequality follows from the fact that $X=X^\star$ minimizes the expression $\psi^\star\left(f(X)\right)$ over tuples $X\in B(\H^\star)^{n}$ of feasible operators. Thus, Eq.~\eqref{null_grad} holds whenever $p,s$ satisfy Eqs.~\eqref{degrees_p}, \eqref{feas_p}.

It follows that there exist $\mu^k,\lambda^k$ satisfying conditions \eqref{feas_proof}--\eqref{balance_proof}. For each $i$, the corresponding moment and localizing matrices of $\mu^k_i$ might not be positive semidefinite, but, %having finite entries, 
one can find $\epsilon>0$ large enough such that conditions \eqref{moment_proof}, \eqref{localizing_proof} hold. We therefore conclude that the SDP \eqref{eq:SDPP} is feasible.

The dual of \eqref{eq:SDPP} is the following SDP:
\beq
\begin{aligned}\label{eq:SDPQ}
\mathbb{P}^*:=&\max_{q,s,Z} -\sigma^\star(\nabla_xf(q))\\
\mbox{s.t. }&\nabla_xg_i(q)+s_ig_i+g_is^*_i-\sum_{a,b}(Z_i)_{ab}o_ag_io_b^*-\sum_{l,a,b}(Z_{i,l})_{ab}o^l_ag_l(o^l_b)^*\in I(\mathbf{h})_{2k},\\
&\nabla_xh_j(q)\in I(\mathbf{h})_{2k},\quad j=1,\ldots ,m',\\
&Z_i\geq 0,\; Z_{i,l}\geq 0,\quad i,k=1,\ldots ,m,\\
&\sum_{i,l}\tr(Z_{i,l})\leq 1.
\end{aligned}
\eeq
where $\{o_k\}_k$, $\{o^{l}_k\}_k$ are, respectively, monomial bases for polynomials of degree $k$ and $k-\left\lceil\frac{\deg(g_l)}{2}\right\rceil$, and $Z_i$, $Z_{i,l}$ are Hermitian matrices.

That is, in Problem \eqref{eq:SDPQ} one needs to maximize $-\sigma^\star(\nabla_xf(q))$ over the polynomials $q, \{s_i\}_i$ such that $\nabla_xh_j(q)=0$ for all $j$, and the polynomials $\nabla_xg_i(q)+s_ig_i+g_is_i^*,\, i=1,\ldots ,m$ are sums of weighted squares satisfying certain normalization constraints. 

We claim that the solution of \eqref{eq:SDPQ} is zero. This value can be achieved, e.g., by taking $q=Z_i=Z_{i,l}=s_i=0$ for all $i,l$. To see that zero is the optimal value, consider any feasible point $q,s$ of \eqref{eq:SDPQ}. By definition, $q, X^\star$ satisfy the conditions of Lemma \ref{lemma_ODE}. Thus, there exists $\delta>0$ and a trajectory of feasible operators $\{X(t):t\in[0,\delta]\}$ such that 
\begin{equation}
X(0)=X^\star,\; \frac{dX}{dt}\Bigr|_{t=0}=q(X^\star).
\end{equation}
Hence,
\begin{equation}
-\sigma^\star(\nabla_xf(q))=-\frac{d\psi^\star\left(f(X(t))\right)}{dt}\Bigr|_{t=0}\leq 0.
\end{equation}
The solution of Problem \eqref{eq:SDPQ} is therefore zero.

Now, the SDP \eqref{eq:SDPP} is bounded from below by $0$ and it admits strictly feasible points (by taking $\epsilon$ large enough). By Slater's criterion, the problem thus satisfies strong duality, and so the solutions of \eqref{eq:SDPP}, \eqref{eq:SDPQ} coincide. This implies that one can find feasible points of Problem \eqref{eq:SDPP} for any $\epsilon>0$. Hence, conditions \eqref{essential_KKT} hold for arbitrary $\{\epsilon_i\}_i\subset \R^+$.
\end{proof}

\begin{remark}
Note that our derivation of Theorem \ref{essential_theo} does not make full use of the fact that $(\H^\star, X^\star, \psi^\star)$ is an optimal solution of Problem \ref{nc_prob_hilbert}. It is enough to assume that, for a fixed state $\psi^\star$, the operators $X^\star$ minimize the objective function $\psi^\star (f(X))$ (under the corresponding operator constraints). With this in mind, it is easy to re-derive, via Theorem \ref{essential_theo}, the first-order optimality conditions for synchronous non-local games presented in \cite[Prop. 7.1]{helton2023synchronous}, see Appendix \ref{app:rederivation_helton}.
\end{remark}

\subsubsection{Setting $\epsilon_i$ to zero}
\label{sec:epsilon_zero}
In Remark \ref{epsilon_zero}, the presence of $\epsilon_i>0$ in Theorem \ref{essential_theo} was attributed to the impossibility of bounding the norm of the positive semidefinite multipliers $\{\mu_i\}_i$. If this intuition were accurate, then one would expect that any algebraic bound on the norm of $\mu_i$ would allow one to set $\epsilon_i=0$ in Eq.~\eqref{essential_KKT}. This is exactly what the next theorem shows.

\begin{theo}
\label{theo_null_epsilon}
Consider an NPO problem 
\eqref{nc_prob}
that satisfies the Archimedean condition. For some set of indices $I\subset\{1,\ldots ,m\}$, let there exist $r\in\R^+$ and a tuple of Hermitian polynomials $q\in\P^{n}$ such that 
\beq
\begin{aligned}
&\nabla_xg_i(q)+s_ig_i+g_is_i^*-r\in M(\mathbf{g})+I(\mathbf{h}),\quad\forall i\in I,\\
&\nabla_xg_i(q)+s_ig_i+g_is_i^* \in M(\mathbf{g})+I(\mathbf{h}),\quad\forall i\not\in I,\\
&\nabla_xh_j(q)\in I(\mathbf{h}),\quad j=1,\ldots ,m'.
\label{MF}
\end{aligned}
\eeq
for some polynomials $\{s_i:i=1,\ldots ,m\}$. Then, there exist Hermitian linear functionals $\{\mu_i:\P\to\C\}_{i\in I}$, with $\mu_i(M(\mathbf{g})+I(\mathbf{h}))\geq 0$ for all $i\in I$, such that, for any $k\in\N$ and $\{\epsilon_i:i\not\in I\}\subset \R^+$, there exist Hermitian functionals $\{\mu^k_i:\P_{2k}\to\C\}_{i\not\in I}$, $\{\lambda^k_j:\P_{2k}\to\C\}_j$ satisfying
\begin{equation}\label{weaksential_KKT}
\begin{aligned}
&\mu^k_i(pp^*)+\epsilon_i\|p\|^2_2\geq 0,\quad \forall p\in\P,\; \deg (p)\leq k,\;\forall i\not\in I\\ 
&\mu^k_i(pg_lp^*)+\epsilon_i\|p\|^2_2\geq 0, \quad \forall p\in\P,\; \deg (p)\leq k-\left\lceil\frac{\deg(g_l)}{2}\right\rceil,\;l=1,\ldots ,m,\;\forall i\not\in I\\ 
&\mu^k_i(I(\mathbf{h})_{2k})=0,\quad  \forall i\not\in I,\\ 
&\mu^k_i(sg_i)=\mu^k_i(g_is)=0,\quad \forall s\in\P,\;\deg(s)\leq 2k-\deg(g_i),\; \forall i\not\in I,\\ 
&\lambda_j^k(I(\mathbf{h})_{2k})=0, \quad j=1,\ldots ,m',\\
&\sum_{i\not\in I}\mu^k_i\left(\nabla_x g_i(p(x))\right)+\sum_j\lambda^k_j\left(\nabla_x h_j(p(x))\right)=\sigma^\star\left(\nabla_x f(p(x))\right)-\sum_{i\in I}\mu_i\left(\nabla_x g_i(p(x))\right),\quad \forall p\in \P^{n}, \\ 
&\hspace{1cm}\deg \left(\nabla_x f\right),\deg \left(\nabla_x g_i\right),\deg \left(\nabla_x h_j\right)\leq 2k-\deg(p). 
\end{aligned}
\end{equation}
In particular, the \core ncKKT conditions \eqref{essential_KKT} hold with $\epsilon_i=0$ for all $i\in I$.
\end{theo}
\begin{remark}
\label{remark:bounded_mu}
Note that, if weak ncKKT holds, then one can use Eq.~\eqref{MF} to upper bound the norm of the Lagrange multipliers $\{\mu_i:i\in I\}$. Indeed, take $p=q$ in Eq.~\eqref{operator_optimality}. By Eq.~\eqref{MF} it then holds that
\begin{equation}
\sigma^\star(\nabla_xf(q))=\sum_i\mu_i(\nabla_x g_i(q))=\sum_i\mu_i(\nabla_x g_i(q)+s_ig_i+g_is^*_i)\geq r\sum_i\mu_i(1).
\end{equation}
That is, for $i\in I$, $\mu_i(1)$, the norm of $\mu_i$, is bounded above by $\frac{\sigma^\star(\nabla_xf(q))}{r}$.
\end{remark}

To prove Theorem \ref{theo_null_epsilon}, we will need the following simple lemma.

\begin{lemma}
\label{lemma_linear2}
Consider the system of linear equations in the variables $y\in\R^s$ given by
\begin{equation}
A\cdot y=b,
\label{original_system}
\end{equation}
where $A$ is an $r\times s$ real matrix and $b\in\R^r$. Then, there exists a constant $K\in\R^+$ such that, for any $b'\in\im(A)\subset\R^r$ and any solution $y$ of \eqref{original_system}, there exists a solution $y'$ of the system
\begin{equation}
A\cdot y'=b',
\label{perturbed_system}
\end{equation}
with $\|y-y'\|_2\leq K\|b-b'\|_2$. 
\end{lemma}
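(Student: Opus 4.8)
The plan is to prove Lemma~\ref{lemma_linear2} by reducing to the case where $A$ has full column rank, and then invoking a standard norm bound on the inverse of the resulting square system. First I would note that the statement is invariant under reparametrizing the solution set: write $y = y_0 + z$ where $z \in \ker(A)$, and observe that it suffices to find \emph{one} particular solution $y'$ of \eqref{perturbed_system} that is close to the given $y$, since the affine subspace of solutions of \eqref{perturbed_system} is a translate of $\ker(A)$. Concretely, if I can produce any solution $y''$ of $A \cdot y'' = b'$, then setting $y' := y'' + P_{\ker(A)}(y - y'')$, where $P_{\ker(A)}$ is the orthogonal projection onto $\ker(A)$, still solves \eqref{perturbed_system} (since we only added a kernel element), and I will bound $\|y - y'\|_2 = \|P_{\ker(A)^\perp}(y - y'')\|_2 \leq \|P_{\ker(A)^\perp}(y - y'')\|_2$; this last quantity depends only on the components of $y$ and $y''$ orthogonal to the kernel, which are determined by $b$ and $b'$ respectively.

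Next I would make this quantitative. Let $A^+$ denote the Moore--Penrose pseudoinverse of $A$. Since $b, b' \in \im(A)$, the vectors $A^+ b$ and $A^+ b'$ are the minimum-norm solutions of \eqref{original_system} and \eqref{perturbed_system}, and they lie in $\ker(A)^\perp = \im(A^\top)$. For the given solution $y$ of \eqref{original_system}, its component in $\ker(A)^\perp$ is exactly $A^+ b$ (because $y - A^+ b \in \ker(A)$). Now choose $y' := y - A^+ b + A^+ b'$; this lies in the solution set of \eqref{perturbed_system} (we subtracted a kernel-orthogonal solution of the homogeneous-shifted system and added back the right one — more carefully, $A y' = A y - A A^+ b + A A^+ b' = b - b + b' = b'$, using $A A^+ b = b$ for $b \in \im(A)$). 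Then
\begin{equation}
\|y - y'\|_2 = \|A^+ b - A^+ b'\|_2 = \|A^+(b - b')\|_2 \leq \|A^+\|_2 \, \|b - b'\|_2,
\end{equation}
so the claim holds with $K := \|A^+\|_2$, the operator norm of the pseudoinverse (equivalently, the reciprocal of the smallest nonzero singular value of $A$). Note $K$ depends only on $A$, as required, and not on $b$, $b'$, or the choice of $y$.

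The argument is essentially complete at that point; the only things to spell out are the standard facts $A A^+ b = b$ for $b \in \im(A)$ and $A^+ b \in \im(A^\top) = \ker(A)^\perp$, together with $y - A^+ b \in \ker(A)$ for any solution $y$, all of which are immediate from the defining properties of the pseudoinverse (or can be derived from the singular value decomposition $A = U \Sigma V^\top$, taking $A^+ = V \Sigma^+ U^\top$). I do not anticipate a genuine obstacle here: the lemma is a routine piece of linear algebra, and the only mild subtlety is the hypothesis $b' \in \im(A)$, which is exactly what guarantees \eqref{perturbed_system} is solvable and makes $A A^+ b' = b'$ valid; without it the conclusion would fail trivially. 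If one prefers to avoid the pseudoinverse entirely, an equivalent route is to restrict $A$ to a complement of its kernel, obtaining an injective linear map $\tilde A : \ker(A)^\perp \to \im(A)$ which is a bijection onto its image, hence has a bounded inverse $\tilde A^{-1}$ with $\|\tilde A^{-1}\|_2 =: K < \infty$ by finite-dimensionality; then $y' := y + \tilde A^{-1}(b' - b)$ works, with the same bound. Either way the proof is short.
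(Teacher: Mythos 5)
Your proof is correct and uses essentially the same argument as the paper: decompose a solution $y$ as $A^+b$ plus a kernel element, set $y' := y - A^+b + A^+b'$, and take $K = \|A^+\|_2$. The paper's version is slightly more terse but the construction and bound are identical.
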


\begin{proof}
If system \eqref{original_system} is solvable, then any solution $y$ thereof can be expressed as
\begin{equation}
 y=A^+ \cdot b+\bar{y},   
\end{equation}
where $\bar{y}\in \ker(A)$ and $A^+ $ denotes the Moore-Penrose inverse of $A$. Given $y,b'$, the expression $y'=\bar{y}+A^+ \cdot b'$ is thus a solution of Eq.~\eqref{perturbed_system}, as long as Eq.~\eqref{perturbed_system} is solvable. Moreover,
\begin{equation}
\|y-y'\|_2\leq \|A^+ \|\|b-b'\|_2.
\end{equation}
Defining $K:=\|A^+ \|$, we arrive at the statement of the lemma.
\end{proof}

\begin{proof}[Proof of Theorem \ref{theo_null_epsilon}.]
By Theorem \ref{essential_theo}, for any $k\in\N$ and any $\{\epsilon_i\}_i\subset\R^+$, there exist functionals $\{\mu_i^k\}_i$, $\{\lambda_j^k\}_j$ satisfying Eqs.~\eqref{essential_KKT}. Without loss of generality, let us assume in the following that $\epsilon_i\leq 1$ for all $i$.

We next prove that the existence of a polynomial $q$ satisfying conditions \eqref{MF} implies that we can choose the functionals $\{\mu_i^k:i\in I\}$ such that the terms $\{\mu_i^k(p):p\in\M_{2k}\}$ are bounded. Remember, from the proof of Theorem \ref{essential_theo}, that $\M_j$ is the (finite) set of polynomials of the form $s+s^*$ or $i(s-s^*)$, with $s\in\langle x\rangle_j$.

{First, by virtue of the first three lines of Eq.~\eqref{essential_KKT}, we have that, for any polynomial $p\in M(\mathbf{g})_{2k}+I(\mathbf{h})_{2k}$, with SOS decomposition given by Eq.~\eqref{eq:sos} and
\begin{align}
2\deg(p_l),\,2\deg(p_{il})+\deg(g_i),\,\deg(p^+_{jl})+\deg(p^+_{jl})+\deg(h_j)\leq 2k,    
\end{align}
it holds that
\begin{equation}
\mu^k_i(p)+\epsilon_i\left(\sum_j\|p_j\|^2_2+\sum_l\|p_{il}\|^2_2\right)\geq 0.
\label{SOS_condi}
\end{equation}}
Take $p=q$ (the polynomial tuple in Eq.~\eqref{MF}). Provided that $k$ is large enough so that the SOS decompositions in Eq.~\eqref{MF} can be evaluated by $\{\mu_i^k\}_i$, this implies that there exist constants $\{K_i\}_i\subset \R^+$, independent of $\{\epsilon_i\}_i$, such that
\begin{equation}
\sigma^\star(\nabla_xf(q))+\sum_i\epsilon_iK_i=\sum_i \mu^k_i(\nabla_xg_i(q))+\sum_i\epsilon_iK_i\geq r\sum_{i\in I} \mu^k_i(1).
\label{unit_value}
\end{equation}
Since $\sigma^\star(\nabla_xf(q))$ is bounded (by the Archimedean condition) and $\mu^k_i(1)+1\geq \mu^k_i(1)+\epsilon_i\geq 0$, it follows that the values $\{|\mu^k_i(1)|:i\in I\}$ are upper bounded by a constant $C\in\R^+$. 

Also by the Archimedean condition, for any Hermitian polynomial $p$, there exists $K_p\in\R^+$ such that $K_p\pm p$ is an SOS. Eq.~\eqref{SOS_condi} then implies that, for any $p\in \P$, there exists a constant $C_p$, independent of $\{\epsilon_i\}_i$, such that, for $k$ large enough,
\begin{equation}
K_p\mu^k_i(1)\pm\mu^k_i(p)+\epsilon_iC_p\geq 0.
\end{equation}
If $k$ is also large enough to accommodate the SOS decompositions in Eq.~\eqref{MF}, we conclude by Eq.~\eqref{unit_value} that $\{\mu^k_i(p):i\in I\}$ are also bounded.

Thus, for given $k'\in\N$, we can find $k\geq k'$ and $C\in \R^+$ such that any feasible set of Lagrange multipliers $\mu^k,\lambda^k$ satisfies
\begin{equation}
|\mu^k_i(p)|\leq C,\quad \forall i\in I,
\end{equation}
for all polynomials $p\in\M_{2k'}$. Obviously, the restrictions $\lambda^{k'},\mu^{k'}$ of said Lagrange multipliers $\lambda^k,\mu^k$ to the smaller domain $\P_{2k'}$ also satisfy conditions \eqref{essential_KKT}, with the replacement $k\to k'$. Moreover, by the argument above, for all $p\in\P_{2k'}$, $i\in I$, $\mu_i^{k'}(p)$ is bounded. 

We have just proven that, for any $k\in\N$, one can choose $\{\mu_i^k:i\in I\}$ bounded, independently of the value of $\{\epsilon_i\}_i$. Now, by Theorem \ref{essential_theo}, for each $k\in\N$ there exists feasible $\mu^{k},\lambda^{k}:\P_{2k}\to\C$ satisfying Eq.~\eqref{essential_KKT} for $\epsilon_i=\frac{1}{k}$, with $\{\mu^{k}_i:i\in I\}$ bounded. Consider now the sequence of functionals $(\{\mu^{k}_i:\P\to\C\}_{i\in I})_k$ (we extend $\mu^k_i$ from $\P_{2k}$ to $\P$ by mapping all monomials of degree $>k$ to zero). Now, let $i\in I$. Since, for all monomials $m\in\P$, there exists $L_m$ such that $|\mu^k_i(m)|\leq L_m$ for all $k$, we can construct a linear, invertible transformation $\mathbb{L}$ such that $|\mathbb{L}\circ \mu^k_i(m)|\leq 1$ for all monomials $m\in\P$. By the Banach-Alaoglu theorem \cite[Theorem IV.21]{reedsimon}, the sequence $(\{\tilde{\mu}^k_i\}_{i\in I})_k$, with $\tilde{\mu}_i:=\mathbb{L}\circ\mu^k_i$, thus admits a converging subsequence indexed by $(k_\alpha)_\alpha$, call $\{\tilde{\mu}_i:\P\to\C\}_{i\in I}$ its limit. Finally, we define $\mu_i:=\mathbb{L}^{-1}\circ \tilde{\mu}_i$. By construction, we have that, for all $k\in\N$ and $i\in I$, 
\begin{equation}
\lim_{\alpha\to\infty}\mu^{k_\alpha}_i\Bigr|_{\P_{2k}}=\mu_i\Bigr|_{\P_{2k}}.    
\end{equation}

This implies that, for $i\in I$ and any $\delta>0$,
\begin{equation}
M^k(\mu_i)+\delta\id\geq0,\; M^k_l(\mu_i)+\delta\id\geq0.
\end{equation}
It follows that $M^k(\mu_i),\; M^k_l(\mu_i)\geq 0$, for all $k$, and so $\{\mu_i\}_{i\in I}$ are positive linear functionals compatible with the problem constraints.

%$\mu^k_I:=\{\mu^k_i:i\in I\}$ satisfy the first three lines of \eqref{essential_KKT} for $\epsilon_i=0$.

Now, fix $k\in\N$. For any $\alpha\in\N$ such that $k\leq k_\alpha$, there exist $\{\mu^{k,\alpha}_i\}_{i\not\in I}$, $\{\lambda^{k,\alpha}_j\}_j$ satisfying the first five lines of Eqs.~\eqref{weaksential_KKT} and such that
\begin{equation}
\sum_{i\not\in I}\mu^{k,\alpha}_i(\nabla_xg_i(p))+\sum_j\lambda^{k,\alpha}_j(\nabla_xh_j(p))=\sigma^\star(\nabla_xf(p))-\sum_{i\in I}\mu^{k_\alpha}_i(\nabla_xg_i(p)),
\end{equation}
for all $p\in \P$ with appropriately constrained degree. Indeed, $\{\mu_i^{k,\alpha}\}_{i\not\in I}$, $\{\lambda_j^{k,\alpha}\}_{j}$ can be chosen to be the restrictions to $\P_{2k}$ of $\{\mu^{k_\alpha}_i\}_{i\not\in I}$, $\{\lambda^{k_\alpha}_j\}_j$. The variables $\{\mu^{k,\alpha}_i:i\not\in I\},\{\lambda^{k,\alpha}_j\}_j$ in the equation above, identified by their evaluations on a finite set of polynomials, can thus be seen as a solution of the real system of linear equations
%\beq
\begin{align}\label{final_linear}
&\mu^k_l(s^+h_js^-+(s^-)^*h_j(s^+)^*)=\mu^k_l(i(s^+h_js^--(s^-)^*h_j(s^+)^*))=0, \nonumber\\
&\hspace{1cm}\forall s^+, s^-\in\langle x\rangle,\; \deg (s^+)+\deg(s^-)\leq 2k-\deg(h_j), \; \forall j,\forall l\not\in I\nonumber\\ 
&\mu^k_l(sg_l+g_ls^*)=\mu^k_l(i(sg_l-g_ls^*))=0,\quad \forall s\in\langle x\rangle,\;\deg(s)\leq 2k-\deg(g_l),\; l\not\in I,\nonumber\\ 
&\lambda_l^k(s^+h_js^-+(s^-)^*h_j(s^+)^*)=\lambda_l^k(i(s^+h_js^--(s^-)^*h_j(s^+)^*))=0,\nonumber\\
&\hspace{1cm}\forall s^+,s^-\in\M,\; \deg (s^+)+\deg(s^-)\leq 2k-\deg(h_j),\, \forall j,l,\nonumber\\
&\sum_{i\not\in I}\mu^k_i(\nabla_xg_i(p))+\sum_j\lambda^k_j(\nabla_xh_j(p))=c^{\alpha}(p), \nonumber\\
&\hspace{1cm}\forall p\in\M^n, \deg \left(\nabla_x f\right),\deg \left(\nabla_x g_i\right),\deg \left(\nabla_x h_j\right)\leq 2k-\deg(p),
\end{align}
%\eeq
The ``constant vector'' $c^{\alpha}(p)$ is given by
\begin{equation*}
c^{\alpha}(p):=\sigma^\star(\nabla_xf(p))-\sum_{i\in I}\mu^{k_\alpha}_i(\nabla_xg_i(p)).
\end{equation*}
Since the system \eqref{final_linear} is solvable for all $\alpha$, then it is also solvable in the limit $\alpha\to\infty$ (because the image of the corresponding matrix of coefficients, being finite-dimensional, is a closed subspace). By Lemma \ref{lemma_linear2}, there exists $K\in\R^+$ such that, for any $\alpha$, there exists a solution $\{\mu^k_i:i\not\in I\},\{\lambda^k_j\}_j$ of the system
\begin{align}\label{balance_limit}
&\mu^k_l(s^+h_js^-+(s^-)^*h_j(s^+)^*)=\mu^k_l(i(s^+h_js^--(s^-)^*h_j(s^+)^*))=0, \nonumber\\
&\hspace{1cm}\forall s^+, s^-\in\langle x\rangle,\; \deg (s^+)+\deg(s^-)\leq 2k-\deg(h_j), \; \forall j,\forall l\not\in I\\ 
&\mu^k_l(sg_l+g_ls^*)=\mu^k_l(i(sg_l-g_ls^*))=0,\quad \forall s\in\langle x\rangle,\;\deg(s)\leq 2k-\deg(g_l),\; l\not\in I,\\ 
&\lambda_l^k(s^+h_js^-+(s^-)^*h_j(s^+)^*)=\lambda_l^k(i(s^+h_js^--(s^-)^*h_j(s^+)^*))=0,\nonumber\\
&\hspace{1cm}\forall s^+,s^-\in\M,\; \deg (s^+)+\deg(s^-)\leq 2k-\deg(h_j), \forall j,l,\nonumber\\
&\sum_{i\not\in I}\mu^k_i(\nabla_xg_i(p))+\sum_j\lambda^k_j(\nabla_xh_j(p))=\lim_{\alpha\to\infty}c^{\alpha}(p)=\sigma^\star(\nabla_xf(p))-\sum_{i\in I}\mu_i(\nabla_xg_i(p)), \nonumber\\
&\hspace{1cm}\forall p\in\M^n, \deg \left(\nabla_x f\right),\deg \left(\nabla_x g_i\right),\deg \left(\nabla_x h_j\right)\leq 2k-\deg(p),
\end{align}
with 
\begin{align}
\sum_{p\in \M_{2k}}\left( \sum_{i\not\in I}(\mu^k_i(p)-\mu^{k,\alpha}_i(p))^2+\sum_j(\lambda^k_j(p)-\lambda^{k,\alpha}_j(p))^2\right)\leq K^2\|c-c^\alpha\|_2^2.
\end{align}
Thus, there exists a sequence of real positive numbers $(\delta_\alpha)_\alpha$, with $\lim_{\alpha\to\infty}\delta_\alpha=0$, such that
\begin{equation}
\|M^k(\mu^k_i)-M^k(\mu^{k,\alpha}_i)\|,\|M_l^k(\mu^k_i)-M^k_l(\mu_i^{k,\alpha})\|\leq \delta_\alpha,\forall i\not\in I.
\end{equation}
 Since, by construction, 
\begin{equation}
 M^k(\mu^{k,\alpha}_i)+\frac{1}{k_\alpha}\id\geq 0, M_l^k(\mu^{k,\alpha}_i)+\frac{1}{k_\alpha}\id\geq 0,
\end{equation}
we have that 
\begin{equation}
 M^k(\mu^{k}_i)+\left(\frac{1}{k_\alpha}+\delta_\alpha\right)\id\geq 0, M_l^k(\mu^{k}_i)+\left(\frac{1}{k_\alpha}+\delta_\alpha\right)\id\geq 0,\forall i\not\in I.
\end{equation}
The quantity $\frac{1}{k_\alpha}+\delta_\alpha$ tends to zero as $\alpha$ tends to infinity; it follows that, for any $\epsilon \in \R^+$ and for all $k\in\N$, there exist functionals $\{\mu^k_i:i\not\in I\},\{\lambda_j^k\}_j$ such that Eq.~\eqref{weaksential_KKT} is satisfied for $\epsilon_i=\epsilon$, for all $i\not\in I$, with $\{\mu_i\}_{i\in I}$ being positive linear functionals compatible with the problem constraints. This implies that Eq.~\eqref{weaksential_KKT} is satisfied for any $\{\epsilon_i\}_{i\not\in I}\subset \R^+$, as long as $\epsilon_i\geq \epsilon$ for all $i\not\in I$. Since $\epsilon \in \R^+$ is arbitrary, the theorem has been proven.
\end{proof}

\subsection{Normed KKT conditions}
\label{sec:normed}
In this section, we will investigate tractable conditions under which the normed ncKKT conditions from Definition \ref{def:normed_ncKKT} hold. To find them, we will adapt known sufficient criteria for the classical case. 

In Section \ref{sec:MF}, we present an algebraic, non-commutative version of the Mangasarian--Fromovitz Constraint Qualification (MFCQ), see Eqs.~\eqref{MF_classical_prelud}, \eqref{MF_classical} and prove that it suffices to guarantee that the corresponding NPO problem satisfies the normed ncKKT conditions. Before we proceed, though, we need to formulate an algebraic notion of gradient linear independence for sets of non-commutative polynomial equality constraints. This is the subject of the next section.

\subsubsection{Linearly independent gradients}
\label{sec:linear_indep}
Given a number of non-commutative polynomial equality constraints $\{h_j(x)=0\}$, we need to find a meaning for the expression ``their gradients are linearly independent''. Our starting point is the classical meaning of the term.

Let $\{x_i\}_i$ be commuting variables. Then the gradient vectors $\{\partial_x h_j\}_j$ are independent iff there exist vectors $v_1(x),\ldots ,v_{m'}(x)$ such that 
\begin{align}
&\braket{\partial_x h_j}{v_{k}(x)}=\delta_{j,k},\quad j,k=1,\ldots ,m'.
\end{align}
Now, define the matrices
\begin{equation}
\begin{aligned}
&\hat{P}_{j}(x):=\ket{v_{j}(x)}\bra{\partial_xh_j},\quad j=1,\ldots ,m',\\ 
&\hat{P}_{0}(x):=\id-\sum_{k=1}^{m'}\hat{P}_k(x).
\end{aligned}
\end{equation}
It is easy to see that, for all $z\in\R^n$, these matrices satisfy
\begin{equation}\label{LI_classical}
\begin{aligned}
&\sum_k\hat{P}_k(x)\ket{z}=\ket{z},\\ 
&\bra{\partial_x h_j} \hat{P}_k(x)\ket{z}=0,\quad \forall j\not=k,\\ 
&\braket{\partial_x h_j}{z}=0\;\rightarrow\; \hat{P}_{j}(x)\ket{z}=0.
\end{aligned}
\end{equation}
Note that, in general classical polynomial optimization problems, the matrices $\{\hat{P}_{j}(x)\}_j$ are rational functions of $x$. In the following lines, though, we will assume that they are actually polynomials. This algebraic constraint will restrict our definition of gradient linear independence more than usual, but otherwise will allow us to generalize this notion to non-commutative systems.

In classical systems, variables form a vector $x=(x_1,\ldots ,x_n)$ of scalars, and the gradient $\partial_x h$ of a function $h$ is also an $n$-dimensional vector of scalars. To find out how $h$ will change if we move the variables in some direction $z$, we compute the scalar product $\partial_x h\cdot z$, thus obtaining a scalar. 

In non-commutative systems, variables form a vector $x=(x_1,\ldots ,x_n)$ of non-commuting objects, and the gradient $\nabla_x h(\bullet)$ of a polynomial $h(x)$ can be regarded as a linear map from $n$-tuples of polynomials $p=(p_1,\ldots ,p_n)\in \C\langle x\rangle^n$ to a single polynomial $\nabla_xh(p(x))$.
A non-commutative analog of relations \eqref{LI_classical} would thus demand the existence of $m'+1$ $n$-tuples of polynomials $P_0(x,z),\ldots ,P_{m'}(x,z)$ in the variables $x=(x_1,\ldots ,x_n)$, $z=(z_1,\ldots ,z_n)$, linear on $z$. Each such $n$-tuple of polynomials $P_k(x,z)$ would play the role that the vector $\hat{P}_k(x)\ket{z}$ played in relations \eqref{LI_classical}. Correspondingly, the tuples $P_0,\ldots ,P_{m'}$ should satisfy the following conditions:
\begin{subequations}
\begin{align}
&\sum_kP_k(X,Z)=Z,\label{completeness}\\
&\nabla_x h_j(P_{k}(X,Z))=0,\quad \forall j\not=k,\label{differentiation}\\
&\nabla_x h_j(Z)=0\;\rightarrow\; P_{j}(X,Z)=0.
\label{charact}
\end{align}
\end{subequations}
We will regard the algebraic version of constraints \eqref{completeness}--\eqref{charact} as the non-commutative generalization of gradient linear independence.

\begin{defin}
We say that a set of equality constraints $\{h_j(x)= 0:j=1,\ldots ,m'\}$ has linearly independent gradients if there exist $n$-tuples of symmetric polynomials in $2n$ variables $P_0(x,z),P_1(x,z),\ldots ,P_{m'}(x,z)$, linear in the $z$ variables, such that
\begin{subequations}\label{alg_LICQ}
\begin{align}
&\sum_{j=0}^{m'}P_j(x,p)-p\in {I(\mathbf{h})}^{n},\quad\forall p\in\P^{n},\label{completeness_LI}\\ 
&\nabla_xh_j(P_{k}(x,p))\in I(\mathbf{h}),\quad\forall k\not=j,\;p\in\P^{n},\label{anni_LI}\\ 
&\exists \beta^+,\beta^-: (P_{j}(x,p))_k-\sum_{l}\left(\beta^+_{jkl}(x)\nabla_xh_j(p)(\beta^-_{jkl}(x))^*+\beta^-_{jkl}(x)\nabla_xh_j(p)(\beta^+_{jkl}(x))^*\right)\in I(\mathbf{h}),\nonumber\\
&\phantom{\exists \beta^+,\beta^-: }\ j=1,\ldots ,m', \quad k =1,\ldots ,n, \quad p\in\P^{n}.\label{charact_LI}
\end{align}
\end{subequations}
Here $(P)_k$ denotes the $k^{th}$ component of the $n$-tuple $P$.
\end{defin}
\begin{remark}
One can replace Eq.~\eqref{charact_LI} by the weaker\footnote{That eq. (\ref{charact_LI}) implies eq. (\ref{weaker_version}) can be seen by substituting $P_j(x,p)$ by minus the second term of eq. (\ref{charact_LI}) in $\nabla_xg_i(P_j(x,p)), \nabla_xf(P_j(x,p))$. Due to the linearity of the gradient, the result will be an expression like minus the second terms of both lines of eq. (\ref{weaker_version}), plus some polynomial in $I(\mathbf{h})$.}, problem-dependent constraints
\begin{align}
&\exists \beta^+,\beta^-,\gamma^+,\gamma^-,s,\text{  such that  }\nonumber\\
&\nabla_xg_i\left(P_{j}(x,p)\right)-\sum_{l}\left(\beta^+_{ijl}(x)\nabla_xh_j(p)(\beta^-_{ijl}(x))^*+\beta^-_{ijl}(x)\nabla_xh_j(p)(\beta^+_{ijl}(x))^*\right)+s_i(p)g_i+g_is_i(p)\in I(\mathbf{h}),\nonumber\\
&\nabla_xf\left(P_{j}(x,p)\right)-\sum_{l}\left(\gamma^+_{jl}(x)\nabla_xh_j(p)(\gamma^-_{jl}(x))^*+\gamma^-_{jl}(x)\nabla_xh_j(p)(\gamma^+_{jl}(x))^*\right)\in I(\mathbf{h}),\nonumber\\
&\phantom{\exists \beta^+,\beta^-,\gamma^+,\gamma^-,s,}\ 
j=1,\ldots ,m', \quad k =1,\ldots ,n, \quad p\in\P^{n}.
\label{weaker_version}
\end{align}
Indeed, the proofs of Lemmas \ref{bounded_interm}, \ref{lemma_MF_bounded} and Theorem \ref{theo_MF} below follow through with such a modified definition. 
\end{remark}

Unless the quotient space $\P/I(\h)$ is finite-dimensional, verifying that conditions \eqref{alg_LICQ} hold for all $p\in\P^{n}$ is a challenging endeavor. The next proposition provides a practical, sufficient condition to guarantee gradient linear independence.

\begin{prop}
Let $\{h_j\}_j$ be a set of Hermitian polynomials, and let $\{r_k\}_k\subset \P$ be such that $[r_k,x_i]\in I(\mathbf{h})$ for all $i,k$. Suppose that there exist $n$-tuples of symmetric polynomials in $2n$ variables $P_0(x,z),P_1(x,z),\ldots ,P_{m'}(x,z)$, linear in the $z$ variables, such that
\begin{subequations}\label{alg_LICQ_crit}
\begin{align}
&\sum_{j=0}^{m'}P_j(x,z)-z\in ({\Pnull})^{n},\\ 
&\nabla_xh_j(P_{k}(x,z))\in \Pnull,\quad\forall k\not=j,\\ 
&\exists \beta^+,\beta^-: (P_{j}(x,z))_k-\sum_{l}\beta^+_{jkl}(x)\nabla_xh_j(z)(\beta^-_{jkl}(x))^*-\beta^-_{jkl}(x)\nabla_xh_j(z)(\beta^+_{jkl}(x))^*\in \Pnull,\nonumber\\
&\phantom{\exists \beta^+,\beta^-: }\ j=1,\ldots ,m', \quad k =1,\ldots ,n,
\end{align}
\end{subequations}
where $\Pnull$ is the set of polynomials $q(x,z)$, linear in $z=(z_1,\ldots ,z_n)$, of the form
\begin{equation}
q(x,z)=\sum_{j,l}p^+_{jl}(x,z)h_j(x)p^-_{jl}(x,z)+\sum_{k,l,i}q_{ikl}^+(x)[r_k(x),z_i]q_{ikl}^-(x).
\label{def_p_null}
\end{equation}
Then, the set of equality constraints $\{h_j(x)=0\}_j$ has linearly independent gradients.
\end{prop}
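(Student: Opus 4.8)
The plan is to show that the tuples $P_0,\ldots ,P_{m'}$ furnished by the hypothesis already satisfy the definition of linearly independent gradients, i.e. conditions \eqref{alg_LICQ}. The only gap to bridge is that the hypothesis provides membership in the smaller set $\Pnull$, whereas \eqref{alg_LICQ} asks for membership in the ideal $\PP$ generated by $\{h_j\}$; so the crux is to prove the inclusion $\Pnull\subseteq\PP$ after substituting $z=p$ for an arbitrary tuple $p\in\P^{\times n}$. (Here one should read $\Pnull$ as a set of polynomials in the formal variables $x,z$, linear in $z$; ``substituting $z=p(x)$'' means replacing each $z_i$ by the polynomial $p_i(x)$, which lands us in $\P^{\times n}$ or $\P$ as appropriate.)

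First I would dispatch the $h_j$-part of the decomposition \eqref{def_p_null}: any term $p^+_{jl}(x,z)h_j(x)p^-_{jl}(x,z)$, upon setting $z=p(x)$, becomes $\tilde p^+_{jl}(x)h_j(x)\tilde p^-_{jl}(x)$, which manifestly lies in $\PP$. So the content of the argument is the second family of terms: I must show that, for each fixed $i,k$ and arbitrary $p\in\P^{\times n}$, the commutator $[r_k(x),p_i(x)]$ lies in $\PP$. This is exactly where the hypothesis $[r_k,x_i]\in\PP$ for all $i,k$ enters. The step is a Leibniz/telescoping argument: $\PP$ is a two-sided ideal, so for monomials one expands $[r_k,x_{i_1}x_{i_2}\cdots x_{i_d}]=\sum_{a=1}^{d} x_{i_1}\cdots x_{i_{a-1}}[r_k,x_{i_a}]x_{i_{a+1}}\cdots x_{i_d}$ and each summand lies in $\PP$ because the middle factor does; then extend by bilinearity in both arguments to get $[r_k,p_i]\in\PP$ for any polynomial $p_i$, and sandwich by $q^{+}_{ikl},q^{-}_{ikl}$ using again that $\PP$ is an ideal. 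Hence every generator of $\Pnull$ maps into $\PP$ under $z\mapsto p(x)$, and since $\Pnull$ (with $z$ specialized) is just the $\P$-linear span of such generators, we conclude $\Pnull\big|_{z=p}\subseteq\PP$, and componentwise likewise $(\Pnull)^{\times n}\big|_{z=p}\subseteq\PP^{\times n}$.

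With this inclusion in hand the verification is immediate. Condition \eqref{alg_LICQ_crit}(a) gives $\sum_j P_j(x,p)-p\in(\Pnull)^{\times n}$, hence $\in\PP^{\times n}$, which is \eqref{completeness_LI}. Condition \eqref{alg_LICQ_crit}(b) gives $\nabla_x h_j(P_k(x,p))\in\Pnull\subseteq\PP$ for $k\neq j$, which is \eqref{anni_LI}; note one should first observe that $\nabla_x h_j$ applied to a tuple of polynomials in $x,z$ (linear in $z$) is again of that form, so the specialization makes sense. Finally condition \eqref{alg_LICQ_crit}(c), after substituting $z=p$ and again using $\Pnull\subseteq\PP$, yields precisely \eqref{charact_LI} with the same multiplier polynomials $\beta^{\pm}_{jkl}$. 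So all three defining conditions hold and the set $\{h_j=0\}_j$ has linearly independent gradients.

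I expect the only genuinely delicate point to be the commutator lemma $[r_k,p_i]\in\PP$: one must be careful that $\PP$ is a \emph{two-sided} ideal (it is, by definition \eqref{eq:idealJ}) and that the telescoping expansion is done at the level of noncommutative monomials before linear extension; everything else is bookkeeping about specializing the auxiliary variables $z$ and observing that $\nabla_x h_j$, being linear in its argument, preserves the relevant $z$-linear structure.
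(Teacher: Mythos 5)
Your proof is correct and amounts to carefully filling in what the paper dismisses with ``The proof is obvious.'' The one non-trivial step you correctly isolate is the inclusion $\Pnull\big|_{z=p}\subseteq\PP$, reduced via the Leibniz expansion of $[r_k,\cdot]$ on monomials to the hypothesis $[r_k,x_i]\in\PP$ together with the fact that $\PP$ is a two-sided ideal; once that is in place, each condition of \eqref{alg_LICQ_crit} specializes under $z\mapsto p$ to the corresponding condition of \eqref{alg_LICQ}, exactly as you argue.
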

The proof is obvious.

\begin{eg}
Let $\{h_j(x)=0\}_{j=1}^{m'}$ be of the form
\begin{equation}
h_j(x)=\sum_k\nu_{jk}x_k+b_j,    
\end{equation}
with $\{\nu_{jk}\}_{j,k}\cup\{b_j\}\subset \R$. If the matrix $\nu$ has linearly independent rows, then the gradients of $\{h_j\}_j$ are linearly independent. Indeed, let $\nu$ have rank $m'$. Then, there exist vectors $\{v^j\}_j\subset \R^{n}$ such that $\sum_kv^j_k\cdot\nu_{lk}=\delta_{jl}$. 
Define thus
\begin{equation}
\begin{split}
(P_j(x,z))_k&:= v^j_k\sum_l\nu_{jl}z_l,\quad j=1,\ldots ,m',\\
P_0(x,z)&:=z-\sum_jP_j(x,z).    
\end{split}
\end{equation}
The newly defined $\{P_j\}_j$ satisfy Eqs.~\eqref{alg_LICQ_crit}. That is: for linear constraints, the commutative and non-commutative definitions of gradient linear independence coincide.
\end{eg}

\begin{eg}
\label{example_dichotomic}
Consider the equality constraints $\{h_j(x)=0\}_{j=1}^{n}$ for
\begin{equation}
h_j(x):=x_j^2-1,\quad j=1,\ldots ,n.
\end{equation}
This system also satisfies gradient linear independence. Indeed, define
\begin{equation}
\begin{aligned}
&(P_j(x,z))_k:=\delta_{jk}\frac{1}{2}(z_j+x_jz_jx_j),\\ 
&(P_0(x,z))_k:=\frac{1}{2}(z_k-x_kz_kx_k).
\end{aligned}
\end{equation}
Then it can be easily verified that the $n$-tuples of polynomials $P_0,\ldots ,P_n$ satisfy conditions \eqref{alg_LICQ_crit}.
\end{eg}

The linear independence constraints are very restrictive: if a relaxed form of the weak ncKKT conditions holds, then so does normed ncKKT. This is proved in the following lemma.

\begin{lemma}
\label{bounded_interm}
Let the constraints $\{g_i(x)\geq 0\}_i\cup\{h_j(x)=0\}_j$ be Archimedean, and let there exist bounded positive functionals $\sigma$, $\{\mu_i\}_i$ satisfying $\sigma(M(\mathbf{g})+I(\h)), \mu_i(M(\mathbf{g})+I(\h))\geq 0$, for all $i$, with $\sigma(1),\,\mu_i(1)\leq K$, for some $K\in\R^+,\, K>1$ and such that, for all $k\in\N$, the system of linear equations
\beq
\begin{aligned}
&\lambda^k_j(I(\mathbf{h})_{2k})=0,\quad j=1,\ldots ,m',\\
&\sum_j\lambda^k_j(\nabla_xh_j(p(x)))=\sigma(\nabla_xf(p(x)))-\sum_i\mu_i(\nabla_xg_i(p(x)),\quad \forall p\in\P^{n},\\
&\hspace{1cm}\deg(t)\leq 2k+1-\deg(f)+1,\;2k+1-\deg(g_i),\quad i=1,\ldots ,m,
\label{linear_system_opt}
\end{aligned}
\eeq
has a solution $\lambda^k$. Furthermore, let the gradients of the constraints $\{h_j=0\}_j$ be linearly independent. Then, the normed ncKKT conditions hold, with $\|\lambda^{\pm}\|_{\SOS}\leq O(K)$.
\end{lemma}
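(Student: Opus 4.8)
The plan is to produce the decomposition $\lambda_j = \lambda_j^+ - \lambda_j^-$ into bounded positive functionals by combining the hypothesized solvability of the linear system \eqref{linear_system_opt} with gradient linear independence, and then deriving the needed bounds via the Jordan decomposition for bounded Hermitian functionals. The key observation is that the tuples $P_0, \ldots, P_{m'}$ furnished by the linear-independence hypothesis let us \emph{solve for} each $\lambda_j$ in terms of $\sigma$ and $\{\mu_i\}_i$ evaluated on the polynomials $\nabla_x h_l(P_j(x,p))$, $\nabla_x g_i(P_j(x,p))$, $\nabla_x f(P_j(x,p))$. Indeed, substituting $t \to P_j(x,p)$ in the balance equation of \eqref{linear_system_opt} and using \eqref{anni_LI} to kill all terms $\lambda_l(\nabla_x h_l(P_j(x,p)))$ with $l \neq j$, we obtain an explicit expression for $\lambda_j^k(\nabla_x h_j(P_j(x,p)))$; then \eqref{charact_LI} says $(P_j)_k$ equals, modulo $\PP$, a weighted sum of products involving $\nabla_x h_j(p)$, which lets us recover $\lambda_j^k$ on a generating set of polynomials. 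Since $\sigma$ and $\{\mu_i\}_i$ are bounded with norm $\leq K$, and the $P_j$, $\beta^\pm$ are \emph{fixed} polynomials independent of $k$, this expression bounds $|\lambda_j^k(p)|$ by $O(K)\|p\|$ uniformly in $k$, where $\|\cdot\|$ is the relevant seminorm controlled by the Archimedean condition (every Hermitian $p$ satisfies $K_p \pm p$ SOS, so $|\lambda_j^k(p)| \lesssim K_p \cdot O(K)$).

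Next I would pass to a limit. Having uniform bounds on $\lambda_j^k$ over each finite-dimensional piece $\P_{2k'}$, I renormalize by a fixed linear invertible map so that the values on all monomials are bounded by $1$, invoke Banach--Alaoglu to extract a weak-$*$ convergent subsequence, and call the limit $\lambda_j : \P \to \C$. By construction $\lambda_j$ is Hermitian, satisfies $\lambda_j(s^+ h_l s^-) = 0$ for all $l$ (this is a closed condition preserved in the limit), satisfies the balance equation \eqref{normed_operator_optimality} (also closed), and is bounded: $\|\lambda_j\|_{\SOS}^* \leq O(K)$. This is exactly the mechanism already used in the proof of Theorem~\ref{theo_null_epsilon}, so I would cite that argument rather than repeat it.

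The final step is the Jordan decomposition. A bounded Hermitian functional on the (completion of the) free $*$-algebra under $\|\cdot\|_{\SOS}$ extends to a bounded Hermitian functional on the enveloping $C^*$-algebra, and by \cite[Theorem 3.3.10]{MURPHY1990} decomposes as $\lambda_j = \lambda_j^+ - \lambda_j^-$ with $\lambda_j^\pm$ positive and $\|\lambda_j^+\| + \|\lambda_j^-\| = \|\lambda_j\| \leq O(K)$. Restricting back to $\P$, the $\lambda_j^\pm$ are positive linear functionals; the compatibility with $\{g_i \geq 0\}_i \cup \{h_l = 0\}_l$ of $\lambda_j^\pm$ follows because compatibility is a property of the GNS representation of $\lambda_j^+ + \lambda_j^-$, which is a bounded positive functional annihilating the ideal generated by the $h_l$ and positive on the quadratic module --- equivalently, one checks $\lambda_j^\pm(p g_i p^*) \geq 0$ and $\lambda_j^\pm(s^+ h_l s^-) = 0$ directly from the structure of the GNS construction applied to $\lambda_j^+ + \lambda_j^-$. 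Together with the $\{\mu_i\}_i$ supplied by hypothesis (already positive, bounded, compatible, and satisfying complementary slackness $\mu_i(g_i) = 0$), this yields all the requirements of Definition~\ref{def:normed_ncKKT}.

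\textbf{Main obstacle.} The delicate point is the uniformity of the bound across the hierarchy: the expression for $\lambda_j^k$ obtained by substituting $P_j(x,p)$ into the balance equation only determines $\lambda_j^k$ on polynomials of the form $\nabla_x h_j(P_j(x,p))$, and one must argue --- via \eqref{charact_LI} and the Archimedean property --- that this actually pins down $\lambda_j^k$ on a spanning set with a $k$-independent constant. A subtlety is that the degree of $P_j(x,p)$ exceeds that of $p$, so one needs the balance equation to hold at a sufficiently high level; this is why the lemma's hypothesis states the solvability of \eqref{linear_system_opt} for \emph{all} $k$, and one simply works at level $k$ large enough relative to the target degree $k'$ and the fixed degrees of $P_j, \beta^\pm$. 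Verifying that compatibility with the inequality constraints survives both the weak-$*$ limit and the Jordan splitting --- rather than merely positivity --- is the second point requiring care, and is handled by noting that $\lambda_j^+ + \lambda_j^-$ dominates $\pm \lambda_j$ and inherits compatibility, whence so do its positive and negative parts.
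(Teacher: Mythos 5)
Your proposal identifies the right ingredients — the tuples $P_j$ and coefficients $\beta^\pm_{jkl}$ from the linear-independence hypothesis, the boundedness of $\sigma$ and $\mu_i$, and the need to decompose $\lambda_j$ into positive parts — but the route you take through Banach--Alaoglu and the abstract Jordan decomposition has a genuine gap, and the paper's proof sidesteps it entirely by being constructive.

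\textbf{The gap.} Substituting $t=P_j(x,p)$ into the balance equation of \eqref{linear_system_opt} and using \eqref{anni_LI}, \eqref{completeness_LI} only pins down $\lambda^k_j$ on the subspace $\{\nabla_x h_j(p) : p\in\P^{\times n}\}$ modulo $\PP_{2k}$; the image of $\nabla_x h_j$ is not a generating set for $\P$ (e.g.\ for $h_j = x_j^2 - 1$, the image consists of anticommutators $\{x_j,p_j\}$, which do not span $\P$ modulo the ideal). On the complement of this subspace the linear system places no constraint whatsoever, so an arbitrary solution $\lambda^k_j$ of \eqref{linear_system_opt} need not satisfy the bound $|\lambda^k_j(p)| \lesssim O(K)\|p\|_{\SOS}$ for general $p$. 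If you instead set $\lambda^k_j$ to zero on some chosen complement, there is no reason the resulting functional is $\|\cdot\|_{\SOS}$-bounded: a Hermitian polynomial $q$ can have $\|q\|_{\SOS}$ small while its projection onto the determined subspace has large $\SOS$ norm. Coefficient-wise boundedness (which is what Banach--Alaoglu requires and delivers) does not imply $\SOS$-boundedness, and $\SOS$-boundedness is exactly what the subsequent Jordan-decomposition step needs. So the weak-$*$ limit you extract is not guaranteed to extend to the enveloping $C^*$-algebra, and the Jordan decomposition may fail to apply.

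\textbf{How the paper closes it.} Rather than trying to recover $\lambda_j$ from a sequence of linear-system solutions, the paper \emph{defines} $\lambda_j$ on all of $\P$ by the explicit formula $\lambda_j(p) := \sigma\bigl(\nabla_x f\bigl((\sum_l\beta^+_{jkl}p(\beta^{-}_{jkl})^*+\beta^-_{jkl}p(\beta^{+}_{jkl})^*)_k\bigr)\bigr) - \sum_i\mu_i(\cdots)$, i.e.\ it evaluates the bounded positive functionals $\sigma,\mu_i$ on terms of the form $\sum_l r_l p t_l^* + t_l p r_l^*$. This makes $\lambda_j$ manifestly $\SOS$-bounded, independent of $k$ (no limit needed), and gives an \emph{algebraic} positive decomposition for free via the identity $\omega(rpt^*+tpr^*) = \frac{1}{2}\omega((r+t)p(r+t)^*) - \frac{1}{2}\omega((r-t)p(r-t)^*)$, which transfers compatibility with $\{g_i\geq 0\}_i\cup\{h_l=0\}_l$ automatically. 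The role of the linear system in the paper's proof is only to establish the identity \eqref{suff_P_0} (the $P_0$ direction), after which the linear-independence relations \eqref{completeness_LI}--\eqref{charact_LI} verify that the explicitly defined $\lambda_j$ solves the balance equation. No Banach--Alaoglu, no Murphy's Jordan theorem. To salvage your plan you would need to define a fixed, $k$-independent extension of the $\lambda^k_j$ from the determined subspace to all of $\P$ with a controlled $\SOS$ norm — at which point you have essentially reconstructed the paper's formula and the limiting argument is vacuous.
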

\begin{proof}
By Eq.~\eqref{anni_LI}, if we set $p=P_0(x,p')$ in Eq.~\eqref{linear_system_opt}, we arrive at
\begin{equation}
\sigma(\nabla_xf(P_0(x,p')))-\sum_i\mu_i(\nabla_xg_i(P_0(x,p')))=0,
\label{suff_P_0}
\end{equation}
provided that $p'$ has sufficiently small degree (so that all relevant intermediate polynomials have degree equal to or below $2k$). Taking the limit $k\to\infty$, it follows that the relation above holds for all $p'\in\P$.

Now, consider the linear functional $\lambda_j:\P\to\C$, defined by
\beq
\begin{split}
\lambda_j(p)& :=\sigma\left(\nabla_xf\left(\left(\sum_l\beta^+_{jkl}p(\beta^{-}_{jkl})^*+\beta^-_{jkl}p(\beta^{+}_{jkl})^*\right)_k\right)\right)\\
&\phantom{:=}-\sum_i \mu_i\left(\nabla_xg_i\left(\left(\sum_l\beta^+_{jkl}p(\beta^{-}_{jkl})^*+\beta^-_{jkl}p(\beta^{+}_{jkl})^*\right)_k\right)\right),
\label{def_lambdas}
\end{split}
\eeq
where the polynomials $\{\beta^{\pm}_{jkl}\}_{jkl}$ are the ones appearing in Eq.~\eqref{charact_LI}.

Note that, for $r,t\in\P$ and any positive functional $\omega$, the functional
\begin{equation}
\lambda(p):=\omega(rpt^*+ tpr^*)
\end{equation}
equals the difference between two positive functionals, namely,
\begin{equation}
\lambda(p)=\omega^+(p)-\omega^-(p),
\end{equation}
for
\begin{equation}
\omega^{\pm}(p):=\frac{1}{2}\omega((r\pm t)p(r\pm t)^*).
\end{equation}
Now, we expand the arguments of the functionals on the right-hand side of \eqref{def_lambdas}, as polynomials of the form $\sum_is_ipt^*_i+t_ips^*_i$. Since $\sigma$, 
$\{\mu_i\}$ are bounded positive functionals, we find that the right-hand side of Eq.~\eqref{def_lambdas} can be decomposed as a finite sum of differences of positive functionals. Thus,
\begin{equation}
\lambda_j=\lambda_j^+-\lambda_j^-,    
\end{equation}
where the positive functionals $\lambda_j^{\pm}$ inherit from $\sigma$, $\{\mu_i\}_i$ the property of being non-negative on $M(\mathbf{g})+I(\mathbf{h})$. The values $\lambda_j^\pm(1)$ correspond to expressions of the form
\begin{equation}
\lambda^{\pm}_j(1)=\sigma(s^\pm)+\sum_i \mu_i(s^\pm_i),
\end{equation}
where $s^\pm, s_i^\pm$ are SOS. By the Archimedean condition, there exists $\nu\in\R^+$ such that the polynomials $\{\nu-s^{\pm}\}\cup\{\nu-s^{\pm}_i\}_i$ are SOS. Together with the constraints $\sigma(1),\mu_i(1)\leq K$, it follows that $\lambda^{\pm}(1)\leq K(m+1)\nu$.

It remains to be seen that the newly defined $\lambda$'s satisfy condition \eqref{normed_operator_optimality}. Taking $p=P_0(x,p')$ in \eqref{normed_operator_optimality} and invoking Eq.~\eqref{anni_LI} and Eq.~\eqref{suff_P_0} we have that Eq.~\eqref{normed_operator_optimality} is satisfied as long as $p$ is of the form $p=P_0(x,p')$. 

Next, set $p=P_j(x,p')$ in Eq.~\eqref{normed_operator_optimality}, for some $j\in\{1,\ldots ,m'\}$, $p'\in \P^{n}$. By Eq.~\eqref{anni_LI}, all terms of the form $\lambda_k(\nabla_xh_k(P_j(x,p')))$ with $k\not=j$, drop. Furthermore, by Eqs.~\eqref{completeness_LI}, \eqref{anni_LI}, we have that
\begin{equation}
\lambda_j\left(\nabla_xh_j(p')\right)=\sum_k\lambda_j\left(\nabla_xh_j(P_k(x,p'))\right)=\lambda_j\left(\nabla_xh_j(P_j(x,p'))\right).
\end{equation}
It follows that
\begin{align}
\lambda_j(\nabla_xh_j(P_j(x,p')))&=\lambda_j(\nabla_xh_j(p'))\nonumber\\
&=\sigma\left(\nabla_xf\left(\sum_l\beta^+_{jkl}\nabla_xh_j(p')\beta^{-}_{jkl}\right)\right)-\sum_i \mu_i\left(\nabla_xg_i\left(\sum_l\beta^+_{jkl}\nabla_xh_j(p')\beta^{-}_{jkl}\right)\right)\nonumber\\
&=\sigma(\nabla_xf(P_j(x,p')))-\sum_i\mu_i(\nabla_xg_i(P_j(x,p'))),
 \end{align}
where the last equality is a consequence of Eq.~\eqref{charact_LI}. Hence, Eq.~\eqref{normed_operator_optimality} holds for $p=P_j(x,p')$, for $j=0,\ldots,m'$.

By Eq.~\eqref{completeness_LI}, any tuple of polynomials $p'$ can be expressed as a sum of terms of the form $\{P_k(x,p')\}_k$ (modulo elements of $I(\mathbf{h})$), and so Eq.~\eqref{normed_operator_optimality} holds in general.
\end{proof}

\subsubsection{Non-commutative Mangasarian--Fromovitz Constraint Qualification}
\label{sec:MF}
We are ready to present a non-commutative, algebraic version of the Mangasarian--Fromovitz constraint qualification.
\begin{defin}[ncMFCQ]
Consider an NPO Problem \eqref{nc_prob}. We say that the problem satisfies {\em non-commutative Mangasarian--Fromovitz constraint qualification (ncMFCQ)} if, on one hand, the equality constraints have linearly independent gradients and, on the other hand, for $i=1,\ldots ,m$, there exist $r\in\R^+$ and a tuple of Hermitian polynomials $q\in\P^{n}$ such that 
\begin{subequations}
\begin{align}
&\nabla_xg_i(q)+s_ig_i+g_is_i^*-r\in M(\mathbf{g})+I(\mathbf{h}),\quad i=1,\ldots ,m,\label{MF_cons}\\
&\nabla_xh_j(q)\in I(\mathbf{h}),\quad j=1,\ldots ,m'\label{anni_MF},
\end{align}
\end{subequations}
for some polynomials $\{s_i:i=1,\ldots ,m\}\subset \P$.
\end{defin}

\begin{remark}
Note the similarities between ncMFCQ and its classical analog MFCQ (Eqs.~\eqref{MF_classical_prelud}, \eqref{MF_classical}). For starters, both qualifications demand linear independence of the gradients of the equality constraints. In addition, Eq.~\eqref{anni_MF} is clearly the algebraic analog of the second line of Eq.~\eqref{MF_classical}. It remains to justify Eq.~\eqref{MF_cons}. As we argue in Appendix \ref{app:heuristic}, the active constraints associated to the inequality $g_i(x)\geq 0$ are $\{\omega(g_i(X))\geq 0:\omega\geq0,\omega(g_i(X^\star))=0\}$. Now, let $\omega$ be a %normalized 
state satisfying $\omega(g_i(X^\star))=0$ and let Eq.~\eqref{MF_cons} hold. Then, we have that
\begin{equation}
\frac{d}{dt}\omega(g_i(X^\star+t q(X^\star)))\Big|_{t=0}=\omega(\nabla_xg_i(q(X^\star)))\geq \omega(-s_ig_i-g_is_i^*+r)=r>0.
\end{equation}
In other words, small variations of the optimal operators $X^\star$ in the $q(X^\star)$ direction are positive. Condition \eqref{MF_cons} is therefore the algebraic, non-commutative analog of the first line in \eqref{MF_classical}. Notice that, contrarily to the literature on classical polynomial optimization, we demand the conditions to hold for all feasible points, not just the optimal ones.
\end{remark}

\begin{eg}
\label{ex:normed_family}
Let $x=(y_1,\ldots ,y_c,z_1,\ldots ,z_d)$, and let $\{h_j(y)=0\}_j$ be a set of equality constraints with linearly independent gradients. Let the remaining constraints be inequalities of the form:
\begin{equation}
g_i(x)=1+\tilde{g}_i(x)\geq 0,i=1,\ldots ,m,
\end{equation}
where $\tilde{g}_i(x)$ is homogeneous in $z=(z_1,\ldots ,z_d)$ with degree $r_i>0$. 

The constraints $\{h_j(y)=0\}_j\cup\{g_i(x)\geq 0\}_i$ satisfy ncMFCQ. Indeed, choose $(q(x))_k=0$, for $k=1,\ldots ,c$, and $(q(x))_{k+c}=-z_k$, for $k=1,\ldots ,d$. Then, on one hand, $\nabla_xh_j(q)=0$, for all $j$, i.e., Eq.~\eqref{anni_MF} holds. On the other hand, $\nabla_xg_i(x)(q)=-r_i\tilde{g}_i(x)$, and so
\begin{equation}
r_ig_i(x)+\nabla_xg_i(x)(q)-r_i=0.
\end{equation}
That is, Eq.~\eqref{MF_cons} holds for $r:=\min_i r_i$. Since, by assumption, the equality constraints have linearly independent gradients, the three conditions defining ncMFCQ are satisfied.
\end{eg}

The next theorem is the non-commutative generalization of a celebrated result in classical optimization, which states the validity of the KKT conditions under Mangasarian--Fromovitz constraint qualification \cite{Nocedal2006}.

\begin{theo}
\label{theo_MF}
Consider an NPO Problem \eqref{nc_prob} that satisfies both the Archimedean condition and ncMFCQ. Then, the problem satisfies the normed ncKKT conditions.
\end{theo}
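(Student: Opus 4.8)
The plan is to obtain Theorem~\ref{theo_MF} as a short composition of Theorem~\ref{theo_null_epsilon} and Lemma~\ref{bounded_interm}. Fix any solution $\sigma^\star$ of Problem~\eqref{nc_prob} (if none exists the statement is vacuous); since the problem is Archimedean, $\sigma^\star$ is automatically bounded, and it is positive, normalized ($\sigma^\star(1)=1$) and compatible with all the constraints. The first point to notice is that the inequality-side condition in ncMFCQ is exactly the hypothesis~\eqref{MF} of Theorem~\ref{theo_null_epsilon} with the index set $I=\{1,\ldots ,m\}$: equation~\eqref{MF_cons} is the ``$i\in I$'' line of~\eqref{MF} when \emph{every} inequality constraint is placed in $I$ (so the ``$i\notin I$'' line becomes vacuous), and~\eqref{anni_MF} is the third line of~\eqref{MF}.

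Applying Theorem~\ref{theo_null_epsilon} with $I=\{1,\ldots ,m\}$ then produces positive linear functionals $\{\mu_i:\P\to\C\}_{i=1}^m$, each compatible with $\{g_k\geq 0\}_k\cup\{h_l= 0\}_l$, together with, for every $k\in\N$, Hermitian functionals $\{\lambda^k_j:\P_{2k}\to\C\}_j$ that annihilate the equality constraints up to the relevant degree and satisfy the balance relation of~\eqref{weaksential_KKT}. Since $I$ is now the full index set, that relation has no $\mu^k_i$ terms left, so it reads
\begin{equation}
\sum_j\lambda^k_j\!\left(\nabla_x h_j(t)\right)=\sigma^\star\!\left(\nabla_x f(t)\right)-\sum_{i=1}^m\mu_i\!\left(\nabla_x g_i(t)\right)
\end{equation}
for all $t\in\P^{\times n}$ of degree compatible with level $k$; that is, the linear system~\eqref{linear_system_opt} of Lemma~\ref{bounded_interm} is solvable for every $k$. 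Moreover each $\mu_i$ is an everywhere-defined positive linear functional, so $\mu_i(1)$ is a finite nonnegative real, and we may fix $K\in\R^+$ with $K>1$, $\mu_i(1)\leq K$ for all $i$, and $\sigma^\star(1)=1\leq K$.

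We now invoke Lemma~\ref{bounded_interm} with $\sigma:=\sigma^\star$, the multipliers $\{\mu_i\}_i$ just constructed, this $K$, and the solutions $\{\lambda^k\}_k$. Its hypotheses are met: the constraints are Archimedean; $\sigma$ and the $\mu_i$ are bounded positive functionals compatible with the constraints with $\sigma(1),\mu_i(1)\leq K$; the system~\eqref{linear_system_opt} is solvable for all $k$; and---by the other half of ncMFCQ---the gradients of $\{h_j= 0\}_j$ are linearly independent. The lemma then yields exactly the normed ncKKT conditions (indeed with $\|\lambda^{\pm}\|_{\SOS}\leq O(K)$), which is the claim.

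The argument is therefore essentially bookkeeping: the three bundles of hypotheses of Lemma~\ref{bounded_interm} are matched to the three outputs of ncMFCQ together with Theorem~\ref{theo_null_epsilon}. The one place that needs care---and which I expect to be the main (still minor) obstacle---is the degree matching between levels: Lemma~\ref{bounded_interm}'s linear system~\eqref{linear_system_opt} at level $k$ involves polynomials of degree up to roughly $2k+1$, whereas Theorem~\ref{theo_null_epsilon} delivers $\lambda^k$ tailored to $\P_{2k}$. This is resolved simply by running Theorem~\ref{theo_null_epsilon} at a sufficiently large level and restricting the resulting functionals, which is legitimate precisely because the $\mu_i$ it produces are \emph{level-independent}. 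Everything else---positivity and compatibility of the limiting $\mu_i$, finiteness of $\mu_i(1)$---is already built into the statement and proof of Theorem~\ref{theo_null_epsilon}, so no new compactness or algebraic input is needed.
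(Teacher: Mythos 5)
Your proof is correct and follows essentially the same route as the paper's: identify ncMFCQ with the hypotheses of Theorem~\ref{theo_null_epsilon} for $I=\{1,\ldots,m\}$, and then feed the resulting $\{\mu_i\}_i$ and $\{\lambda^k_j\}_j$ into Lemma~\ref{bounded_interm}. Your explicit remarks on fixing the uniform bound $K$ and on the degree bookkeeping between levels are points the paper leaves implicit, but they do not change the argument.
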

\begin{proof}
Eq.~\eqref{MF_cons} and \eqref{anni_MF} represent the conditions of Theorem \ref{theo_null_epsilon} for $I=\{1,\ldots ,m\}$. Thus, there exist positive linear functionals $\{\mu_i\}_{i=1}^m$, compatible with the constraints of the problem $\{g_i(x)\geq0\}_i\cup\{h_j(x)=0\}_j$, such that, for all $k$, the system of linear equations
\beq
\begin{aligned}
&\lambda^k_j(s^+h_ks^-)=0,\quad j,k=1,\ldots ,m',\;\deg(s^+h_ks^-)\leq 2k,\\
&\sum_j\lambda^k_j(\nabla_xh_j(p(x)))=\sigma^\star(\nabla_xf(p(x)))-\sum_i\mu_i(\nabla_xg_i(p(x)),\quad \forall p\in\P^{n},
\end{aligned}
\eeq
where the degree of $p$ is appropriately bounded, has a solution $\lambda^k$. The conditions of Lemma \ref{bounded_interm} are, therefore, met. Hence, the problem satisfies the normed ncKKT conditions.
\end{proof}

Like its classical counterpart, ncMFCQ allows one to bound the norm of the Lagrange multipliers $\{\mu_i\}_i$, $\{\lambda_j\}_j$. This implies that each of the SDP relaxations of the normed KKT conditions corresponds to a bounded optimization problem. 

\begin{lemma}
\label{lemma_MF_bounded}
Assume that Problem \eqref{nc_prob} is Archimedean and satisfies ncMFCQ (and thus normed ncKKT). Then we can, without loss of generality, bound the state multipliers $\{\mu_i\}_i$ and $\{\lambda^{\pm}_j\}_j$ in Eq.~\eqref{eq:normed_ncKKT}. That is, we can find $K\in\R^+$ such that
\begin{align}
&\mu_i(1),\lambda^{\pm}_j(1)\leq K,\quad \forall i,j.
 \end{align}
\end{lemma}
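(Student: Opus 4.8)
The plan is to leverage the bounds on Lagrange multipliers that are already implicit in the constructions of Theorems~\ref{theo_null_epsilon} and~\ref{theo_MF}. The idea is that ncMFCQ forces a quantitative separation, which translates directly into a priori bounds.

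\textbf{Bounding the $\mu_i$.} First I would recall the Remark following Theorem~\ref{theo_null_epsilon}: taking $p=q$ (the Hermitian tuple from ncMFCQ) in the operator optimality relation~\eqref{normed_operator_optimality}, and using that the polynomials $\nabla_xg_i(q)+s_ig_i+g_is_i^*-r$ are SOS (so that $\mu_i(\nabla_xg_i(q)+s_ig_i+g_is_i^*)\geq r\,\mu_i(1)$) together with $\mu_i(s_ig_i)=\mu_i(g_is_i^*)=0$ (complementary slackness, $\mu_i$ being positive and compatible with $g_i(x)\geq0$), one gets
\begin{equation}
\sigma^\star(\nabla_xf(q))=\sum_i\mu_i(\nabla_xg_i(q))=\sum_i\mu_i\big(\nabla_xg_i(q)+s_ig_i+g_is_i^*\big)\geq r\sum_i\mu_i(1).
\end{equation}
Since the problem is Archimedean, $\nabla_xf(q)$ is a Hermitian polynomial dominated by $K_0\pm\nabla_xf(q)$ SOS for some $K_0\in\R^+$, so $\sigma^\star(\nabla_xf(q))\leq K_0\,\sigma^\star(1)=K_0$. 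As each $\mu_i(1)\geq0$, we conclude $\mu_i(1)\leq K_0/r$ for every $i$, giving a uniform bound on the $\mu_i$.

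\textbf{Bounding the $\lambda^{\pm}_j$.} For the equality multipliers I would invoke the explicit construction of $\lambda_j=\lambda_j^+-\lambda_j^-$ in the proof of Lemma~\ref{bounded_interm}: there the functionals $\lambda_j^{\pm}$ are built out of $\sigma^\star$ and the $\{\mu_i\}_i$ via the polynomials $\{\beta^{\pm}_{jkl}\}$ appearing in the linear-independence condition~\eqref{charact_LI}, and the proof already shows $\lambda_j^{\pm}(1)=\sigma^\star(s^{\pm})+\sum_i\mu_i(s^{\pm}_i)$ for suitable SOS polynomials $s^{\pm},s^{\pm}_i$. By the Archimedean property each such SOS is dominated by a constant $\nu$ (i.e.\ $\nu-s^\pm$, $\nu-s^\pm_i$ are SOS), so $\lambda_j^{\pm}(1)\leq\nu\big(\sigma^\star(1)+\sum_i\mu_i(1)\big)\leq\nu\big(1+m K_0/r\big)$. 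Taking $K$ to be the maximum of $K_0/r$ and this last quantity (and, say, $1$) yields the claimed uniform bound $\mu_i(1),\lambda^{\pm}_j(1)\leq K$ for all $i,j$; the phrase ``without loss of generality'' is justified because Theorem~\ref{theo_MF} already produces a witnessing family of multipliers satisfying normed ncKKT, and the estimates above apply to (this particular choice of) that family.

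\textbf{Main obstacle.} The only delicate point is verifying that the $\beta^{\pm}_{jkl}$-dependent expressions in~\eqref{def_lambdas} really do decompose as a \emph{finite} sum of the form $\sum s_i\,p\,t_i^*+t_i\,p\,s_i^*$ with the $s_i,t_i$ fixed polynomials (independent of the solution $\sigma^\star$), so that the resulting $\lambda_j^\pm(1)$ are genuine positive-functional evaluations on SOS polynomials of bounded degree; this is implicit in the proof of Lemma~\ref{bounded_interm} and just needs to be made explicit. Everything else is a direct consequence of the Archimedean domination estimates and the quantitative inequality extracted from ncMFCQ.
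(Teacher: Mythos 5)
Your proof is correct and follows essentially the same route as the paper's: you take $t=q$ in the normed stationarity relation, use the ncMFCQ inequality $\nabla_xg_i(q)+s_ig_i+g_is_i^*-r\ \mathrm{SOS}$ together with complementary slackness to obtain $r\sum_i\mu_i(1)\leq\sigma^\star(\nabla_xf(q))$, then invoke the Archimedean domination of $\nabla_xf(q)$ to bound the $\mu_i$, and finally bound the $\lambda_j^{\pm}$ by tracing through the explicit construction inside Lemma~\ref{bounded_interm}. The paper's proof is terser on the second step (it simply cites Lemma~\ref{bounded_interm} and notes the resulting $O(\eta)$ bound), whereas you unpack the $\beta^{\pm}_{jkl}$-based decomposition; your closing caveat about the decomposition being a finite sum of fixed degree is indeed already implicit in the statement of Lemma~\ref{bounded_interm} (the $\beta$'s come from the gradient-linear-independence data and do not depend on $\sigma^\star$), so there is no real gap.
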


\begin{proof}
{By Remark \ref{remark:bounded_mu}, we have that $\sum_i\mu_i(1)\leq \frac{1}{r}\sigma^\star(\nabla_xf(q))$. In turn, provided that the original NPO problem is Archimedean, there exists $\eta\in \R^+$ such that 
\begin{equation}
\eta-\nabla_xf(q)
\end{equation}
admits an SOS decomposition. Hence, for each $i$, $\mu_i(1)\leq \frac{\eta}{r}$.

The set of linear functionals $\{\lambda_j\}_j$, restricted to the set of polynomials of degree $2k$ or smaller, is a solution of Eq.~\eqref{linear_system_opt}. The conditions of Lemma \ref{bounded_interm} are met, and thus there exist bounded linear functionals $\{\tilde{\lambda}^{\pm}_j:\P\to\C\}_j$, with norms bounded by $O(\frac{\eta}{r})$, compatible with the problem constraints and satisfying Eq.~\eqref{normed_operator_optimality}.}
\end{proof}

\subsection{Strong KKT conditions}
\label{sec:strong}
% In this section, we will provide two simple conditions that ensure the validity of the strong KKT conditions \eqref{def:strong_ncKKT}.

It is observed in practice that many natural NPO problems admit an exact SOS resolution, see, e.g., \cite{PNA2010}. Namely, for some $k\in\N$, the $k^{th}$ level of the hierarchy of SDP relaxations \eqref{k_dual} has a (feasible) maximizer achieving the exact solution of the problem. The next theorem shows that, in any such scenario, strong ncKKT holds.

\begin{theo}\label{thm:k->strong}
Consider the NPO Problem \eqref{nc_prob_hilbert}, and let $f-p^\star \in M(\g)+I(\h)$. Then, Problem \eqref{nc_prob_hilbert} satisfies strong ncKKT.

%Let Problem \eqref{nc_prob} be Archimedean, and let $f-p^\star$ be SOS. Then, Problem \eqref{nc_prob} admits the strong ncKKT conditions.
\end{theo}

\begin{proof}
Let $(\H^\star, X^\star,\psi^\star)$ be a  minimizer of Problem \eqref{nc_prob_hilbert}. For simplicity, in the following we use its abstract, functional form $\sigma^\star:\P\to\C$, namely:
\begin{equation}
\sigma^\star(p)=\psi^\star(p(X^\star)).
\label{sigma2psi}
\end{equation}

By the hypotheses of the theorem, there exist polynomials $s_l, s_{il}, s^+_{jl},s^-_{jl}$ such that
\begin{equation}
f-p^\star =\sum_{l}s_ls_l^*+\sum_{i,l}s_{il}g_is_{il}^* +\sum_{j,l}s^+_{jl}h_j(s^-_{jl})^*+s^-_{jl}h_j(s^+_{jl})^*.
\label{SOS2}
\end{equation}
In addition,  
\begin{equation}
\sigma^\star(f-p^\star)=0.
\label{vanishing_obj}
\end{equation}
It follows that
\begin{equation}
\begin{aligned}
\sigma^\star(s_ls_l^*)&=0,\quad \forall l,\\ 
\sigma^\star(s_{il}g_is_{il}^*)&=0,\quad\forall i,l.
\label{vanishing_averages_easy}
\end{aligned}
\end{equation}
These relations, in turn, imply that, for any $q\in \P$, 
\begin{subequations}\label{vanishing_averages}
\begin{align}
&\sigma^\star(s_lq)=\sigma^\star(qs_l^*)=0,\quad\forall l,\\
&\sigma^\star(s_{il}g_iq)=\sigma^\star(qg_is_{il}^*)=0,\quad\forall i,l.
\end{align}
\end{subequations}
Indeed, the first relation follows from the Cauchy-Schwarz inequality or the positive semidefiniteness of the $2\times 2$ matrix 
\begin{equation}
\left(\begin{array}{cc}\sigma^\star(q q^*)&\sigma^\star(q s_{l}^*)\\ \sigma^\star(s_{l} q^*)&\sigma^\star(s_{l} s_{l}^*)\end{array}\right).    
\end{equation}
The second one, from the positive semidefiniteness of 
\begin{equation}
\left(\begin{array}{cc}\sigma^\star(q g_iq^*)&\sigma^\star(q g_i s_{il}^*)\\ \sigma^\star(s_{il} g_i q^*)&\sigma^\star(s_{ij} g_is_{ij}^*)\end{array}\right).    
\end{equation}

Now, for $\delta\in \R$ and an arbitrary vector of Hermitian polynomials $p=(p_i)_{i=1}^n$, let us define a new state through the relation $\sigma^\delta(a):=\sigma^\star(\pi^\delta(a))$, where $\pi^\delta:\P\to\P$ is the homomorphism given by $\pi^\delta(x_i)=x_i+\delta \cdot p_i(x)$. This linear functional $\sigma^\delta$ is indeed a state, since $\sigma^\delta(qq^*)\geq 0$ for all $q\in\P$. However, it does not necessarily satisfy feasibility conditions of the form $\sigma^\delta(qg_iq^*)\geq 0$, $\sigma^\delta(s^+h_js^-)= 0$.

We apply the state $\sigma^\delta$ on both sides of Eq.~\eqref{SOS2}. Taking into account Eqs.~\eqref{vanishing_obj}, \eqref{vanishing_averages}, and the chain rule of differentiation, the result is
\begin{equation}
\delta \sigma^\star(\nabla_x f(p(x)))+O(\delta^2)=\delta\sum_i\mu_i(\nabla_x(g_i(p(x)))+\delta\sum_j\lambda_j(\nabla_x(h_j(p(x)))+O(\delta^2),
\label{intermediario}    
\end{equation}
where $\mu_i$ denotes the positive linear functional given by 
\begin{equation}
\mu_i(q):=\sigma^\star(\sum_ls_{il}qs_{il}^*),
\label{new_states}
\end{equation}
and $\lambda_j$ is the linear functional
\begin{equation}
\lambda_j(q):=\sigma^\star(\sum_ls^+_{jl}q(s_{jl}^-)^*+\sum_ls^-_{jl}q(s_{jl}^+)^*).
\end{equation}
Note that $\lambda_j$ can be expressed as the difference of two positive functionals $\lambda^{\pm}_j$, namely:
\begin{equation}
\lambda^{\pm}_j(q):=\frac{1}{2}\sum_l\sigma^\star((s^+_{jl}\pm s^-_{jl})q(s^+_{jl}\pm s^-_{jl})^*).
\end{equation}

Collecting the terms in Eq.~\eqref{intermediario} that depend linearly on $\delta$, we have that
\begin{equation}
\sigma^\star(\nabla_x f(p(x)))=\sum_i\mu_i(\nabla_x(g_i(p(x)))+\sum_j\lambda_j(\nabla_x(h_j(p(x))).
\label{KKT_pos}
\end{equation}
This is condition \eqref{strong_operator_optimality}.

Finally, the positive linear functionals $\{\mu_i\}_i$ satisfy complementary slackness \eqref{strong_comp_slackness}, for
\begin{equation}
\mu_i(g_i)=\sum_l\sigma^\star(s_{il}g_is_{il}^*)=0,\quad\forall i,
\label{KKT_pos2}
\end{equation}
by Eq.~\eqref{vanishing_averages_easy}.

It only rests to show that $\{\mu_i\}_i$ and $\{\lambda^{\pm}_j\}_j$ can be expressed as 
\begin{equation}
\mu_i(p)=\tilde{\mu}_i(p(X^\star)),\; \lambda_j^{\pm}(p)=\tilde{\lambda}_j^{\pm}(p(X^\star)),
\end{equation}
for some functionals $\{\tilde{\mu}_i:{\A}(X^\star)\to\C\}_i$, $\{\tilde{\lambda}_j^{\pm}:{\A}(X^\star)\to\C\}_j$. This last bit follows from Eq.~\eqref{sigma2psi} and the fact that both sets of functionals are defined in terms of $\sigma^\star$.
\end{proof}

The reader might question the practical use of Theorem \ref{thm:k->strong}. How can one know in advance that a given NPO problem will admit an exact SOS resolution? To answer this question, we need to examine the relation between positive, non-negative and SOS polynomials. 

Given a set of constraints $\{g_i(x)\geq 0\}_i\cup\{h_j(x)=0\}_j$, a Hermitian polynomial $r$ is said to be positive (non-negative) if $r(X)>0$ ($r(X)\geq 0$), for all tuples of operators $X$ satisfying the problem constraints. Think of the polynomial $f-p^\star$, where $p^\star$ is the solution of \eqref{nc_prob_hilbert}. This polynomial is non-negative, but not positive. 

As we pointed out in Section \ref{sec:NPOprob}, if $r$ is SOS, then it is also non-negative. The converse statement, however, does not hold: for some Archimedean constraints, there exist non-negative polynomials $r$ that are not SOS. It is thus not a surprise that some instances of Problem \eqref{nc_prob} do not admit an SOS resolution.

%A theorem by Helton and McCullough \cite{Helton2004} states that, if $r$ is positive, and the set of constraints is Archimedean, then $r$ is SOS. Now, consider the polynomial $f-p^\star$, where $p^\star$ is the solution of \eqref{nc_prob_hilbert}. This polynomial is non-negative, but not positive. The SDP hierarchy \eqref{k_dual} thus converges to the solution of Problem \eqref{nc_prob_hilbert} because, for any $\epsilon>0$, $f-p^\star+\epsilon$ is positive and so it admits an SOS decomposition. In general, non-negative polynomials do not admit an SOS decomposition, and so the SDP hierarchy \eqref{k_dual} might only converge asymptotically.

However, some sets of non-commuting constraints $\{g_i(x)\geq 0\}_i\cup\{h_j(x)=0\}_j$ have the property that any non-negative Hermitian polynomial is SOS. Such sets of constraints are said to generate an \emph{Archimedean closed} quadratic module (or set of SOS polynomials) \cite{ozawa2013connes}. In the following, we provide two families of constraints that are known to generate Archimedean closed quadratic modules.

\subsubsection{Equality constraints with a faithful finite-dimensional $*$-representation}
\begin{prop}
\label{prop:finite_SOS}
Consider a set of equality constraints $\{h_j(x)=0\}_j$ such that, for some finite-dimensional Hilbert space $\H^\star$, there exist a $*$-homomorphism $\pi:\P\to B(\H^\star)$, with $\ker(\pi)=I(\mathbf{h})$. Then, the quadratic module generated by $\{h_j(x)=0\}_j$ is Archimedean closed.
\end{prop}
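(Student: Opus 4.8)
The plan is to show that any Hermitian polynomial $r$ that is non-negative on all tuples of operators $X$ satisfying $h_j(X) = 0$ for all $j$ admits an SOS decomposition of the form \eqref{SOS} (with no $g_i$ terms, since here there are only equality constraints). The key structural input is that, because $\ker(\pi) = \PP$, the quotient $*$-algebra $\P/\PP$ is $*$-isomorphic to the image $\pi(\P) \subseteq B(\H^\star)$, which is a finite-dimensional $C^*$-algebra (a direct sum of matrix algebras). So the whole problem collapses to a statement about positivity in a finite-dimensional matrix algebra, where SOS representations of positive elements are easy to produce.

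First I would set $A := \pi(\P)$, a finite-dimensional $C^*$-algebra. The crucial observation is that the constraints $h_j(x) = 0$ are \emph{universal} for $A$ in the following sense: if $X = (X_1,\ldots,X_n)$ is any tuple of Hermitian operators on any Hilbert space $\H$ with $h_j(X) = 0$ for all $j$, then the map $x_i \mapsto X_i$ factors through $\P/\PP \cong A$, hence extends to a (necessarily contractive, possibly non-unital if one is careless — but here unital) $*$-homomorphism $\rho: A \to B(\H)$. In particular, for the polynomial $r$, the operator $r(X) = \rho(\pi(r))$ is the image of the element $\pi(r) \in A$ under a $*$-homomorphism. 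Since $*$-homomorphisms of $C^*$-algebras preserve and reflect positivity on their image in the relevant sense, and since in particular one may take $\H = \H^\star$, $\rho = \mathrm{id}$, the hypothesis that $r(X) \geq 0$ for all feasible $X$ forces $\pi(r) \geq 0$ as an element of the $C^*$-algebra $A$ (take $X = X^\star$, the tuple with $X_i^\star = \pi(x_i)$).

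Next I would exploit finite-dimensionality: a positive element $a = \pi(r)$ of a finite-dimensional $C^*$-algebra has a square root, so one can write $a = b^* b$ with $b = a^{1/2} \in A$. Lift $b$ to a polynomial $s \in \P$ with $\pi(s) = b$ (possible since $\pi$ is onto $A$). Then $\pi(r - s^* s) = a - b^*b = 0$, i.e. $r - s^* s \in \ker(\pi) = \PP$. Writing out an element of $\PP$ via \eqref{eq:idealJ}, we get $r = s^* s + \sum_{j,l} v_{jl} h_j w_{jl}$. This is almost an SOS decomposition \eqref{SOS}, but the ideal terms $v_{jl} h_j w_{jl}$ are not of the symmetrized form $s^+_{jl} h_j (s^-_{jl})^* + s^-_{jl} h_j (s^+_{jl})^*$. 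To fix this, note that $r$ and $s^* s$ are Hermitian, so $r - s^* s$ is Hermitian; hence $\sum_{j,l} v_{jl} h_j w_{jl} = \tfrac12\big(\sum_{j,l} v_{jl} h_j w_{jl} + (\sum_{j,l} v_{jl} h_j w_{jl})^*\big)$, and using $h_j = h_j^*$ each term $v h_j w + w^* h_j v^*$ can be rewritten as a combination of symmetrized products via the polarization identity $v h_j w + w^* h_j v^* = \tfrac14\sum_{\pm}(\pm)(v \pm w^*) h_j (v \pm w^*)^*$. This puts $r$ in exactly the form \eqref{SOS}, establishing that $r$ is SOS and hence that the quadratic module (here really just the ideal-plus-squares module generated by the $h_j$) is Archimedean closed — note Archimedeanity itself is automatic here since $\pi(\sum_i x_i^2)$ is a fixed element of the finite-dimensional $A$, hence bounded, so $K - \sum_i x_i^2 \in \ker \pi + (\text{squares})$ for $K$ large.

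The step I expect to be the main obstacle is the precise justification that $r(X) \geq 0$ for \emph{all} feasible tuples $X$ forces $\pi(r) \geq 0$ in $A$ — one must be careful that the representation $X^\star$ with $X_i^\star = \pi(x_i)$ is genuinely a feasible tuple (it is, since $h_j(X^\star) = \pi(h_j) = 0$) and that positivity of $r(X^\star) = \pi(r)$ as an operator on $\H^\star$ is the same as positivity of $\pi(r)$ as an element of the abstract $C^*$-algebra $A$ (true because $A \subseteq B(\H^\star)$ is a $C^*$-subalgebra, so the two notions of positivity agree). A secondary technical point is confirming that every feasible $X$ really does induce a $*$-homomorphism out of $\P/\PP$; this is immediate from the universal property of the free $*$-algebra $\P$ and the definition of $\PP$ as the (two-sided, $*$-closed) ideal generated by the $h_j$, but it should be stated. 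Once these are nailed down, the rest is the routine finite-dimensional square-root argument and a polarization bookkeeping step.
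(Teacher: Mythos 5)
Your proof is correct and takes essentially the same route as the paper's: observe that $\pi(r)\geq 0$ in the finite-dimensional $C^*$-algebra $\pi(\P)$, take its square root, lift it to a polynomial $s$ with $\pi(s)=\pi(r)^{1/2}$, and conclude $r-ss^*\in\ker\pi=\PP$, hence $r$ is SOS. The paper's version is terser — it leaves implicit the check that $\pi(r)\geq 0$ and the symmetrization of the ideal term into the form \eqref{SOS} that you spell out — but the underlying argument is identical.
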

\begin{proof}
Define $X^\star_k:=\pi(x_k)$, denote by $C^*(X^\star)$ the unital $C^*$-algebra generated by $X_1^\star,\ldots ,X_n^\star$ and let $p\in\P$ be an arbitrary non-negative polynomial. Since $\pi(p)=p(X^\star)\in\A(X^\star)$ is a non-negative operator, then one can define its square root $p(X^\star)^{1/2}$. Due to the finite dimensionality of $\A(X^\star)$, there exists a polynomial $s\in\P$ such that $p(X^\star)^{1/2}=s(X^\star)$. We thus have that
\begin{equation}
p(X^\star)-s(X^\star)s(X^\star)^*=0.
\end{equation}
Hence, $\pi(p-ss^*)=0$, and so $p-ss^*\in I(\mathbf{h})$. We have just shown that $p$ is SOS.
\end{proof}
\begin{remark}
A set of constraints satisfying the conditions of Proposition \ref{prop:finite_SOS} is the Pauli algebra \eqref{Pauli}, \eqref{Pauli_comm} used to model many-body quantum systems, see Section \ref{sec:many_body}.
\end{remark}

\begin{cor}
\label{cor:finite_SOS}
Let $\{h_j(x)=0\}_j$ be a set of constraints satisfying the conditions of Proposition \ref{prop:finite_SOS}. Then, any problem of the form \eqref{nc_prob} with just equality constraints $\{h_j(x)=0\}_j$ satisfies strong ncKKT.
\end{cor}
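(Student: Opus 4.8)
The plan is to derive the corollary directly from Proposition~\ref{prop:finite_SOS} and Theorem~\ref{thm:k->strong}. Theorem~\ref{thm:k->strong} states that strong ncKKT holds for any Archimedean NPO problem whose objective gap $f-p^\star$ admits an SOS decomposition, so it suffices to verify these two hypotheses for a problem of the form \eqref{nc_prob} whose only constraints are $\{h_j(x)=0\}_j$ with $\ker(\pi)=\PP$ for some faithful $*$-representation $\pi\colon\P\to B(\H^\star)$ on a finite-dimensional $\H^\star$. Once both are in place, Theorem~\ref{thm:k->strong} gives the conclusion with no further work.

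First I would establish that the problem is Archimedean. The key point is that, through $\pi$, the quotient $*$-algebra $\P/\PP$ becomes a finite-dimensional $C^*$-algebra. Any tuple $X=(X_1,\ldots,X_n)$ of Hermitian operators on a Hilbert space $\H$ with $h_j(X)=0$ for all $j$ induces the evaluation $*$-homomorphism $\P\to B(\H)$, $x_k\mapsto X_k$; this homomorphism annihilates the ideal $\PP$ and hence factors through $\P/\PP$, so by contractivity of $*$-homomorphisms of $C^*$-algebras one obtains $\|X_k\|\le\|x_k+\PP\|=\|\pi(x_k)\|$ for every such $X$. Setting $K:=\sum_k\|\pi(x_k)\|^2$, it follows that $\sum_k X_k^2\le K\,\id$ on every feasible tuple, so $K-\sum_k x_k^2$ is a non-negative polynomial with respect to $\{h_j(x)=0\}_j$; by Proposition~\ref{prop:finite_SOS} it is SOS, which is exactly the Archimedean property of Definition~\ref{def:arch}.

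Next I would check that $f-p^\star$ is SOS. Since the problem is now known to be Archimedean, $p^\star$ is finite, and for any feasible tuple $X$ and unit vector $\phi$ the triple consisting of $\H$, $X$ and the vector state induced by $\phi$ is feasible for \eqref{nc_prob_hilbert}; as \eqref{nc_prob} is a relaxation of \eqref{nc_prob_hilbert}, this forces $\bra{\phi}f(X)\ket{\phi}\ge p^\star$, hence $f(X)\ge p^\star\,\id$ for every feasible $X$. Thus $f-p^\star$ is non-negative with respect to $\{h_j(x)=0\}_j$, and Proposition~\ref{prop:finite_SOS} yields once more an SOS decomposition of $f-p^\star$. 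Having verified both hypotheses, an application of Theorem~\ref{thm:k->strong} delivers strong ncKKT.

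The one step needing care is the uniform bound $\|X_k\|\le\|\pi(x_k)\|$ over \emph{all} feasible representations, including those on infinite-dimensional Hilbert spaces: this is precisely where the faithfulness of $\pi$ together with $\dim\H^\star<\infty$ is genuinely used, since it makes $\P/\PP$ a finite-dimensional $C^*$-algebra and thereby forces every evaluation homomorphism out of it to be norm-non-increasing. Everything else is a routine invocation of Proposition~\ref{prop:finite_SOS} and Theorem~\ref{thm:k->strong}.
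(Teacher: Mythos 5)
Your proof is correct and follows the route the paper clearly intends but does not spell out: deduce from Proposition~\ref{prop:finite_SOS} that $f-p^\star$ is SOS (being non-negative over the feasible set, in particular $\pi(f-p^\star)\geq 0$), then apply Theorem~\ref{thm:k->strong}. The one point the corollary leaves implicit and you rightly supply is the Archimedean hypothesis required by Theorem~\ref{thm:k->strong}: since Corollary~\ref{cor:finite_SOS} does not assume it, it must be derived, and your argument does this cleanly. The observation that $\P/\PP\cong\pi(\P)$ is a finite-dimensional $*$-subalgebra of $B(\H^\star)$ and hence a $C^*$-algebra, so every evaluation $*$-homomorphism $\P\to B(\H)$ at a feasible tuple factors contractively through it, gives the uniform operator bound $\|X_k\|\leq\|\pi(x_k)\|$; this makes $K-\sum_k x_k^2$ non-negative on the feasible set for $K=\sum_k\|\pi(x_k)\|^2$, and a second invocation of Proposition~\ref{prop:finite_SOS} upgrades non-negativity to an SOS certificate, i.e.\ the Archimedean property of Definition~\ref{def:arch}. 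This is exactly the detail a careful reader would want here, and your write-up is sound.
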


% \begin{prop}
% \label{prop:finite_SOS2}
% Consider a set of inequality constraints $\{g_i(x)\geq0\}_i$ of the form
% \begin{equation}
% g_i(x)=1-\sum_j a_jx_j-\sum_ks_k(x)s_k(x)^*, 
% \end{equation}
% with $s_k$, linear in $x$. Then, the quadratic module generated by $\{g_i(x)=0\}_i$ is Archimedean closed.    
% \end{prop}
% \begin{proof}
% See \cite{Helton_2012}.    
% \end{proof}
% \begin{remark}
% In \cite{Helton_2012} it is further proven that the SOS decompositions of non-negative polynomials have minimal degree. This means, unfortunately, that the strong ncKKT conditions are superfluous, because the first level of relaxation \eqref{k_dual} with degree high enough to encode the problem will already achieve convergence. 

%\end{remark}

\subsubsection{Convexity}
As it turns out, any set of convex inequality constraints defines an Archimedean closed quadratic module. To make this statement precise, though, we need to recall the definition of convexity for non-commutative polynomials, over an algebra or in general.
\begin{defin}
A Hermitian non-commutative polynomial $p$ is convex in the $C^*$-algebra ${\A}$ if, for any two $n$-tuples of Hermitian operators $Y_1,Y_2\in{\A}^{n}$, it holds that
\begin{equation}
p\left(\delta Y_1+(1-\delta)Y_2\right)\leq\delta p(Y_1)+(1-\delta)p(Y_2),
\end{equation}
for all $\delta\in\R$, $0\leq \delta\leq 1$. If $p$ is convex for all $C^*$-algebras, then we call it \emph{matrix convex} \cite{convex_poly}.
\end{defin}
As shown in \cite{convex_poly}, $p\in\P$ is matrix convex iff its \emph{Hessian} is \emph{matrix positive}, i.e., if the polynomial
\begin{equation}
\frac{d^2p(x+th)}{dt^2}\Bigr|_{t=0}
\end{equation}
is a sum of squares of the Hermitian variables $x_1,\ldots ,x_n, h_1,\ldots ,h_n$. In \cite{convex_poly} it is also proven that matrix convex polynomials have degree at most two.

Now we are ready to state a sufficient criterion for Archimedean closure.

\begin{theo}
\label{thm:conv}
Let the Archimedean Problem \eqref{nc_prob} be such that 
\begin{enumerate}[\rm(a)]
\item
$\{g_i\}_i$ are matrix concave;
\item
the  equality constraints $\{h_j(x)=0\}_j$ are affine linear, i.e., of the form
\begin{equation}
h_j(x)=\sum_{k}\nu_{jk}x_k-b_j,
\end{equation}
and linearly independent. 
\end{enumerate}
In addition, let there exist $r\in\R^+$ and an $n$-tuple $q\in\P^{n}$ of Hermitian polynomials such that
\begin{equation}
g_i(q(x))-r
\label{posi_X2}
\end{equation}
is SOS, for $i=1,\ldots ,m$, and
\begin{equation}
h_j(q(x))\in I(\mathbf{h}),\quad j=1,\ldots ,m'.
\label{linear_X2}
\end{equation}
Then, the constraints $\{g_i(x)\geq 0\}_i\cup\{h_j(x)=0\}_j$ generate the Archimedean closed quadratic module
$M(\g)+I(\h)$. Thus, by Theorem \ref{thm:k->strong}, Problem \eqref{nc_prob} satisfies the strong ncKKT conditions.
\end{theo}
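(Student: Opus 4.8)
The plan is to prove the \emph{first} assertion---that the quadratic module generated by $\{g_i\ge 0\}_i\cup\{h_j=0\}_j$ is Archimedean closed---and then read off the \emph{second} one for free: an Archimedean closed quadratic module contains every symmetric polynomial that is positive semidefinite on the feasible set, and in particular $f-p^\star$ (which is non-negative there, since for any feasible $X$ one may choose an arbitrary state, forcing $f(X)\succeq p^\star\id$). Hence $f-p^\star$ is a weighted SOS in the sense of \eqref{SOS}, and Theorem~\ref{thm:k->strong} then yields strong ncKKT because the problem is Archimedean. So the whole burden is to establish Archimedean closedness, and the engine for this will be the convex (``perfect'') Positivstellensatz of Helton--Klep--McCullough \cite{HKMsurvey}: on a \emph{bounded monic free spectrahedron} $\mathcal D_L$, every symmetric polynomial that is positive semidefinite on $\mathcal D_L$ lies in the (automatically Archimedean and Archimedean closed) quadratic module generated by the monic pencil $L$. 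The task thus becomes realizing the feasible set as such a $\mathcal D_L$.

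First I would extract a strictly feasible \emph{scalar} point from the hypotheses \eqref{posi_X2}--\eqref{linear_X2}. Since the $h_j$ are affine and linearly independent, their common zero locus in $\R^n$ is a nonempty affine subspace; pick $y$ in it (if it is empty the problem is infeasible and there is nothing to prove). As $h_j(q(x))\in\PP$ and $h_j(y)=0$, evaluating the ideal element at $y$ gives $h_j(q(y))=0$; as $g_i(q(x))-r$ is SOS it is pointwise PSD, so $g_i(q(y))\succeq r\,\id\succ 0$. Thus $y^\star:=q(y)$ is a scalar point with $h_j(y^\star)=0$ and $g_i(y^\star)\succ 0$. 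Translating the variables by $y^\star$, I may assume $0$ is strictly feasible; the $h_j$ become homogeneous linear forms, which I eliminate by a linear substitution of $m'$ of the variables (using linear independence), leaving matrix-concave quadratics $\tilde g_i$ on $n-m'$ variables with $\tilde g_i(0)\succ 0$ and the Archimedean property intact. By the structure theorem for matrix-convex polynomials \cite{convex_poly}, each such $\tilde g_i$ has degree $\le 2$ and the form $\tilde g_i(x)=c_i+\ell_i(x)-\Lambda_i(x)^*\Lambda_i(x)$ with $c_i=\tilde g_i(0)\succ 0$ a scalar matrix, $\ell_i$ linear and $\Lambda_i$ a linear pencil; a Schur complement shows $\tilde g_i(X)\succeq 0$ is equivalent to the monic LMI $\bigl(\begin{smallmatrix}\id+c_i^{-1}\ell_i(X)& \Lambda_i(X)^*\\ \Lambda_i(X)& \id\end{smallmatrix}\bigr)\succeq 0$. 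Direct-summing over $i$ produces a monic pencil $L$ with $\mathcal D_L$ equal to the reduced feasible set, bounded by the Archimedean property.

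With $\mathcal D_L$ in hand, \cite{HKMsurvey} gives, for any symmetric $p$ with $p|_{\mathcal D_L}\succeq 0$, a decomposition $p=\sum_k s_k^*s_k+\sum_l v_l^*Lv_l$ with matrix-polynomial coefficients. Expanding each $v_l^*Lv_l$ block by block and applying the reverse of the Schur-complement identity above turns it into an honest sum of squares plus terms of the shape $a^*\tilde g_i a$; undoing the linear substitution and the translation by $y^\star$ then rewrites everything in the original variables, the substituted linear forms reappearing (modulo $\PP$) as combinations of the $h_j$. This exhibits $p$ in the weighted-SOS form \eqref{SOS}, so the quadratic module is Archimedean closed; specializing to $p=f-p^\star$ and invoking Theorem~\ref{thm:k->strong} completes the proof.

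I expect the main obstacle to be the faithful passage between the two formalisms: (i) producing the scalar strictly feasible point $y^\star$ and verifying that the translation, the elimination of the equality constraints, and the Schur-complement step preserve both boundedness and the monic form, so that \cite{HKMsurvey} genuinely applies; and (ii) pushing the resulting certificate back through the Schur complements and the affine change of variables so that it matches the precise template \eqref{SOS}, with the ideal $\PP$ absorbing all $h_j$-dependent remainders. (One could instead argue strong ncKKT directly from convex duality---Slater's condition \eqref{posi_X2} giving a vanishing duality gap---but routing through Theorem~\ref{thm:k->strong} keeps everything inside the SOS framework already set up.)
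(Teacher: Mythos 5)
Your overall plan coincides with the paper's: eliminate the affine equality constraints, invoke the Helton--McCullough structure theorem to write each concave $g_i$ as the Schur complement of a monic linear pencil, bound the resulting free spectrahedron via the Archimedean hypothesis, apply the convex Positivstellensatz to get a weighted-SOS certificate for $f-p^\star$, and finish with Theorem~\ref{thm:k->strong}. The one place where you deviate, and where your argument has a genuine gap, is the extraction of the strictly feasible \emph{scalar} point.

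You pick $y\in\R^n$ solving only the affine system $h_j(y)=0$ and then claim that, because $g_i(q(x))-r$ is ``SOS'', evaluating at $y$ gives $g_i(q(y))\succeq r\,\id$. Under the paper's Definition~3, ``SOS'' means a \emph{weighted} sum of squares in the quadratic module generated by $\{g_i\}_i\cup\{h_j\}_j$, i.e.\ a decomposition of the form
\begin{equation*}
g_i(q(x))-r=\sum_l s_l^*s_l+\sum_{i',l}s_{i'l}^*\,g_{i'}\,s_{i'l}+\sum_{j,l}\bigl(s^+_{jl}h_j(s^-_{jl})^*+s^-_{jl}h_j(s^+_{jl})^*\bigr).
\end{equation*}
Evaluating such a certificate at $y$ only yields $g_i(q(y))\geq r$ when $y$ is \emph{feasible}, i.e.\ when in addition $g_{i'}(y)\geq 0$ for all $i'$; your $y$ satisfies $h_j(y)=0$ but nothing controls $g_{i'}(y)$, and the middle sum could be negative. (If ``SOS'' meant plain sums of Hermitian squares your step would be fine, but that is not the paper's convention here, and the paper's own proof does not rely on it.) The paper avoids the issue by plugging the known feasible operator tuple $X^\star$ into $q$ to obtain a strictly feasible \emph{operator} $\hat X=q(X^\star)$, and then \emph{compressing} $\hat X$ to a scalar point $\hat x=\psi^*\hat X\psi$ via a unit vector $\psi$: because the pencil $L$ is \emph{linear}, $L(\hat x)=(I\otimes\psi)^*L(\hat X)(I\otimes\psi)\succ 0$, and the Schur complement then gives $g_i(\hat x)>0$. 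This linearity-plus-compression step is exactly what your direct scalar evaluation is missing; you should replace your item (i) with it. The rest of your outline (translation, rescaling to monic, applying \cite[Theorem~1.2]{Helton_2012}, and pushing the certificate back through the Schur complements and the affine change of variables with remainders absorbed into $\PP$) matches the paper.
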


In the proof of the theorem we shall make use of Schur complements \cite{zhang2006schur}:

\begin{lemma}\label{lem:schur}
The block matrix 
\[
\begin{bmatrix}
a_{11} & a_{12} \\[1mm]
a_{12}^* & I
\end{bmatrix}
\]
is positive (semi)definite iff $a_{11}-a_{12}a_{12}^*$ is positive (semi)definite.
\end{lemma}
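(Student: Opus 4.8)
The plan is to reduce the claim to the triviality that a block-diagonal matrix is positive definite iff its diagonal blocks are, by exhibiting an explicit congruence that block-diagonalizes the matrix in question. Write $M:=\begin{bmatrix} a_{11} & a_{12}\\ a_{12}^* & I\end{bmatrix}$ and introduce
\begin{equation}
L:=\begin{bmatrix} I & a_{12}\\ 0 & I\end{bmatrix},\qquad D:=\begin{bmatrix} a_{11}-a_{12}a_{12}^* & 0\\ 0 & I\end{bmatrix}.
\end{equation}
The first step is to verify, by a direct block multiplication, that $M=LDL^*$: the only thing to notice is that the term $a_{12}a_{12}^*$ produced in the $(1,1)$ block by $L\,D\,L^*$ exactly cancels the $-a_{12}a_{12}^*$ coming from $D$, while the off-diagonal and $(2,2)$ blocks reproduce $a_{12}$, $a_{12}^*$ and $I$.

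The second step is to observe that $L$ is invertible (it is unipotent, with $L^{-1}=\begin{bmatrix} I & -a_{12}\\ 0 & I\end{bmatrix}$), hence so is $L^*$, and both $L$ and $L^{-1}$ are bounded. Congruence by an invertible operator preserves positivity: for every vector $v$ one has $\langle v, Mv\rangle=\langle L^*v,\, D\, L^*v\rangle$ and $v\mapsto L^*v$ is a bijection, so $M$ is positive definite iff $D$ is. (In infinite dimensions, reading ``positive definite'' as ``$\geq\epsilon I$ for some $\epsilon>0$'', the same conclusion holds because $M=LDL^*$ with $L$ bounded and invertible preserves both positivity and invertibility.)

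The third and last step is that $D$ is block diagonal, so $D$ is positive definite iff each diagonal block is; the lower block is $I$, which always is, so $D>0$ iff $a_{11}-a_{12}a_{12}^*>0$. Chaining the two equivalences gives the statement. I do not anticipate a genuine obstacle here — this is a textbook fact and the only computation is the identity $M=LDL^*$. Should one prefer to avoid the factorization language altogether, an equivalent route is to complete the square directly: for $v=(u,w)$ one has $\langle v,Mv\rangle=\langle u,(a_{11}-a_{12}a_{12}^*)u\rangle+\|a_{12}^*u+w\|^2$, from which both implications are immediate (choose $w=-a_{12}^*u$ for one direction, and read off nonnegativity with equality only at $v=0$ for the other).
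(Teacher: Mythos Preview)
Your proof is correct and follows exactly the same approach as the paper: both exhibit the congruence $M=LDL^*$ with $L=\begin{bmatrix} I & a_{12}\\ 0 & I\end{bmatrix}$ and $D=\begin{bmatrix} a_{11}-a_{12}a_{12}^* & 0\\ 0 & I\end{bmatrix}$, and conclude from the invertibility of $L$. Your write-up is in fact more detailed than the paper's, which simply displays the factorization and cites \cite{zhang2006schur}.
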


\begin{proof}
This is a special case of \cite[Theorem 1.12]{zhang2006schur}. Alternately, observe that
\[
\begin{bmatrix}
a_{11} & a_{12} \\[1mm]
a_{12}^* & I
\end{bmatrix} 
=
\begin{bmatrix}
I & a_{12} \\[1mm]
0 & I
\end{bmatrix}
\begin{bmatrix}
a_{11}-a_{12}a_{12}^* & 0 \\[1mm]
0 & I
\end{bmatrix}
\begin{bmatrix}
I & a_{12} \\[1mm]
0 & I
\end{bmatrix}^*.\qedhere
\]
\end{proof}

\begin{proof}[Proof of Theorem \ref{thm:conv}]
Let $(\H^\star,\sigma^\star,X^\star)$ be a solution of Problem \eqref{nc_prob_hilbert}. 
By \eqref{posi_X2} and \eqref{linear_X2}, the choice $\hat{X}=q(X^\star)$ satisfies the constraints 
\beq\label{eq:strictPos}
g_i(\hat{X})>0\quad \text{for all $i$},\qquad 
h_j(\hat{X})=0\quad \text{for all $j$}.
\eeq

We solve the system of linear equations $\{h_j=0\}_j$.
Without loss of generality, we express the last $n-s$ variables in terms of $x_1,\ldots,x_s$. By back substitution, our Problem \eqref{nc_prob_hilbert} is thus equivalent to one with $m'=0$, i.e., one without equality constraints.
More precisely, if a polynomial $p$ has a SOS decomposition (without $h_j$'s) after this back substitution, then the original form of $p$ has a SOS decomposition as in \eqref{eq:sos}.
It thus suffices to consider an Archimedean Problem \eqref{nc_prob_hilbert} with $m'=0$.

By the Helton-McCullough structure theorem for concave non-commutative polynomials \cite[Corollary 7.1]{convex_poly}, each $g_i$ is of the form
\begin{equation}\label{eq:cvxPoly}
g_i(x)=c_{i} + \Lambda_{i0}(x)-\sum_{j=1}^N\Lambda_{ij}(x)^*\Lambda_{ij}(x)
\end{equation}
for some $c_{i}\in\R$ and homogeneous linear $\Lambda_{ij}(x)$. 
(By 
Carath\'eodory's theorem on
convex hulls in finite dimensions \cite[Theorem I.2.3]{barvinok2002course}, the length $N$ of the sum of squares in \eqref{eq:cvxPoly} can be chosen independently of $i$.)
Such a $g_i$ is the Schur complement of a linear pencil, namely,
\begin{equation}\label{eq:schurPoly}
G_i(x)=\begin{bmatrix}
c_{i} + \Lambda_{i0}(x) & \Lambda_{i1}(x)^* & \cdots & \Lambda_{iN}(x)^* \\[1mm]
\Lambda_{i1}(x) & 1 \\
\vdots & & \ddots \\
\Lambda_{iN}(x) & & & 1
\end{bmatrix}
\end{equation}
Form the large block diagonal pencil $L(x):=G_1(x)\oplus\cdots\oplus G_m(x)$ of size $m(N+1)\times m(N+1)$. 

From $g_i(\hat X)>0$ for all $i$, we deduce from Lemma \ref{lem:schur} that
$L(\hat X)>0$. For a unit vector $\psi\in\H^\star$, consider 
$\hat x:=\psi^*\hat X\psi\in\R^n$. 
Since $L$ is linear and $\psi$ is a unit vector, 
$L(\psi^*\hat X\psi)=(I_{m(N+1)}\otimes \psi)^*L(\hat X)(I_{m(N+1)}\otimes \psi)$.
Letting $0\neq\eta\in\C^{m(N+1)}$, we have
\begin{equation}
\begin{aligned}
\braket{L(\hat x)\eta}{\eta} & = 
\braket{L(\psi^*\hat X\psi)\eta}{\eta} = 
\braket{(I_{m(N+1)}\otimes \psi)^*L(\hat X)(I_{m(N+1)}\otimes \psi)\eta}{\eta} \\
& = 
\braket{L(\hat X)(I_{m(N+1)}\otimes \psi)\eta}{(I_{m(N+1)}\otimes \psi)\eta} 
= 
\braket{L(\hat X)(\eta\otimes \psi)}{(\eta\otimes \psi)} >0. 
\end{aligned}
\end{equation}

Now again by Lemma \ref{lem:schur}, $g_i(\hat x)>0$ for all $i$. By translating $x$ by $-\hat x$, we may assume w.l.o.g.~that $g_i(0)>0$. By rescaling we further reduce to $g_i(0)=1$, i.e., all $g_i$ are monic. We are now in a position to apply the convex Positivstellensatz \cite[Theorem 1.2]{Helton_2012} to deduce that 
$f - p^\star$ has a SOS decomposition. Thus, by Theorem \ref{thm:k->strong}, Problem \eqref{nc_prob} admits the strong ncKKT conditions. 
\end{proof}

\begin{cor}
\label{cor:conv2}
Let the  Problem \eqref{nc_prob} be such that 
\begin{enumerate}[\rm(a)]
\item
$\{g_i\}_i$ are matrix concave;
\item
the  equality constraints $\{h_j(x)=0\}_j$ are affine linear, and linearly independent. 
\end{enumerate}
If there exists a feasible point $X$ for Problem \eqref{nc_prob} such that $g_i(X)>0$ for all $i$, then Problem \eqref{nc_prob} satisfies the strong ncKKT conditions.
\end{cor}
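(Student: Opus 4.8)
The plan is to reprise the proof of Theorem~\ref{thm:conv}, the only change being that a strictly feasible \emph{operator} tuple $X$ is now handed to us at the outset, so instead of producing one from a polynomial certificate we simply compress it to a strictly feasible \emph{scalar} point $\hat x\in\R^n$ and then run the rest of that argument. (Equivalently: the tuple $q$ demanded by \eqref{posi_X2}--\eqref{linear_X2} may be taken to be the constant tuple $q\equiv\hat x$, after which Theorem~\ref{thm:conv} applies directly.)

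For the compression step: each $g_i$ being matrix concave, the Helton--McCullough structure theorem \cite[Corollary 7.1]{convex_poly} exhibits it as the Schur complement of a linear pencil $G_i(x)$ as in \eqref{eq:schurPoly}; set $L(x):=G_1(x)\oplus\cdots\oplus G_m(x)$. From $g_i(X)>0$ for all $i$ and Lemma~\ref{lem:schur} we obtain $L(X)>0$. Pick a unit vector $\psi$ in the Hilbert space underlying $X$ and let $\hat x:=\psi^*X\psi\in\R^n$. Since $L$ is an affine-linear pencil and $\psi$ is a unit vector, $L(\hat x)=(I\otimes\psi)^*L(X)(I\otimes\psi)>0$, so $g_i(\hat x)>0$ for every $i$ by Lemma~\ref{lem:schur} once more; and since each $h_j$ is affine linear with $h_j(X)=0$, also $h_j(\hat x)=\psi^*h_j(X)\psi=0$. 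Hence $\hat x$ is strictly feasible for the scalar problem, and with $q(x):=\hat x$, $s_i:=0$ and $r:=\tfrac{1}{2}\min_i g_i(\hat x)>0$ the quantity $g_i(q(x))-r=g_i(\hat x)-r$ is a positive constant (trivially SOS) while $h_j(q(x))=0\in\PP$, so \eqref{posi_X2}--\eqref{linear_X2} hold.

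From here the argument is literally that of Theorem~\ref{thm:conv}: linear independence of the $h_j$ lets one back-substitute to eliminate the equality constraints; translating $x$ by $-\hat x$ gives $g_i(0)>0$, and a rescaling makes the $g_i$ monic; the convex Positivstellensatz \cite[Theorem 1.2]{Helton_2012} then shows that $f-p^\star$ is SOS in the sense of \eqref{SOS}; and Theorem~\ref{thm:k->strong} delivers the strong ncKKT conditions. Observe that no boundedness hypothesis intervenes here, the convex Positivstellensatz being ``perfect''; and if $p^\star=-\infty$ there are no bounded solutions of Problem~\eqref{nc_prob_hilbert}, so the assertion is vacuous.

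The single load-bearing step is the compression to scalars, and matrix concavity is what makes it work: concavity is precisely the property that lets $g_i\succeq 0$ be encoded as positivity of a \emph{linear} pencil, and a linear pencil is preserved---up to the isometry $I\otimes\psi$---under $X\mapsto\psi^*X\psi$, so strict positivity survives the passage from operators to scalars. For a generic non-concave $g_i$ this fails (compressing a strictly feasible operator tuple can easily leave the feasible set), which is exactly why the polynomial hypothesis of Theorem~\ref{thm:conv} cannot in general be weakened to bare operator strict feasibility. The only thing to keep an eye on is that the linear-pencil representation and the point $\hat x$ transport correctly through the equality-constraint elimination, which is routine given the independence of the $h_j$.
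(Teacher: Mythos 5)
Your argument is correct and is essentially the same as the paper's, which gives the one-line proof ``Immediate from the proof of Theorem~\ref{thm:conv}.'' You have simply unpacked what ``immediate'' means: in the proof of Theorem~\ref{thm:conv}, the polynomial tuple $q$ is used only to manufacture a strictly feasible operator tuple $\hat X=q(X^\star)$, after which the Schur-complement compression $\hat x=\psi^*\hat X\psi$ and the convex Positivstellensatz do the work; here the strictly feasible $X$ is supplied directly, so that step is skipped. Your two side remarks are also sound and slightly sharpen the paper's presentation: the constant tuple $q\equiv\hat x$ indeed realizes \eqref{posi_X2}--\eqref{linear_X2}, and the Archimedean hypothesis present in Theorem~\ref{thm:conv} but absent from Corollary~\ref{cor:conv2} is harmless, since the convex Positivstellensatz of \cite{Helton_2012} requires no boundedness and the strong ncKKT conditions are vacuous when no bounded minimizer exists (in particular when $p^\star=-\infty$).
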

\begin{proof}
Immediate from the proof of Theorem \ref{thm:conv}.
\end{proof}

\begin{remark}
The Convex Positivstellensatz of \cite{Helton_2012} has minimal degree. Enforcing the strong ncKKT conditions on an NPO problem with matrix convex constraints is therefore superfluous: the first level of relaxation \eqref{k_dual} with degree high enough to encode the problem will already achieve convergence. 
\end{remark}

\section{Partial operator optimality conditions}
\label{sec:partialop}
As we will see in Section \ref{sec:bell}, the computation of the maximum quantum value of a linear Bell functional only seems to satisfy \core ncKKT. In this particular problem, and some interesting variants thereof \cite{statePoly, Ligthart_2023, Wolfe_2021}, the non-commuting variables $x=(x_1,\ldots ,x_n)$ can be partitioned as $x=(y,z)$, and the only constraints relating the parts $y$ and $z$ are commutation relations of the form:
\begin{equation}
[z_k,y_l]=0,\quad\forall k,l.
\end{equation}
As it turns out, if the remaining constraints on $z$ are convex and $f(y,z)$ is convex on $z$, then a partial form of strong ncKKT holds for the variables $z$. Similarly, if the remaining constraints on $z$ satisfy ncMFCQ, then the variables $z$ will satisfy a form of normed ncKKT. Both scenarios thus allow us to derive stronger SDP constraints than the mere \core ncKKT would permit, further boosting the speed of convergence of the vanilla SDP hierarchy \cite{Pironio2010}. 

In this section, we define these partial forms of operator optimality and find sufficient criteria to ensure that they hold. One of these criteria will be used in Section \ref{sec:bell} to derive new SDP constraints for maximizations over Bell functionals.

\subsection{Partial normed ncKKT conditions}
\begin{theo}
\label{theo_partial_MF}
Consider an Archimedean NPO \eqref{nc_prob} with variables $x=(y,z)$, with $z=(z_1,\ldots ,z_q)$ such that
\begin{enumerate}[\rm(a)]
    \item The only constraints involving both types of variables $y$ and $z$ are the following:
    \begin{equation}
    [y_r,z_s]=0,\quad\forall r,s.
    \end{equation}
    \item {The remaining constraints involving variables $z$ are 
    \beq
    \begin{aligned}
    &\hat{g}_i(z)\geq 0,\quad i=1,\ldots ,m_Z,\\ 
    &\hat{h}_j(z)=0,\quad j=1,\ldots ,m'_Z,
    \end{aligned}
    \eeq
    and satisfy ncMFCQ.}
    
\end{enumerate}
Then, for any bounded solution $\sigma^\star$ of Problem \eqref{nc_prob}, there exist Hermitian linear functionals $\{\mu_i:\Z\to\C\}_i$, $\{\lambda^{\pm}_i:\Z\to\C\}_j$, non-negative on $M(\{\hat{g}_i(z)\}_i)+I(\{\hat{h}_j(z)\}_j)$, such that
\begin{subequations}
\begin{align}
&\mu_i(\hat{g}_i)=0,\quad i=1,\ldots ,m_Z,\\ 
&\sigma^\star\left(\nabla_z f(p)\right) - \sum_i\mu_i\left(\nabla_z \hat g_i(p)\right)-\sum_j\lambda_j\left(\nabla_z \hat h_j(p)\right)=0,\quad \forall p\in\Z^{q},
\end{align}
with $\lambda_j=\lambda^+_j-\lambda^-_j$, for $j=1,\ldots ,m_Z$.
\end{subequations}
    
\end{theo}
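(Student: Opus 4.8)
The plan is to re-run the chain of arguments behind Theorem~\ref{theo_MF} --- namely Theorems~\ref{essential_theo} and~\ref{theo_null_epsilon} together with Lemma~\ref{bounded_interm} --- but carrying out only variations of the $z$-operators. First I would apply the GNS construction to a minimizer $\sigma^\star$ of Problem~\eqref{nc_prob}, obtaining a bounded solution $(\H^\star,X^\star,\psi^\star)$ of Problem~\eqref{nc_prob_hilbert} with $X^\star=(Y^\star,Z^\star)$. Write $\hat{\PP}$ for the ideal of $\Z$ generated by the $z$-only equality constraints $\{\hat{h}_j\}_j$. By hypothesis~(a), every constraint of the problem is a commutator $[y_r,z_s]=0$, a constraint involving only $y$, or one of the $z$-only constraints $\{\hat{g}_i(z)\geq 0\}_i\cup\{\hat{h}_j(z)=0\}_j$, the latter satisfying ncMFCQ by hypothesis~(b).

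The crux is a decoupling observation: an infinitesimal variation $Z^\star\mapsto Z^\star+\epsilon\,t(Z^\star)$ with $t\in\Z^{\times q}$, holding $Y^\star$ fixed, is automatically compatible with the commutator and the $y$-only constraints, so that no Lagrange multipliers for those constraints are ever needed. At the operator level, the trajectory produced by Lemma~\ref{lemma_ODE} applied to the $z$-subproblem --- the solution of $\frac{dZ}{d\tau}=t(Z(\tau))$, $Z(0)=Z^\star$ --- stays inside $\A(Z^\star)$, which commutes with $Y^\star$ because $Z^\star$ does; hence $[Y^\star_r,Z_s(\tau)]=0$ for all $\tau$, while the $y$-only constraints are untouched. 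At the polynomial level, $\nabla_x[y_r,z_s]\big((0,t)\big)=[y_r,t_s(z)]$ lies in the two-sided ideal generated by $\{[y_{r'},z_{s'}]\}$ (Leibniz rule), and $\nabla_x h\big((0,t)\big)=0$ for any $y$-only polynomial $h$. Therefore, whenever $t\in\Z^{\times q}$ together with auxiliary polynomials $\{s_i\}_i\subset\Z$ satisfies the $z$-subproblem feasibility conditions --- $\nabla_z\hat{h}_j(t)\in\hat{\PP}$ and $s_i\hat{g}_i+\hat{g}_is_i^*+\nabla_z\hat{g}_i(t)$ a weighted SOS --- Lemma~\ref{lemma_ODE} yields an analytic trajectory $Z(\tau)$ with $(Y^\star,Z(\tau))$ feasible for Problem~\eqref{nc_prob_hilbert}, and since $X^\star$ is a minimizer, applying this to $\pm t$ forces $\sigma^\star(\nabla_z f(t))=0$. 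This is precisely the ``objective-gradient test'' that underlies the feasibility of the SDP $\mathbb{P}$ and the vanishing of its dual $\mathbb{P}^*$ in the proof of Theorem~\ref{essential_theo}, now phrased over functionals $\mu^k_i,\lambda^k_j$ on polynomials in $\Z$ of degree $\leq 2k$, with $\nabla_z$, $\{\hat{g}_i\}$, $\{\hat{h}_j\}$, $\hat{\PP}$ in place of their full-problem analogues.

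With this test available, the proofs of Theorems~\ref{essential_theo} and~\ref{theo_null_epsilon} transcribe to the $z$-variables with only notational changes: the Mangasarian--Fromovitz direction $q^z$ from hypothesis~(b) provides the hypotheses of the $z$-analogue of Theorem~\ref{theo_null_epsilon} with $I=\{1,\dots,m_Z\}$, producing bounded positive functionals $\{\mu_i:\Z\to\C\}_i$ compatible with the $z$-constraints and satisfying $\mu_i(\hat{g}_i)=0$, together with, for every $k$, a solution $\lambda^k$ of the linear system~\eqref{linear_system_opt} (with $g_i,h_j,\nabla_x$ replaced by $\hat{g}_i,\hat{h}_j,\nabla_z$); boundedness of $\mu_i$ only uses that $\nabla_z f(q^z)\in\P$ and that the \emph{full} problem is Archimedean. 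Since ncMFCQ also supplies linear independence of the gradients $\{\hat{h}_j\}_j$, Lemma~\ref{bounded_interm} applies (again with $z$ for $x$) and yields positive functionals $\lambda^\pm_j:\Z\to\C$ compatible with the $z$-constraints, whose differences $\lambda_j=\lambda^+_j-\lambda^-_j$ give the operator-optimality identity; extending $\mu_i$ and $\lambda^\pm_j$ componentwise to tuples gives the stated functionals. One point that needs care is that the transcribed proofs also invoke the Archimedean property of the $z$-subproblem (e.g.\ to bound $\mu_i(p)$ and $\lambda^\pm_j(1)$); this is legitimate because the commutator and $y$-only constraints cannot bound $z$ on the feasible set, so the full Archimedean assumption forces the $z$-feasible set to be bounded, and the relation $K-\sum_s z_s^2\geq 0$ --- which alters neither the $z$-subproblem nor, after a harmless adjustment of $q^z$, ncMFCQ --- may be adjoined to make the $z$-quadratic module Archimedean.

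I expect the decoupling step to be the main obstacle: one must verify, both at the operator-trajectory level and in the polynomial/SDP bookkeeping, that restricting variations to $z$ genuinely eliminates the commutator and $y$-only constraints, so that the feasibility argument for $\mathbb{P}$ and $\mathbb{P}^*$ closes using the $z$-data alone and no multipliers for the excluded constraints appear. Once this is secured, the rest is a careful but routine re-reading of the proofs of Theorems~\ref{essential_theo}, \ref{theo_null_epsilon} and Lemma~\ref{bounded_interm} with $x$ replaced by $z$.
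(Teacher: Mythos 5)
Your proposal is correct and follows essentially the same route as the paper's proof, which also transcribes the arguments of Theorems~\ref{essential_theo} and~\ref{theo_null_epsilon} and Lemma~\ref{bounded_interm} to the $z$-variables after noting (via Lemma~\ref{lemma_ODE}) that $z$-only variations keep $Y^\star$ fixed and thus automatically respect the commutator and $y$-only constraints. You are in fact more explicit than the paper at two points where it is terse --- the polynomial-level verification that $\nabla_x[y_r,z_s]((0,t))=[y_r,t_s(z)]$ stays in the commutator ideal, and the need to adjoin $K-\sum_s z_s^2\geq 0$ so that the $z$-quadratic module is Archimedean when running the boundedness argument of Theorem~\ref{theo_null_epsilon} --- and both of these glosses are sound.
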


\begin{proof}
Let $\sigma^\star$ be a solution of Problem \eqref{nc_prob}. For fixed $k\in\N$, consider the following SDP:
\begin{align}
\min_{\epsilon,\mu,\lambda}{} &\epsilon\nonumber\\
\mbox{s.t. }&\epsilon\geq 0,\nonumber\\
&\mu^k_i(pp^*)+\epsilon\|p\|^2_2\geq 0,\quad \forall p\in\Z,\; \deg (p)\leq k,\;i=1,.,,,m_Z,\nonumber\\ 
&\mu^k_i(p\hat{g}_lp^*)+\epsilon\|p\|^2_2\geq 0, \quad \forall p\in\Z,\; \deg (p)\leq k-\left\lceil\frac{\deg(\hat{g}_l)}{2}\right\rceil,\;i,l=1,\ldots ,m_Z,\nonumber\\
&\mu^k_i(s^+\hat{h}_js^-)=0, \quad \forall s^+, s^-\in\Z,\; \deg (s^+)+\deg(s^-)\leq 2k-\deg(\hat{h}_j), \; j=1,\ldots ,m_Z',\nonumber\\ 
&\mu^k_i(s\hat{g}_i)=\mu^k_i(\hat{g}_is)=0,\quad \forall s\in\Z,\;\deg(s)\leq 2k-\deg(g_i), i=1,\ldots ,m,\nonumber\\ 
&\lambda_j^k(s^+\hat{h}_js^-)=0,\quad \forall s^+,s^-\in\Z,\; \deg (s^+)+\deg(s^-)\leq 2k-\deg(h_j), \; j=1,\ldots ,m'_Z,\nonumber\\
&\sigma^\star\left(f'(p)\right)-\sum_i\mu^k_i\left(\nabla_z \hat{g}_i(p)\right)-\sum_j\lambda^k_j\left(\nabla_z h_j(p)\right)=0,\quad \forall p\in \Z^{q}, \nonumber\\ 
&\qquad\qquad \deg_z \left(f'(p)\right),\;\deg \left(\nabla_z \hat{g}_i\right),\;\deg \left(\nabla_z \hat{h}_j\right)\leq 2k-\deg(p),
\label{SDP_partial_MF}
\end{align}
where 
\begin{equation}
f'(p)=\lim_{\delta\to 0}\frac{f(Y^\star,Z^\star+\delta p(Z^\star))-f(Y^\star, Z^\star)}{\delta},
\end{equation}
and the expression $\deg_z \left(\bullet\right)$ denotes the degree of $\bullet$ with respect to the variables $z$.

Invoking Lemma \ref{lemma_ODE} as in the proof of Theorem \ref{essential_theo}, we find that this SDP problem has a solution for every $\epsilon>0$. Next, using relations \eqref{MF_cons}, \eqref{anni_MF}, like in the proof of Theorem \ref{theo_null_epsilon}, it is shown that Problem \eqref{SDP_partial_MF} is also feasible for $\epsilon=0$. Moreover, we can replace the variables $\mu_i^k$ by bounded positive functionals $\{\mu_i\}_i$, compatible with the constraints $\{\hat{g}_i(z)\geq0\}_i\cup\{\hat{h}_j(z)=0\}_j$. Finally, we invoke the linear independence of the gradients of $\{\hat{h}_j\}_j$ as in Lemma \ref{bounded_interm} to prove that $\{\lambda^k_j\}_j$ can be replaced by linear functionals.
\end{proof}

\subsection{Partial strong ncKKT conditions}

\begin{theo}
\label{theo_partial_conv}
Consider an NPO Problem \eqref{nc_prob}, not necessarily Archimedean, with variables $x=(y,z)$, with $z=(z_1,\ldots ,z_q)$ such that
\begin{enumerate}[\rm(a)]
    \item The only constraints involving both types of variables $y$ and $z$ are the following:
    \begin{equation}
    [y_r,z_s]=0,\quad\forall r,s.
    \label{commut_YZ}
    \end{equation}
    
    \item The remaining constraints involving variables of type $z$ are 
    \beq
    \begin{aligned}
    &\hat{g}_i(z)\geq 0,\quad i=1,\ldots ,m_Z,\\ 
    &\hat{h}_j(z)=0,\quad j=1,\ldots ,m'_Z,
    \end{aligned}
    \eeq
    where $\{\hat{g}_i\}_i$ are matrix concave non-commutative polynomials and $\{\hat{h}_j\}_j$ are affine linear  polynomials with linearly independent gradients.
    \item There exist $r\in\R^+$ and polynomials $Q=(Q_1,\ldots ,Q_q)$ such that
    \begin{equation}
    \hat{g}_i(Q(z))-r
    \label{posi_Z}
    \end{equation}
    is a weighted sum of squares, for $i=1,\ldots ,m_Z$, and
    \begin{equation}
    \hat{h}_j(Q(z))=\sum_{l,j'}s_{jj'l}(z)\hat{h}_{j'}(z)s'_{jj'l}(z),
    \label{linear_Z}
    \end{equation}
    for some polynomials $s_{jj'l}, s'_{jj'l}$, for $j=1,\ldots ,m_Z'$.
    \item Let $\{\tilde{g}_i(y)\geq 0\}_i\cup\{\tilde{h}_j(y)=0\}_j$ be the remaining constraints on $y$. Then $f(y,z)$ satisfies
    \begin{align}
    &\frac{d^2f(y,z+th)}{dt^2}\Bigr|_{t=0}=\sum_{l,k}\zeta_{lk}^+(y,z,h)[y_k,z_l]\zeta_{lk}^-(y,z,h)+ \eta_{lk}^+(y,z,h)[y_k,h_l]\eta_{lk}^-(y,z,h)\nonumber\\
    &+\sum_l\theta_l(z,h,y)\theta^*_l(z,h,y)+\sum_i\theta_{il}(z,h,y)\tilde{g}_i(y)\theta^*_{il}(z,h,y)+\sum_j\iota^+_{jl}(z,h,y)\tilde{h}_j(y)\iota^-_{jl}(z,h,y),
    \label{alg_proof_conv}
    \end{align}
    for some polynomials $\zeta^{\pm},\eta, \theta,\iota^{\pm}$.
\end{enumerate}
Let $(\H^\star,X^\star,\sigma^\star)$ be any bounded solution of Problem \eqref{nc_prob_hilbert}, with $X^\star=(Y^\star,Z^\star)$, and denote by $C^*(Z^\star)$ the $C^*$-algebra generated by $Z^\star_1,\ldots ,Z_q^\star$. Then, there exist positive linear functionals $\{\mu_i:C^*(Z^\star)\to\C\}_i$ and bounded Hermitian linear functionals $\{\lambda_i:C^*(Z^\star)\to\C\}_j$ such that
\begin{subequations}
\begin{align}
&\mu_i(\hat g_i(Z^\star))=0,\quad i=1,\ldots ,m_Z,\\ 
&\sigma^\star\left(\nabla_z f(p)\right) - \sum_i\mu_i\left(\nabla_z \hat g_i(p)\right)-\sum_j\lambda_j\left(\nabla_z \hat h_j(p)\right)=0,\quad \forall p\in C^*(Z^\star)^{q}.
\end{align}
\end{subequations}
\end{theo}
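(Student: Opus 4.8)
The plan is to reduce the statement to a convex optimization problem over the $C^*$-algebra $\A$ generated by $Z^\star$, obtained by freezing the $y$-variables at $Y^\star$, and then to argue as in the proofs of Theorems~\ref{thm:conv} and \ref{thm:k->strong}. What makes this work is that $[y_r,z_s]=0$ is an equality constraint, so $[Y^\star_r,Z^\star_s]=0$ and $\A$ sits inside the commutant of the algebra generated by $Y^\star$. Consequently, for every tuple $Z\in\A^{\times q}$ of Hermitian operators with $\hat g_i(Z)\succeq0$ and $\hat h_j(Z)=0$, the triple $(\H^\star,(Y^\star,Z),\sigma^\star)$ is feasible for Problem~\eqref{nc_prob_hilbert}: the mixed commutators vanish because $Z$ commutes with $Y^\star$, the constraints $\tilde g_k(Y^\star)\succeq0$ and $\tilde h_k(Y^\star)=0$ hold since $(Y^\star,Z^\star)$ is feasible, and the remaining $z$-constraints hold by assumption. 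By optimality of $(\H^\star,X^\star,\sigma^\star)$, the tuple $Z^\star$ thus minimizes $F(Z):=\sigma^\star\big(f(Y^\star,Z)\big)$ over the convex set $C:=\{Z\in\A^{\times q}:\hat g_i(Z)\succeq0,\ \hat h_j(Z)=0\}$.

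Next I would establish convexity and a Slater point. Evaluating the algebraic identity of hypothesis~(d) at $y=Y^\star$, and using that the $[y,z]$- and $[y,h]$-terms vanish for $z,h\in\A^{\times q}$ while the $\tilde g_i(Y^\star)$-weighted terms are positive operators and the $\tilde h_j(Y^\star)$-weighted terms vanish, shows that the $z$-Hessian of $f$ at any point of $\A^{\times q}$ is a positive operator; hence $z\mapsto f(Y^\star,z)$ is operator convex over $\A^{\times q}$ and $F$ is convex, while $C$ is convex by hypothesis~(b). Evaluating hypothesis~(c) at $Z^\star$ shows that $Q(Z^\star)\in\A^{\times q}$ is strictly feasible: $\hat g_i(Q(Z^\star))\succeq r\,\id\succ0$ and $\hat h_j(Q(Z^\star))=0$. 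I would then obtain the multipliers as in the proof of Theorem~\ref{thm:conv}: writing each matrix-concave $\hat g_i$ as a Schur complement of a linear pencil \cite{convex_poly} and using Lemma~\ref{lem:schur}, the scalar point $\hat z:=\bra{\phi}Q(Z^\star)\ket{\phi}\in\R^q$ (for any unit vector $\phi\in\H^\star$) is strictly feasible, so after translating $z$ by $\hat z$, rescaling the $\hat g_i$ to be monic, and eliminating the linearly independent affine $\hat h_j$ by back-substitution, the convex Positivstellensatz of \cite{Helton_2012}---now applied \emph{relative to the commuting block $y$}, the relative convexity being exactly hypothesis~(d)---should produce a partial weighted SOS decomposition
\begin{equation*}
f-p^\star=\sum_l s_ls_l^*+\sum_{i,l}s_{il}\,\hat g_i(z)\,s_{il}^*+\sum_{j,l}\big(s^+_{jl}\hat h_j(z)(s^-_{jl})^*+s^-_{jl}\hat h_j(z)(s^+_{jl})^*\big)+R,
\end{equation*}
with $s_l,s_{il},s^\pm_{jl}\in\P$ and $R$ a sum of terms $w^+[y_r,z_s]w^-$, $v\,\tilde g_k(y)\,v^*$ and $u^+\tilde h_k(y)u^-$, i.e.\ terms that are killed under $\sigma^\star$.

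Finally I would run the variation argument of Theorem~\ref{thm:k->strong}. Applying $\sigma^\star$ to the identity above and using $\sigma^\star(f-p^\star)=0$ forces each summand to have zero $\sigma^\star$-average; by Cauchy--Schwarz on the relevant positive semidefinite sesquilinear forms, together with $\hat h_j(Z^\star)=\tilde h_k(Y^\star)=[Y^\star_r,Z^\star_s]=0$, all first-order cross terms then vanish except for the $\nabla_z\hat g_i$- and $\nabla_z\hat h_j$-pieces, so that applying the homomorphism $\pi^\delta$ with $\pi^\delta(z_i)=z_i+\delta p_i(z)$, $\pi^\delta(y_j)=y_j$ and collecting order-$\delta$ terms yields, for all $p\in\A^{\times q}$,
\begin{equation*}
\sigma^\star\!\big(\nabla_z f(p)\big)=\sum_i\mu_i\!\big(\nabla_z\hat g_i(p)\big)+\sum_j\lambda_j\!\big(\nabla_z\hat h_j(p)\big),
\end{equation*}
with $\mu_i(q):=\sigma^\star\big(\sum_l s_{il}(X^\star)\,q\,s_{il}^*(X^\star)\big)$ positive (hence bounded) on $\A$, $\lambda_j$ the difference of two analogously defined positive functionals, and $\mu_i(\hat g_i(Z^\star))=0$ the vanishing $\sigma^\star$-average of the $\hat g_i$-block (complementary slackness). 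The crux is the ``relative'' convex Positivstellensatz: one must certify $f-p^\star$ inside the quadratic module generated by $\{\hat g_i(z)\}_i$ \emph{modulo} the ideal generated by the commutators $[y_r,z_s]$ and the equalities $\tilde h_k(y)$, and modulo the module generated by $\{\tilde g_k(y)\}_k$, using only that the $z$-Hessian of $f$ lies in that combined cone---equivalently, a version of \cite{Helton_2012} with coefficients in the $C^*$-algebra generated by $Y^\star$. I expect this to follow from the proof of \cite{Helton_2012} by carrying the commuting $y$-block along unchanged, but it is the delicate point; note that the SDP-relaxation strategy of Theorems~\ref{essential_theo}--\ref{theo_null_epsilon}, while it would give uniform bounds on the multipliers via the Slater estimate $\sigma^\star(\nabla_z f(Q-Z^\star))\ge r\sum_i\mu_i(1)$ (as in Lemma~\ref{lemma_MF_bounded}), only produces functionals on $\P$ and thus does not by itself yield the operator-level conclusion stated here.
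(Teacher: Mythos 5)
Your reduction to the convex problem $\min_{Z\in\A^{\times q}}\sigma^\star(f(Y^\star,Z))$ subject to $\hat g_i(Z)\succeq0$, $\hat h_j(Z)=0$, along with your verification that hypothesis~(d) gives operator convexity and that $Q(Z^\star)$ is a Slater point, coincides with the first half of the paper's proof. You then diverge in a way that leaves a genuine gap: you route the argument through a ``relative'' convex Positivstellensatz over the free $*$-algebra modulo the commutator ideal generated by $[y_r,z_s]$ and the $y$-constraints, certifying $f-p^\star$ in a quadratic module with coefficients in the $C^*$-algebra generated by $Y^\star$. No such result is established in \cite{Helton_2012} or elsewhere in the paper, and you yourself flag it as ``the delicate point''. ``Carrying the commuting $y$-block along unchanged'' is not a proof: the Helton--Klep--McCullough argument relies on the specific structure of monic linear pencils and truncated quadratic modules in a free algebra, and it is not clear how to adapt it once one quotients by a commutator ideal and adjoins a second (possibly unbounded, since Archimedeanness is not assumed) block of generators.

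The paper avoids this obstruction entirely. Instead of a polynomial-level SOS certificate, it applies Lemma~\ref{slater_lemma}, a Hahn--Banach separation argument formulated \emph{directly} over the $C^*$-algebra $\A$: one separates the achievability set $A$ from the strict-improvement set $B$ in $\R\times\A_h^{\times(m_Z+m'_Z)}$, and the Slater point $\hat Z=Q(Z^\star)$ forces the normal-cone coefficient of the objective to be nonzero. This gives at once the positive functionals $\mu_i$, the bounded Hermitian functionals $\lambda_j$, complementary slackness, and that $Z^\star$ minimizes the (convex) Lagrangian $\mathcal L(Z;\mu,\lambda)$; differentiating along $Z(t)=Z^\star+tp(Z^\star)$ at $t=0$ then yields the conclusion. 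This route is both shorter and more robust, since it requires no Positivstellensatz, no Archimedeanness, and no control over polynomial degrees. If you want to salvage your approach you would need to first prove the ``relative'' convex Positivstellensatz; as written, the proposal is incomplete.
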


The following lemma is the key to arrive at this result.
\begin{lemma}
\label{slater_lemma}
Let $\A$ be a $C^*$-algebra, call $\A_h$ its set of Hermitian elements. Let $Z=(Z_1,\ldots ,Z_q)$ be a set of Hermitian operator variables and let $\tilde{f}:\A^{q}\to\C$ be a Hermitian convex function\footnote{Namely, $f(\delta Z^1+(1-\delta) Z^2)\leq \delta f(Z^1)+(1-\delta)f(Z^2)$, for all $\delta\in\R$, $0\leq\delta\leq 1$, $Z^1,Z^2\in\A_h^{q}$.}. Given some \emph{concave} non-commutative polynomials $\{\hat{g}_i\}$ and linearly independent affine linear  polynomials $\{\hat{h}_i\}$, consider the optimization problem
\begin{equation}\label{nc_prob_algebras}
\begin{aligned}
\min_{Z\in\A^{q}}\ &\hat{f}(Z)\\ 
\text{s.t. }&\hat{g}_i(Z)\geq 0,\quad i=1,\ldots ,m_Z,\\ 
&\hat{h}_j(Z)=0,\quad j=1,\ldots ,m'_Z.
\end{aligned}
\end{equation}
Suppose that an optimal solution exists, call it $Z^\star$. 

Further assume that Problem \eqref{nc_prob_algebras} admits a strictly feasible point, i.e., there exists a feasible tuple $\hat{Z}\in\A^{q}$ such that
\beq
\begin{split}
&\hat{g}_i(\hat{Z})>0,\quad i=1,\ldots ,m_Z,\\ 
&\hat{h}_j(\hat{Z})=0,\quad j=1,\ldots ,m'_Z.
\end{split}
\eeq
Then, there exist positive linear functionals $\mu_i:\A\to\C$, $i=1,\ldots,m_Z$, and bounded Hermitian linear functionals $\lambda_j:\A\to\C$, $j=1,\ldots ,m'$ satisfying
\begin{equation}
\mu_i(\hat{g}_i(Z^\star))=0,\quad i=1,\ldots ,m_Z,\label{conv_slackness}    
\end{equation}
such that $Z^\star$ is a solution of the unconstrained optimization problem
\begin{equation}
\min_{Z\in\A_h^{q}}{\cal L}(Z;\mu,\lambda),
\label{global_optim}
\end{equation}
with
\begin{align}
{\cal L}(Z;\mu, \lambda):=\hat{f}(Z)-\sum_i\mu_i(\hat{g}_i(Z))-\sum_j\lambda_j(\hat{h}_j(Z)).
\end{align}
\end{lemma}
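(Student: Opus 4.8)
The plan is to prove Lemma \ref{slater_lemma} by a convex-analytic argument, essentially an operator-algebraic version of the standard Lagrangian duality proof of KKT under Slater's condition, and then derive Theorem \ref{theo_partial_conv} by a straightforward specialization.

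For Lemma \ref{slater_lemma}: First I would set up the space of feasible ``value vectors''. Since $\hat f$ is convex on $\A_h^{\times q}$, each $\hat g_i$ is concave (as a polynomial in the $C^*$-algebra $\A$, by the Helton--McCullough structure theorem each is of the form $c_i+\Lambda_{i0}(Z)-\sum_l\Lambda_{il}(Z)^*\Lambda_{il}(Z)$, hence operator-concave in the algebraic sense), and each $\hat h_j$ is affine, the relevant epigraph-type set is convex. Concretely, consider the set $S\subseteq \R\times\A_h^{\times m_Z}\times\A_h^{\times m_Z'}$ (or a suitable finite-dimensional slice thereof) of all tuples $(\hat f(Z)+t,\, \hat g_1(Z)-G_1,\ldots,\, \hat h_1(Z),\ldots)$ with $Z\in\A_h^{\times q}$, $t\ge 0$, $G_i\ge 0$; convexity of $S$ follows from the convexity/concavity/affinity assumptions. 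The optimal value $p^\star=\hat f(Z^\star)$ is not in the interior of the ``sublevel'' direction, so by a Hahn--Banach separation there is a nonzero continuous linear functional separating $(p^\star-\delta,0,\ldots,0)$ from $S$. This functional has a component $\alpha\in\R$ multiplying the objective, components $\mu_i$ (bounded linear functionals on $\A$, which one shows are positive by testing against $G_i\ge 0$ directions and positive $t$), and components $\lambda_j$ (bounded Hermitian functionals on $\A$, unconstrained in sign since $\hat h_j$ ranges over a subspace). The strict feasibility of $\hat Z$ guarantees $\alpha>0$ (otherwise the separating functional would have to annihilate the strictly-feasible direction, forcing it to be zero), so we may normalize $\alpha=1$. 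The separation inequality then reads exactly $\hat f(Z)-\sum_i\mu_i(\hat g_i(Z))-\sum_j\lambda_j(\hat h_j(Z))\ge p^\star$ for all $Z\in\A_h^{\times q}$, i.e.\ $Z^\star$ minimizes $\mathcal L(\cdot;\mu,\lambda)$; and evaluating at $Z=Z^\star$ together with $\mu_i\ge 0$, $\hat g_i(Z^\star)\ge 0$, $\hat h_j(Z^\star)=0$ forces $\mu_i(\hat g_i(Z^\star))=0$, which is complementary slackness \eqref{conv_slackness}. Boundedness of the $\lambda_j$ is automatic since they arise as components of a bounded functional.

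Then, to obtain Theorem \ref{theo_partial_conv} from the lemma: take the bounded solution $(\H^\star,X^\star,\sigma^\star)$ of Problem \eqref{nc_prob_hilbert}, write $X^\star=(Y^\star,Z^\star)$, and let $\A$ be the $C^*$-algebra generated by $Z^\star_1,\ldots,Z^\star_q$. Define $\hat f(Z):=\sigma^\star(f(Y^\star,Z))$ for $Z\in\A^{\times q}$; since the commutation relations \eqref{commut_YZ} hold and, by hypothesis (d), the Hessian of $f$ in the $z$-directions is a weighted sum of squares modulo the $y$--$z$ commutators and the $y$-constraints \eqref{alg_proof_conv}, the function $\hat f$ is genuinely convex on $\A_h^{\times q}$ (the commutator and $\tilde g_i$, $\tilde h_j$ terms vanish or become positive under $\sigma^\star$ because $\sigma^\star$ is a feasible state of the full problem). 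Conditions (b), (c) give precisely the concavity/affinity of $\{\hat g_i\}$, $\{\hat h_j\}$ and a strictly feasible point $\hat Z=Q(Z^\star)\in\A^{\times q}$ (strict positivity of $\hat g_i(\hat Z)$ follows from \eqref{posi_Z} being SOS with a positive shift $-r$, exactly as in the proof of Theorem \ref{thm:conv}). Finally $Z^\star$ itself is optimal for Problem \eqref{nc_prob_algebras}, since any feasible $Z\in\A^{\times q}$ gives $(Y^\star,Z)$ feasible for the full problem (the $y$-only and mixed constraints are untouched) with objective $\hat f(Z)\ge\sigma^\star(f(Y^\star,Z^\star))=\hat f(Z^\star)=p^\star$. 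Apply Lemma \ref{slater_lemma}: we get positive $\mu_i:\A\to\C$ and bounded Hermitian $\lambda_j:\A\to\C$ with $\mu_i(\hat g_i(Z^\star))=0$ and $Z^\star$ minimizing $\mathcal L(Z;\mu,\lambda)=\hat f(Z)-\sum_i\mu_i(\hat g_i(Z))-\sum_j\lambda_j(\hat h_j(Z))$ over $\A_h^{\times q}$. Taking the first-order stationarity condition of this unconstrained convex minimum in the direction $p\in\A^{\times q}$, i.e.\ differentiating $\mathcal L(Z^\star+\epsilon p;\mu,\lambda)$ at $\epsilon=0$, yields $\sigma^\star(\nabla_z f(p))-\sum_i\mu_i(\nabla_z\hat g_i(p))-\sum_j\lambda_j(\nabla_z\hat h_j(p))=0$, which is the desired operator optimality condition. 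The compatibility of $\mu_i$ (and, after a Jordan decomposition, of $\lambda_j^\pm$) with the $z$-constraints is inherited from their being positive functionals on the $C^*$-algebra $\A$ on which those constraints are satisfied as operator (in)equalities.

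The main obstacle I anticipate is the separation / Slater step in infinite dimensions: one must be careful that the set $S$ is convex with nonempty interior in an appropriate locally convex topology (or reduce to a finite-dimensional picture by fixing finitely many ``test'' directions, as the rest of the paper does via the $\|\cdot\|_{\SOS}$ seminorm), that the separating functional is genuinely continuous/bounded on $\A$ rather than merely linear, and that the multiplier $\alpha$ of the objective is strictly positive — this last point is exactly where strict feasibility $\hat g_i(\hat Z)>0$ is used, and it is the crux of the whole argument. A secondary technical point is justifying that $\hat f$ really is convex on $\A_h^{\times q}$ from the sum-of-squares Hessian identity \eqref{alg_proof_conv}: one integrates the Hessian inequality twice along the segment $tZ^1+(1-t)Z^2$ and uses that $\sigma^\star$ annihilates the commutators (because $[Y^\star_k,Z^\star_l]=0$) and is nonnegative on $\tilde g_i(Y^\star)$-weighted squares and zero on $\tilde h_j(Y^\star)$-weighted terms, which requires knowing $\sigma^\star$ is a feasible state of the full NPO — true by construction.
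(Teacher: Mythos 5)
Your proposal takes essentially the same approach as the paper's: construct a convex set in $\R\times\A^{\times (m_Z+m'_Z)}$ from the objective values, inequality slacks, and equality values; separate it from the open ``below-optimal'' segment $\{(\nu,0,0):\nu<\hat f(Z^\star)\}$ by Hahn--Banach in a normed space; and read off the multipliers $\mu_i,\lambda_j$ from the components of the separating functional. Positivity of the $\mu_i$, boundedness of the $\lambda_j$, complementary slackness from evaluation at $Z^\star$, and the Lagrangian-minimization conclusion all follow exactly as in the paper.

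There is, however, one genuine gap in the step you yourself single out as ``the crux'': showing that the objective coefficient $\alpha$ (the paper's $\phi$) is strictly positive. Strict feasibility of $\hat Z$ handles only half of this. Testing the separation inequality at $\hat Z$ when $\alpha=0$ gives $\sum_i\mu_i(\hat g_i(\hat Z))=0$, and since each $\hat g_i(\hat Z)>0$ and $\mu_i$ is positive, this forces $\mu_i=0$. But the separating functional could still be nonzero through its $\lambda_j$-components alone: because $\hat h_j(\hat Z)=0$, a functional supported purely on the equality-slack coordinates is ``invisible'' at $\hat Z$, and strict feasibility tells you nothing about it. Your parenthetical ``otherwise the separating functional would have to annihilate the strictly-feasible direction, forcing it to be zero'' glosses over exactly this. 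The paper closes the gap by observing that $\alpha=0,\mu_i=0$ force $\sum_j\lambda_j(\hat h_j(Z))$ to be constant in $Z$, hence $\sum_j\beta_{jk}\lambda_j=0$ for all $k$ (with $\hat h_j(Z)=\sum_k\beta_{jk}Z_k-b_j$), and then invokes \emph{linear independence of the rows of $\beta$} — a hypothesis present in Theorem \ref{theo_partial_conv} but not written into the lemma's own statement — to conclude $\lambda_j=0$, contradicting nontriviality of the separating functional. To repair your argument you would either need to import that linear-independence assumption explicitly, or first eliminate redundant equality constraints before separating.
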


\begin{proof}
It suffices to follow the classical proof of the Slater criterion for strong duality (cf.~\cite[\S 4.2]{nesterov}). Given $Z^\star$, we define the sets:
\beq
\begin{aligned}
A:=&\{(r,S,T):r\in\R,\,S\in \A_h^{m_Z},\,T\in \A_h^{m'_Z},\\ 
&\phantom{(r,s,T):\ \ }\,\exists Z_1,\ldots ,Z_n\in \A_h, \hat{f}(Z)\leq r,\, -\hat{g}_i(Z)\leq S_i,\,\hat{h}_j(Z)=T_j,\;\forall i,j\},\\ 
B:=&\{(\nu, 0,0):\nu<\hat{f}(Z^\star)\}.
\end{aligned}
\eeq
Clearly, $A\cap B=\emptyset$. Also, both sets are convex. Since they live in a real normed space (namely, $\R\times \A^{m_Z+m'_Z}$) and $B$ is open, the Hahn-Banach separation theorem
\cite[Theorem V.4(a)]{reedsimon} 
 implies that there exists a separating linear functional $(\phi, \mu, \lambda)$ and $\alpha\in\R$ such that
 \begin{subequations}
\begin{align}
&\phi r+\sum_i\mu_i(S_i)+\sum_j \lambda_j(T_j)\geq \alpha,\quad\forall (r,S,T)\in A,\label{eqs_A}\\
&\phi\nu\leq \alpha,\quad \forall (\nu,0,0)\in B.\label{eqs_B}
\end{align}
\end{subequations}
Note that, from the definition of $A$, for any $y\in \A$, $(r,S,T)\in A$ implies that $(r,S',T)\in A$, with $S'_i=S_i+yy^* $, and $S'_j=S_j,$ for $j\not=i$. Now, suppose that there exists $y$ such that $\mu_i(yy^* )<0$. Then, we could make the left-hand side of Eq.~\eqref{eqs_A} arbitrarily small, just by replacing $S_i$ with $S_i+uyy^* $, with $u\in \R^+$ sufficiently large. It follows that, for all $i$, $\mu_i(yy^* )\geq0$, i.e., $\{\mu_i\}_i$ are positive linear functionals of $\A$.

Notice as well that we can choose $\nu$ to be arbitrarily small in Eq.~\eqref{eqs_B}. It follows that $\phi\geq 0$. We next prove that $\phi>0$. 

From Eqs.~\eqref{eqs_A}, \eqref{eqs_B} we have that
\begin{equation}
\phi\hat{f}(Z) - \sum_i\mu_i(\hat{g}_i(Z))-\sum_j\lambda_j(\hat{h}_i(Z))\geq\alpha\geq \phi \hat{f}(Z^\star),\quad\forall Z\in\A^{\p}.
\label{casi_fin}
\end{equation}
Now, take $Z=\hat{Z}$. We have that
\begin{equation}
\phi(\hat{f}(\hat{Z})-\hat{f}(Z^\star)) \geq \sum_i\mu_i(\hat{g}_i(\hat{Z})).
\end{equation}
Now, suppose that $\phi=0$. Then, the equation above implies that $\mu_i(\hat{g}_i(\hat{Z}))=0$, for all $i$. Since $\hat{g}_i(\hat{Z})>0$, it follows that $\mu_i=0$ for all $i$. Hence we deduce that $\phi=0$ implies $\mu_i=0$ for all $i$. Therefore, Eq.~\eqref{eqs_A} implies that
\begin{equation}
\sum_j\lambda_j(\hat{h}_j(Z))\geq \alpha,\quad\forall Z.
\end{equation}
This can only be true if the left-hand side does not depend on $Z$ at all. Now, let $\hat{h}_j(Z):=\sum_k\beta_{jk}Z_k-b_j$. Non-dependence on $Z_k$ implies that the functional $\sum_j\beta_{jk}\lambda_j$ vanishes, for all $k$. Now, take any $W\in \A$ such that there exists $l$ with $\lambda_l(W)\not=0$. Then, we have that $\sum_j\beta_{jk}\lambda_j(W)=0$ for all $k$, and thus the rows of the matrix $\beta$ are not linearly independent. It follows that $\lambda_j=0$ for all $j$. However, that would imply that the separating linear functional $(\phi,\mu, \lambda)$ is zero, which contradicts the Hahn-Banach theorem.

From all the above it follows that $\phi>0$. Dividing Eq.~\eqref{casi_fin} by $\phi$, we have that
\begin{equation}
\hat{f}(Z) - \sum_i\tilde{\mu}_i(\hat{g}_i(Z))-\sum_j\tilde{\lambda}_j(\hat{h}_j(Z))\geq \hat{f}(Z^\star),\quad\forall Z,
\label{casi_fin2}
\end{equation}
where $\tilde{\mu}_i:=\frac{1}{\phi}\mu_i$ are positive linear functionals and $\tilde{\lambda}_j:=\frac{1}{\phi}\lambda_j$ are linear functionals. 

Finally, take $Z=Z^\star$ in Eq.~\eqref{casi_fin2}. We arrive at:
\begin{equation}
f(Z^\star) -\sum_i\tilde{\mu}_i(\hat{g}_i(Z^\star))\geq \hat{f}(Z^\star).
\end{equation}
This can only be true if the second term of the left-hand-side of the equation above vanishes, i.e., if $Z^\star$ satisfies the complementary slackness condition \eqref{conv_slackness}. In that case, 
\begin{equation}
{\cal L}(Z^\star,\tilde{\mu},\tilde{\lambda})=\hat{f}(Z^\star),
\end{equation}
and so, by the above equation and \eqref{casi_fin2}, $Z^\star$ is a global solution of the unconstrained problem %$\min_{Z\in\A} {\cal L}(Z,\tilde{\mu},\tilde{\lambda})$.
\eqref{global_optim}.
\end{proof}

\begin{proof}[Proof of Theorem \ref{theo_partial_conv}]
Let $(\H^\star,\sigma^\star,X^\star)$ be a  solution of Problem \eqref{nc_prob_hilbert}, with $X^\star=(Y^\star,Z^\star)$. Call $\A$ the algebra generated by $Z^\star$. Since the only relations connecting $y$ with $z$ are the commutation relations \eqref{commut_YZ}, it follows that the solution $p^\star$ of Problem \eqref{nc_prob} satisfies
\beq\label{nc_aux}
\begin{aligned}
p^\star=&\min_{Z\in\A^{q}}\hat{f}(Z)\\ 
\text{s.t. }&\hat{g}_i(Z)\geq 0,\quad i=1,\ldots ,m_Z,\\ 
&\hat{h}_j(Z)=0,\quad j=1,\ldots ,m'_Z,
\end{aligned}
\eeq
with the function $\hat{f}:\A\to\C$ defined as:
\begin{equation}
\hat{f}(Z)=\sigma^\star(f(Y^\star, Z)).
\end{equation}
Moreover, one of the minimizers of \eqref{nc_aux} is $Z=Z^\star$. This function is convex by virtue of Eq.~\eqref{alg_proof_conv}, which ensures that its Hessian is non-negative \cite{convex_poly}.

In addition, by Eqs.~\eqref{posi_Z}, \eqref{linear_Z}, we know that the choice $\hat{Z}=Q(Z^\star)$ satisfies the constraints $\hat{g}_i(\hat{Z})>0$ for all $i$, $\hat{h}_j(\hat{Z})=0$ for all $j$. We can thus invoke Lemma \ref{slater_lemma} and conclude that $Z^\star$ is the solution of the unconstrained Problem \eqref{global_optim}, for some positive linear functionals $\{\mu_i\}_i$ and bounded Hermitian linear functionals $\{\lambda_j\}_j$. Next, for any $q$-tuple of symmetric polynomials $p$ on $z$, consider the following trajectory in $\A^{q}$ 
\begin{equation}
Z(t):=Z^\star+ tp(Z^\star).
\end{equation}
Since $Z^\star$ is a minimizer of Problem \eqref{global_optim}, it follows that
\begin{equation}
\begin{split}
0 & =\frac{d{\cal L}(Z(t);\mu,\lambda)}{dt}\Bigr|_{t=0}\\
&=\sigma^\star\left(\nabla_zf(Y^\star,z)(p(Z^\star))%\Bigr|_{Z=Z^\star}
\right)-\sum_i\mu^k_i\left(\nabla_z \hat{g}_i(p)\Bigr|_{Z=Z^\star}\right)-\sum_j\lambda_j\left(\nabla_z \hat{h}_j(p)\Bigr|_{Z=Z^\star}\right).
\end{split}
\end{equation}
Since this relation is valid for arbitrary $p\in\Z^{q}$, we arrive at the statement of the theorem.    
\end{proof}

\section{Applications}
\label{sec:applications}
\subsection{Many-body quantum systems}
\label{sec:many_body}
A consequence of the state optimality condition \eqref{state_optimality} is that the computation of the properties of condensed matter systems at zero temperature admits an NPO formulation. Consider, for instance, an $n$-qubit quantum system. Each such qubit or subsystem $j$ has an associated set of operators $\sigma_x^j,\sigma_y^j,\sigma_z^j$, which form a \emph{Pauli algebra}:
\begin{equation}
\begin{aligned}
&(\sigma_x^j)^2=(\sigma_y^j)^2=(\sigma_z^j)^2=1,\\ 
&\sigma_x^j\sigma_y^j-i\sigma^j_z=\sigma_y^j\sigma_z^j-i\sigma^j_x=\sigma_z^j\sigma_x^j-i\sigma^j_y=0.
\end{aligned}
\label{Pauli}
\end{equation}
In a sense, these operators represent everything we can measure in any such subsystem. Being independent systems, the operators of different subsystems commute:
\begin{equation}
[\sigma^j_a,\sigma^k_b]=0,\quad a,b\in\{x,y,z\},\;j\not=k.
\label{Pauli_comm}
\end{equation}
The set of constraints \eqref{Pauli}, \eqref{Pauli_comm} admits 
(up to unitary equivalence)
a unique irreducible operator representation $\pi:\P\to B(\C^2)^{\otimes n}$, with
\beq
\begin{split}
&\pi(\sigma_x^j):=\id_2^{\otimes {j-1}}\otimes \left(\begin{array}{cc}
 0 & 1 \\
  1   &0 
\end{array}\right) \otimes \id_2^{\otimes {n-j}},\\
&\pi(\sigma_y^j):=\id_2^{\otimes {j-1}}\otimes \left(\begin{array}{cc}
 0 & -i \\
  i   &0 
\end{array}\right)\otimes \id_2^{\otimes {n-j}},\\
&\pi(\sigma_z^j):=\id_2^{\otimes {j-1}}\otimes \left(\begin{array}{cc}
 1 & 0 \\
  0   &-1 
\end{array}\right)\otimes \id_2^{\otimes {n-j}}.
\end{split}
\eeq
It is easy to see that $\pi$ satisfies the conditions of Proposition \ref{prop:finite_SOS}.

The $n$ qubits jointly interact through a $2$-local Hamiltonian. This is an operator of the form
\begin{equation}
H(\sigma)=\sum_{j>k}^nP_{jk}(\sigma^j,\sigma^k),
\label{hamiltonian}
\end{equation}
where $\sigma^j:=(\sigma_x^j,\sigma_y^j,\sigma_z^j)$ and $P_{jk}$ is a polynomial of degree $2$. At zero temperature, the system is described by one of the eigenvectors of $\pi(H)$ with minimum eigenvalue. Any such eigenvector is called a \emph{ground state}.

For large $n$, computing $E_0(H)$ is Quantum-Merlin-Arthur-hard (\textbf{QMA}-hard)~\cite{Liu2007}. Quantum chemists~\cite{Nakata2001,Mazziotti2004,Mazziotti2023} (and, more recently, condensed matter physicists~\cite{Han2020quantum,Kull2022,Requena2023,Wang2023}) use NPO to lower bound $E_0(H)$. In essence, they relax the problem
\begin{equation}
\begin{aligned}
E_0(H)=&\min\rho(H)\\ 
\mbox{s.t. }&(\sigma_x^j)^2=(\sigma_y^j)^2=(\sigma_z^j)^2=1,\quad j=1,\ldots ,n,\\ 
&\sigma_x^j\sigma_y^j-i\sigma^j_z=\sigma_y^j\sigma_z^j-i\sigma^j_x=\sigma_z^j\sigma_x^j-i\sigma^j_y=0,\quad j=1,\ldots ,n,\\ 
&[\sigma^j_a,\sigma^k_b]=0,\quad a,b\in\{x,y,z\},\;j\not=k\\ 
\end{aligned}
\end{equation}
through hierarchies of SDPs. 

Knowing the ground state energy of a condensed matter system is very useful: if positive, it signals that the system is unstable; if negative, its absolute value corresponds to the minimum energy required to disintegrate it. 

However, both physicists and chemists are also interested in estimating other properties of the set of ground states. Take, for instance, the magnetization of the sample. Basic quantum mechanics teaches us that the magnetization $M$ of a condensed matter system at zero temperature lies in $[M^-,M^+]$, with
\begin{equation}
M^{\pm}:=\mp\min\{\mp\bra{\psi}\pi(\sum_j\sigma_z^j)\ket{\psi}:\bra{\psi}\bullet\ket{\psi}\in\Gr (\pi(H))\}.
\label{magnet}
\end{equation}
For instance, \citet{Wang2023} study a relaxation of this problem. First, using variational methods, they derive an upper bound $E^+_0$ on $E_0(H)$. Next, they relax the NPO problem:
\begin{equation}\label{variational}
\begin{aligned}
\bar{M}^{\pm}=&\mp\min\{\mp\rho(\sum_j\sigma_z^j)\}\\ 
\mbox{s.t. }&(\sigma_x^j)^2=(\sigma_y^j)^2=(\sigma_z^j)^2=1,\quad j=1,\ldots,n,\\ 
&\sigma_x^j\sigma_y^j-i\sigma^j_z=\sigma_y^j\sigma_z^j-i\sigma^j_x=\sigma_z^j\sigma_x^j-i\sigma^j_y=0,\quad j=1,\ldots,n,\\ 
&[\sigma^j_a,\sigma^k_b]=0,\quad a,b\in\{x,y,z\},\;j\not=k,\\ 
&\rho(H)\leq E^+_0.
\end{aligned}
\end{equation}
Any SDP relaxation of the problem above of order $k$ will produce two quantities $\bar{M}_k^{\pm}$, with the property that $M\in[\bar{M}_k^-,\bar{M}_k^+]$.

However, the method proposed by \citet{Wang2023} is only feasible when good variational methods for the considered Hamiltonian are available. Indeed, given a loose upper bound $E^+_0$ on $ E_0(H)$, one should not expect great results. Correspondingly, the numerical results of~\cite{Wang2023} are remarkable for 1D quantum systems. Those have Hamiltonians of the form $H=\sum_jP_{j,j+1}(\sigma^j,\sigma^{j+1})$, and one can obtain good approximations to their ground state energies via tensor network state methods~\cite{Cirac2021,Landau2015,Verstraete2006}. The results of~\cite{Wang2023} are not that good for 2D systems, namely, qubit systems with a Hamiltonian of the form \eqref{hamil_2D} below. For such systems, current variational tools are very imprecise~\cite{Schuch2007}.

The state optimality condition \eqref{state_optimality} allows us to formulate Problem \eqref{magnet} as the following NPO:
\begin{equation}\label{chemistry_opt}
\begin{aligned}
m^{\pm}=&\mp\min\rho(\mp\sum_j\sigma_z^j)\\ 
\mbox{s.t. }&(\sigma_x^j)^2=(\sigma_y^j)^2=(\sigma_z^j)^2=1,\quad j=1,\ldots,n,\\ 
&\sigma_x^j\sigma_y^j-i\sigma^j_z=\sigma_y^j\sigma_z^j-i\sigma^j_x=\sigma_z^j\sigma_x^j-i\sigma^j_y=0,\quad j=1,\ldots,n,\\ 
&[\sigma^j_a,\sigma^k_b]=0,\quad a,b=x,y,z;\quad j\not=k, \\
&\rho\in \Gr (H).
\end{aligned}
\end{equation}
In turn, the last constraint can be modeled by enforcing the relations:
\begin{subequations}
\begin{align}
\sigma([H,p])&=0
\label{commut_state_opt}
% \end{equation}
% and
% \begin{equation}
\\
\sigma\left(p^* H p-\frac{1}{2}\{H,p^* p\}\right)&\geq 0.
\label{pos_state_opt}
\end{align}
\end{subequations}
{Note that the quotient of $\C\langle\{\sigma_a^i:a=1,2,3,i=1,\ldots ,n\}\rangle$ with the Pauli constraints \eqref{Pauli}, \eqref{Pauli_comm} has dimension $4^n$. Hence, by Remark \ref{remark_suff_conds_state_opt} and Theorem \ref{theo:suff_state_optimality}, the resulting SDP hierarchy converges to the solution of Problem \eqref{chemistry_opt}.}

The advantage of the formulation \eqref{chemistry_opt} with respect to \eqref{variational} is that it does not require any upper bound on $E_0(H)$. Problem \eqref{chemistry_opt} is thus appropriate to tackle 2D and 3D systems, and even spin glasses~\cite{Mezard1987}. 

We illustrate our technique by bounding the ground state energy and magnetization of a translation-invariant Heisenberg model in 1D and 2D with periodic boundary conditions. All calculations were done using the toolkit for non-commutative polynomial optimization Moment \cite{Garner2024}, the modeller YALMIP \cite{yalmip}, and the solver MOSEK \cite{mosek}. The code used to produce these results is available at \url{https://github.com/ajpgarner/heisenberg_energy}, together with code doing the direct diagonalization of the sparse Hamiltonian used to compute the exact values we compare to.

For simplicity, the only symmetry of the problem we exploited was translation invariance. A full use of the symmetries of problem, as done in Ref.~\cite{Wang2023}, leads to dramatic improvements in performance.

In the 1D case, the Hamiltonian reads
\begin{equation}
H = \frac14\sum_{i=0}^{n-1}\sum_{a\in\{x,y,z\}} \sigma^i_a\sigma_a^{i\oplus 1},
\end{equation}
where addition $\oplus$ is modulo $n$.

Results for the energy are shown in Table \ref{tb:1denergy}. Here, the lower bounds are much tighter than the upper bounds. This is because when calculating the lower bound the state optimality condition \eqref{state_optimality} is only a tightening of the SDP: without it, the SDP hierarchy \eqref{k_relaxation} would converge anyway to the ground state energy. In calculating the upper bound, however, the state optimality condition \eqref{state_optimality} is doing all the work, as without it the SDP would converge to the system's maximum energy.

	\begin{table}[ht] 
	\begin{tabular}{c|ccc}
		$n$ & Lower bound & Exact value & Upper bound \\ \hline
        $6$ & $-0.4671$ & $-0.4671$ & $-0.4671$ \\
        $7$ & $-0.4079$ & $-0.4079$ & $-0.4079$ \\
		$8$ & $-0.4564$ & $-0.4564$ & $-0.4564$ \\
		$9$ & $-0.4251$ & $-0.4219$ & $-0.4189$ \\
		$10$& $-0.4515$ & $-0.4515$ & $-0.4306$ \\
		$11$& $-0.4460$ & $-0.4290$ & $-0.4020$ \\
        $12$& $-0.4492$ & $-0.4489$ & $-0.3886$ \\
        $13$& $-0.4475$ & $-0.4330$ & $-0.3987$ \\
        $14$& $-0.4518$ & $-0.4474$ & $-0.3013$ \\
        $15$& $-0.4506$ & $-0.4356$ & $-0.3001$ \\
        $16$& $-0.4509$ & $-0.4464$ & $-0.3013$ \\
        $17$& $-0.4501$ & $-0.4373$ & $-0.3004$ \\
	\end{tabular}
	\caption{Ground state energy per site of 1D Heisenberg model. Up to $n=10$ we use all nearest-neighbor monomials of degree up to 4, for $n=11$ until $n=13$ degree up to 3, and for higher $n$ degree up to 2.}
	\label{tb:1denergy}
	\end{table}

The magnetization is given by
\begin{equation}
M = \sum_{i=0}^{n-1} \sigma^i_z.
\end{equation}
Note that $H$ has the symmetry $\sigma_x^{\otimes n} H \sigma_x^{\otimes n} = H$, whereas the magnetization obeys $\sigma_x^{\otimes n} M \sigma_x^{\otimes n} = -M$. This implies that if the magnetization of the ground state $\ket{g}$ is $m$, then $\sigma_x^{\otimes n}\ket{g}$ will also be a ground state with magnetization $-m$. If these states are equal (up to a global phase), this implies that $m=0$. Otherwise the ground state is degenerate and both alternatives show up. We have found numerically that for even $n$ the magnetization per site (i.e., the magnetization divided by the number $n$ of qubits) is always zero, and for odd $n$ it is $\pm 1/n$.

Since the SDP respects the same symmetries as the original problem, if it gives $-m$ as a lower bound to the magnetization, it will give $m$ as an upper bound. Therefore we have reported the numerical results only for the lower bound of the magnetization, together with the lowest exact value. Results are shown in Table \ref{tb:1dmag}.

	\begin{table}[ht]
	\begin{tabular}{c|ccc}
		$n$ & Lower bound & Lowest exact value \\ \hline
        $6$ & $0$ & $0$  \\
        $7$ & $-0.1469$ & $-0.1429$ \\
        $8$ & $0$ & $0$ \\
        $9$ & $-0.1118$ & $-0.1111$ \\
        $10$& $-0.0315$ & $0$ \\
        $11$& $-0.1379$  & $-0.0909$ \\
        $12$& $-0.1422$  & $0$ \\       
        $13$& $-0.1378$  & $-0.0769$ \\       
        $14$& $-0.1780$  & $0$ \\       
        $15$& $-0.1742$  & $-0.0667$ \\       
        $16$& $-0.1715$  & $0$ \\       
        $17$& $-0.1693$  & $-0.0588$ \\            
	\end{tabular}
	\caption{Magnetization per site of 1D Heisenberg model. Up to $n=10$ we use all nearest-neighbor monomials of degree up to 4, for $n=11$ until $n=13$ degree up to 3, and for higher $n$ degree up to 2.}
	\label{tb:1dmag}
	\end{table}

In the 2D case the Hamiltonian reads
\begin{equation}
H = \frac14\sum_{i,j=0}^{L-1}\sum_{a\in\{x,y,z\}} \sigma^{i,j}_a(\sigma_a^{i\oplus 1,j} + \sigma_a^{i,j\oplus 1}),
\label{hamil_2D}
\end{equation}
where $\sigma^{i,j}_a$ denotes the Pauli matrix $a$ at the $(i,j)$ site of the square lattice. Results for the energy are shown in Table \ref{tb:2denergy}, and for the magnetization in Table \ref{tb:2dmag}.

	\begin{table}[ht]
	\begin{tabular}{c|ccc}
		$n$ & Lower bound & Exact value & Upper bound \\ \hline
        $3^2$ & $-0.4637$ & $-0.4410$ &  $-0.3647$ \\
        $4^2$ & $-0.7077$ & $-0.7018$ & $-0.2383$ \\
        $5^2$ & $-0.6732$ & & $-0.2388$ \\
        $6^2$ & $-0.7086$ & & $-0.2389$
	\end{tabular}
	\caption{Ground state energy per site of 2D Heisenberg model. For $L=3$ we used all nearest-neighbor monomials of degree up to 3. For $L=4$ we used degree up to $3$ for the moment matrix and up to $2$ for the state optimality condition \eqref{state_optimality}. For higher $L$ we used degree up to $2$.}
	\label{tb:2denergy}
	\end{table}

	\begin{table}[ht]
	\begin{tabular}{c|ccc}
		$n$ & Lower bound & Lower exact value  \\ \hline
        $3^2$ & $-0.2097$ & $-0.1111$  \\
        $4^2$ & $-0.2106$ & $0$ \\
        $5^2$ & $-0.3124$ & \\
        $6^2$ & $-0.3161$ & 
	\end{tabular}
	\caption{Magnetization per site of 2D Heisenberg model. For $L=3$ we used all nearest-neighbor monomials of degree up to 3. For $L=4$ we used degree up to $3$ for the moment matrix and up to $2$ for the state optimality condition \eqref{state_optimality}. For higher $L$ we used degree up to $2$.}
	\label{tb:2dmag}
	\end{table}

Problem \eqref{chemistry_opt} can also be adapted to deal with the thermodynamic limit, $n=\infty$. In that case, we demand the Hamiltonian to have a special symmetry called translation invariance. For one-dimensional materials, $H$ would be of the form:
\begin{equation}
H=\sum_{j=-\infty}^{\infty}P(\sigma^j,\sigma^{j+1}).
\label{hamiltonian_1D}
\end{equation}
The reader could be worried by the fact that there are infinitely many operator variables. However, we can take the state $\rho$ to be translation-invariant, i.e., invariant under the $*$-isomorphisms 
\begin{equation}
\pi_R(\sigma^j_a)= \sigma^{j+1}_a, \quad \pi_L(\sigma^j_a)= \sigma^{j-1}_a.
\end{equation}
In that case, one can relax the problem of minimizing the energy-per-site $e_0(H):=\min_\rho\rho(P(\sigma^1,\sigma^{2}))$ to
\begin{equation}
\begin{aligned}
e_0^n:=&\min\rho(P(\sigma^1,\sigma^{2}))\\ 
\mbox{s.t. }&(\sigma_x^j)^2=(\sigma_y^j)^2=(\sigma_z^j)^2=1,\quad j=1,\ldots,n,\\ 
&\sigma_x^j\sigma_y^j-i\sigma^j_z=\sigma_y^j\sigma_z^j-i\sigma^j_x=\sigma_z^j\sigma_x^j-i\sigma^j_y=0,\quad j=1,\ldots,n,\\ 
&[\sigma^j_a,\sigma^k_b]=0,\quad a,b\in\{x,y,z\},\; j\not=k\\ 
&\rho(p)=\rho(\pi_L(p))=\rho(\pi_R(p)),\quad\mbox{ for }\  p,\pi_L(p),\pi_R(p)\in \C\langle\sigma^1,\ldots,\sigma^n\rangle.
\end{aligned}
\end{equation}
It can be proven that $\lim_{n\to\infty}e_0^n$ coincides with the energy-per-site in the thermodynamic limit. The \emph{bootstrap technique} adds to this NPO the first optimality condition \eqref{commut_state_opt}~\cite{Han2020,Han2020quantum}. Note that, if $p\in\C\langle\sigma^2,\ldots,\sigma^{n-1}\rangle$ has degree $k$, then the expression in \eqref{commut_state_opt}
 has degree $k+1$ and only involves the variables $\sigma^1,\ldots,\sigma^n$.

The bootstrap technique thus allows computing lower bounds on $e_0(H)$. It cannot be used, however, to bound other properties of the ground states of $H$.

Things change dramatically when we add the optimality condition \eqref{pos_state_opt}, for it also allows us to bound whatever local property of the system, such as the magnetization. For $1D$ Hamiltonians \eqref{hamiltonian_1D}, if $p$ has degree $k$ and depends on the variables $\sigma^2,\ldots,\sigma^{n-1}$, the polynomial in Eq.~\eqref{pos_state_opt} will be of degree $2k+1$ and only depend on $\sigma^1,\ldots,\sigma^n$. Thus, even though we are working in the thermodynamic limit, the state optimality condition can be evaluated. This is, in fact, the case for any translation-invariant scenario in arbitrarily many spatial dimensions.

For any local property $o$, the corresponding SDP hierarchies will converge to the exact interval of allowed values for $o$ (at zero temperature). 

In order to illustrate our technique we bounded the ground state properties of the 1D Heisenberg model, with Hamiltonian given by 
\begin{equation}
H = \frac14\sum_{i=-\infty}^{\infty}\sum_{a\in\{x,y,z\}} \sigma^i_a\sigma_a^{i+1}.
\end{equation}
Results for the energy are shown in Table \ref{tb:1denergy_thermo}.

	\begin{table}[ht] 
	\begin{tabular}{c|ccc}
		$n$ & Lower bound & Upper bound \\ \hline
        $6$ & $-0.4671$ & $-0.3751$ \\
        $7$ & $-0.4564$ & $-0.3930$ \\
		$8$ & $-0.4564$ & $-0.4004$ \\
		$9$ & $-0.4520$ & $-0.4045$ \\
		$10$& $-0.4516$ & $-0.4069$ \\
		$11$& $-0.4500$ & $-0.4084$ \\
        $12$& $-0.4490$ & $-0.4097$ \\
	\end{tabular}
	\caption{Ground state energy per site of 1D Heisenberg model in the thermodynamic limit. For comparison the exact value is $1/4-\log(2) \approx -0.4431$. We used all nearest-neighbor monomials of degree up to 4.}
	\label{tb:1denergy_thermo}
	\end{table}

%In the absence of numerical tests, though, we cannot say much about the speed of convergence of the hierarchy. How difficult is computing ground state properties of physically compelling Hamiltonians in the thermodynamic limit? In view of our numerical results for finite systems, one would expect to obtain reasonably good bounds, given reasonable computational resources.

\subsection{The curious case of quantum Bell inequalities}
\label{sec:bell}

Consider a quantum bipartite Bell experiment~\cite{Bell1964,Tsirelson1987}, where two separate parties conduct measurements on an entangled quantum state. The first party, Alice, conducts measurement $x\in \{1,...,n\}$ and obtains outcome $a\in\{1,...,d\}$. The second party, Bob, respectively calls $y\in\{1,...,n\},b\in\{1,...,d\}$ his measurement setting and outcome. This configuration defines an \emph{$nndd$ Bell scenario}\footnote{In general bipartite Bell scenarios, the two parties can have different numbers of measurements and different numbers of possible outcomes. We will not consider such scenarios in this article.}. If Alice and Bob conduct many experiments, then they can estimate the probabilities $P=(P(a,b|x,y):x,y=1,\ldots ,n;\,a,b=1,\ldots ,d)$. Given a linear functional $C$ on $P$ (also called a Bell functional), we wish to determine the minimum value of 
\begin{equation}
C(P):=\sum_{a,b,x,y}C(a,b,x,y)P(a,b|x,y)
\end{equation}
compatible with quantum mechanics. This leads us to formulate the following NPO:
\begin{equation}\label{quantum_npo}
\begin{aligned}
c^\star=&\min\sigma\left(\frac{1}{2}\sum_{a,b,x,y}C(a,b,x,y)\{E_{a|x},F_{b|y}\}\right)\\ 
\mbox{s.t. }&E_{a|x}\geq 0,\quad \forall a,x,\\ 
&\sum_aE_{a|x}-\id=0,\quad\forall x,\\ 
&F_{b|y}\geq 0,\quad\forall b,y,\\ 
&\sum_bF_{b|y}-\id=0,\quad\forall y,\\ 
&[E_{a|x},F_{b|y}]=0,\quad\forall a,b,x,y.
\end{aligned}
\end{equation}
As we can appreciate, taking the partition $X=(E,F)$, the NPO satisfies the conditions of Theorem \ref{theo_partial_conv}, with $Q^A_{a|x}(E)=Q^B_{b|y}(F)=\frac{1}{d}$ for all $a,b,x,y$.

It so happens that the solution $(E^\star, F^\star)$ of Problem \eqref{quantum_npo} can be chosen such that the non-commuting variables are all projectors~\cite{Junge_2011}. That is,
\begin{equation}
\begin{aligned}
(E^\star_{a|x})^2&=E^\star_{a|x},\quad \forall a,x,\\ 
(F^\star_{b|y})^2&=F^\star_{b|y},\quad \forall b,y.
\label{rel_proj}
\end{aligned}
\end{equation}
The standard SDP hierarchy of relaxations for problem (\ref{quantum_npo}) under restrictions (\ref{rel_proj}) is usually dubbed the Navascu\'es-Pironio-Ac\'in (NPA) hierarchy \cite{NPA2007,NPA2008}. Since we will next define extra constraints, based on optimality conditions, to boost its speed of convergence, we will in the following refer to NPA without optimality conditions as ``vanilla NPA''.

Next, we apply Theorem \ref{theo_partial_conv} independently to Alice's algebra $\A$ (generated by the projectors $E^\star$'s) and to Bob's algebra $\B$ (generated by the projectors $F^\star$'s). The partial strong ncKKT conditions imply that we can, not only demand the state $\sigma$ to be compatible with relations \eqref{rel_proj}, but also the Lagrangian multipliers of Alice's $\mu_{a|x}^A,\lambda_x^A$ and Bob's $\mu_{b|y}^B,\lambda_y^B$ to be respectively compatible with the constraints $\{E_{a|x}^2-E_{a|x}=0\}_{x,a}\cup\{1-\sum_aE_{a|x}=0\}_x$ and $\{F_{b|y}^2-F_{b|y}=0\}_{y,b}\cup\{1-\sum_bF_{b|y}=0\}_y$. 

Calling ${\E}$ ($\F$) the set of polynomials on the $E$'s ($F's)$, the operator optimality relations for the $E$'s read:
\begin{subequations}
\begin{align}
&\mu^A_{a|x}(ss^*)\geq 0,\quad\forall a,x,\; \forall s\in\E,\\
&\mu^A_{a|x}\Big(s\big((E_{a'|x'})^2-E_{a'|x'}\big)s'\Big)=0,\quad\forall a,a',x,x',\;\forall s,s'\in\E\\
&\mu^A_{a|x}\Big(s\Big(\sum_{a'}E_{a'|x'}-1\Big)s'\Big)=0,\quad\forall a,x,x',\;\forall s,s'\in\E,\\
&\lambda^A_{x}\Big(s\big((E_{a|x'})^2-E_{a|x'}\big)s'\Big)=0,\quad\forall a,x,x',\;\forall s,s'\in\E\\
&\lambda^A_{x}\Big(s\Big(\sum_{a}E_{a|x}-1\Big)s'\Big)=0,\quad\forall x,\;\forall s,s'\in\E,\\
&\mu^A_{a|x}(E_{a|x})=0,\quad\forall a,x,\\
&\sigma\left(\frac{1}{2}\sum_{b,y}C(a,b,x,y)\{p,F_{b|y}\}\right)=\mu^A_{a|x}\left(p\right)+\lambda^A_x(p),\quad\forall a,x,\;\forall p\in \E.
\end{align}
\end{subequations}
The operator optimality relations for the $F$'s are the same, under the replacements $E\to F$, $a\to b$, $x\to y$, $\E\to \F$, $A\to B$. The reader can find the full optimization problem, including the state optimality conditions in Appendix \ref{app:rollo}.

\subsubsection{Only two outcomes}
When $a,b$ can only take two values, it is customary to rewrite Problem \eqref{quantum_npo} in terms of ``dichotomic operators'' $A_x:=E_{1|x}-E_{2|x}, B_y:=F_{1|y}-F_{2|y}$. The problem to solve is thus
\beq
\begin{aligned}
\min\  &\sigma\left(H\right)\\ 
\mbox{s.t. }&\frac{1-A_x}{2}\geq 0,\;\frac{1+A_x}{2}\geq 0,\quad\forall x,\\ 
&\frac{1-B_y}{2}\geq 0,\;\frac{1+B_y}{2}\geq 0,\quad\forall y,\\ 
&[A_x,B_y]=0,\quad\forall x,y
\end{aligned}
\eeq
where $H$ is the Bell polynomial
\begin{equation}
    \frac{1}{2}\sum_{x,y}c_{xy}\{A_x,B_y\}+\sum_x d_x A_x+\sum_y e_yB_y.
\end{equation}
To simplify notation, we define the polynomials
\begin{subequations}
\begin{align}
&{\cal F}_x:=\sum_y \frac12 c_{xy}B_y + \frac12 d_{x}\id, \\
&{\cal G}_y:=\sum_x \frac12c_{xy}A_x + \frac12 e_{y}\id,
\end{align}
\end{subequations}
which allow us to express $H$ as
\begin{equation}
H =\sum_{x} \{{\cal F}_x,A_x\} + \sum_ye_yB_y = \sum_{y} \{{\cal G}_y, B_y\} +  \sum_x d_xA_x.
\end{equation}
As before, it can be shown that the minimizers $(A^\star,B^\star)$ can be chosen such that 
\begin{equation}
(A^\star_x)^2=(B^\star_y)^2=1.
\end{equation}
Thus, once more we can apply Theorem \ref{theo_partial_conv} to the algebra $\A$ generated by $A^\star_1,\ldots ,A^\star_n$ and conclude that one can add new state multipliers $\mu_x^+,\mu_x^-$ to the problem, with the properties:
\begin{subequations}
\begin{align}
&\mu_x^{\pm}(s(A_{x'}^2-1)s')=0,\quad \forall x,x',\;\forall s,s'\in \A,\label{dual_dicho}\\
&\mu^{\pm}_x\left(\frac{1\pm A_x}{2}\right)=0,\quad\forall x,\label{slack_dicho}\\
&\sigma\left(\{p,{\cal F}_x\}\right)=\mu_x^+(p)-\mu_x^-(p),\quad\forall x, \;\forall p\in\A.\label{optimality_dicho}
\end{align}
\end{subequations}

If one does not wish to introduce new variables $\mu^{\pm}_x$ into the NPO problem, it is easy to get a relaxation of the conditions above that only involves evaluations with the already existing variable $\sigma$. 

Let $E^{\pm}_x:=\frac{1\pm A_x}{2}$. From Eq.~\eqref{slack_dicho} and the positivity of $\mu_x^\pm$, an analogous argument to the one used to derive Eqs.~\eqref{vanishing_averages} shows that
\begin{equation}
\mu^{\pm}_x\left(E_x^{\pm}p\right)=\mu^{\pm}_x\left(p E_x^{\pm}\right)=0\quad \forall p \in \A.
\label{anni_Es}
\end{equation}
Taking $p=\{A_x,q\}$ we find that
\begin{equation}
\mu^{\pm}_x\left(\{A_x,q\}\right)=\mp\mu^{\pm}_x\left(q+A_xqA_x\right)=0.
\end{equation}
Thus, if we set $p=-\{A_x,q\}$ in Eq.~\eqref{optimality_dicho}, we arrive that
\begin{equation}
-\sigma\left(\{\{A_x,q\},{\cal F}_x\}\right)=\mu^+(q+A_xqA_x)+\mu^-(q+A_xqA_x).
\end{equation}
In particular, taking $q=ss^*$, we have that
\begin{equation}
-\sigma\left(\{\{A_x,ss^*\},{\cal F}_x\}\right)=\mu^+(ss^*+A_xss^*A_x)+\mu^-(ss^*+A_xss^*A_x)\geq 0,\quad\forall s\in\E.
\label{positivity_dicho}
\end{equation}
Setting $p=[A_x,q]$ in Eq.~\eqref{optimality_dicho} and using Eq.~\eqref{anni_Es}, we obtain another useful constraint:
\begin{equation}
\sigma\left(\{[A_x,q],{\cal F}_x\}\right)=0,\quad\forall q\in\A.
\label{commutator}
\end{equation}
Constraints \eqref{positivity_dicho}, \eqref{commutator} are, respectively, extra positivity and linear conditions that one can apply to the already existing variables of the `quantum NPO' \eqref{quantum_npo}.

\subsubsection{Numerical implementation}

In order to implement numerically the constraints \eqref{positivity_dicho} and \eqref{commutator}, together with the analogous constraints for Bob and the state optimality conditions \eqref{state_optimality}, we express them in terms of a basis of monomials. Let $\{m_i^A\}_i$ and $\{m_i^B\}_i$ be a basis of monomials belonging to Alice's and Bob's algebra of operators, and $\{m_i\}_i$ a basis for the entire algebra. Then the equality constraints become
\begin{subequations}
    \begin{align}
    \sigma\left({\cal F}_x[A_x,m_i^A]\right)&=0,  \\
    \sigma\left({\cal G}_y[B_y,m_i^B]\right)&=0,  \\
    \sigma\left([H,m_i]\right)&=0,  \label{eq:bell_equality_gamma}
    \end{align}
\end{subequations}
where we are using the fact that ${\cal F}_x$ commutes with every element of Alice's algebra, and the analogous condition for Bob. The positivity conditions \eqref{positivity_dicho} are equivalent to the positive semidefiniteness of the matrices $\{\alpha^x\}_x,\{\beta^y\}_y,\gamma$, with elements given by
\begin{subequations}\label{eq:bell_positivity}
    \begin{gather}
\alpha^x_{ij} := -\sigma\left({\cal F}_x\{A_x,{m_i^A}^*m_j^A\}\right), \\
\beta^y_{ij} := -\sigma\left({\cal G}_y\{B_y,{m_i^B}^*m_j^B\}\right), \\
\gamma_{ij} := \sigma\left(m_i^* H m_j-\frac{1}{2}\{H,m_i^* m_j\}\right). \label{eq:bell_positivity_gamma}
    \end{gather}
\end{subequations}
Note that, when dealing with Bell inequalities, it is more usual to formulate the problem as a maximization instead of a minimization \cite{NPA2007}. One can adapt the KKT constraints for maximization by simply flipping the sign of the positivity conditions \eqref{eq:bell_positivity}.

Most commercial SDP solvers are based on interior-point methods, as they guarantee fast and accurate results. In order for interior point algorithms to work reliably, it is however vital to ensure that the problem we are solving is strictly feasible, that is, that there exists a point that satisfies all the equality constraints and has strictly positive eigenvalues in the positive semidefiniteness constraints \cite{Drusvyatskiy2017}. Although the vanilla NPA hierarchy is always strictly feasible \cite{Tavakoli2023}, this is in general not true when additional constraints are enforced \cite{Araujo2023}. This is in fact the case here, as the matrix $\gamma$ will necessarily have linearly dependent columns, and therefore some of its eigenvalues will be zero. To see that, we use Eq.~\eqref{eq:bell_equality_gamma} to rewrite Eq.~\eqref{eq:bell_positivity_gamma} as
\begin{equation}
\gamma_{ij} = \sigma(m_i^*[H,m_j]).
\end{equation}
This implies that a sufficient condition for some columns $\gamma_{\_j}$ to be linearly dependent is that the corresponding operators $[H,m_j]$ are linearly dependent. This is always the case if $m_j = \id$, as $[H,\id]=0$, or if $\{m_j\}_j$ is a set of monomials that can express $H$ itself, as $[H,H]=0$. Additional linear dependencies can show up for specific choices for $H$. In the Bell inequalities we consider in this section, the only additional dependencies that appeared were in the case of the tilted CHSH \eqref{eq:tilted_chsh}, for which $[H,\{A_0,A_1\}] = [H,\{B_0,B_1\}] = 0$. We removed these dependencies simply by removing enough monomials from the set used to define $\gamma$. We verified numerically that after doing that, the problem was always strictly feasible.

We illustrate the technique by minimizing Bell functionals in the 2222, 3322, and 4422 Bell scenarios. All calculations were done using the toolkit for non-commutative polynomial optimization Moment \cite{Garner2024}, the modeller YALMIP \cite{yalmip}, and the arbitrary-precision solver SDPA-GMP \cite{Nakata2010}. We are particularly interested in checking whether we have achieved convergence at some level. To do so, we verify that the moment matrix is flat \cite{Henrion2005, Nie2013b, Laurent2014, Burgdorf2016, Magron2023} or, in physicists' slang, that it has a \emph{rank loop} \cite{NPA2008}. We remark under the corresponding table whether it holds. 

We start with a tilted version of the CHSH inequality~\cite{CHSH1969}, with an additional $\tau (A_0 + B_0)$ term~\cite{Eberhard1993,Liang2011}. In full correlation notation the coefficients table is given by
\begin{equation}\label{eq:tilted_chsh}
\left(\begin{array}{@{}r|rr@{}}
0 & \tau & 0 \\ \hline
\tau & \phantom{-}1 & 1\\
0 & 1 & -1
\end{array}\right).
\end{equation}
This table, which represents a Hermitian polynomial on the operators $1$, $\{A_x\}_x, \{B_y\}_y$ has to be understood as follows: the rows are labeled by the operators $O=(1,A_1,A_2)$; the columns, by the operators $O'=(1,B_1,B_2)$. The $O_j,O_k'$-th entry of the table corresponds to the coefficient multiplying $\frac{1}{2}\{O_j,O_k'\}$. 

For two different values of $\tau$, we upper bound the maximum average of this polynomial, see Tables \ref{tb:chsh_kkt95} and \ref{tb:chsh_kkt99}. As $\tau$ tends to $1$ the level at which the NPA hierarchy converges exactly seems to get ever higher.
	\begin{table}[htb]
	\begin{tabular}{c|c|c}
		level & NPA & NPA+KKT \\ \hline
        $2$ & $3.9003\ 2967$ & $3.9003\ 1859$  \\
        $3$ & $3.9001\ 6474$ & $3.9001\ 6389$ \\
        $4$ & $3.9001\ 6389$ &   \\
	\end{tabular}
	\caption{Results for the tilted CHSH inequality with $\tau=0.95$. For comparison, the analytical value is $3.9001\ 6389\ 9372$ \cite{gigena24}. With the KKT constraints we get a rank loop at level 3, and without at level 4.}
	\label{tb:chsh_kkt95}
	\end{table}

	\begin{table}[!h]
	\begin{tabular}{c|c|c}
		level & NPA & NPA+KKT \\ \hline
        $2$ & $3.9800\ 1157$ & $3.9800\ 1078$  \\
        $3$ & $3.9800\ 0416$ & $3.9800\ 0280$ \\
        $4$ & $3.9800\ 0217$ & $3.9800\ 0132$  \\
        $5$ & $3.9800\ 0156$ &  \\
        $6$ & $3.9800\ 0132$ & 
	\end{tabular}
	\caption{Results for the tilted CHSH inequality with $\tau=0.99$. For comparison, the analytical value is $3.9800\ 0132\ 8893$ \cite{gigena24}. With the KKT constraints we find a rank loop at level $4$; without, at level $7$.}
	\label{tb:chsh_kkt99}
	\end{table}

Our next example is the well-studied I3322 inequality~\cite{Collins2004,Pal2010,Rosset2018}. In full correlation notation the coefficients table is given by
\begin{equation} \frac14
\left(\begin{array}{@{}r|rrr@{}}
     0   & -1  &  -1  &   0 \\ \hline
    -1   & -1  &  -1  &  -1 \\
    -1   & -1  &  -1  &   1 \\
     0   & -1  &   1  &   0
\end{array}\right).
\end{equation}
The results are shown in Table \ref{tb:i3322_kkt}. Its maximal violation is conjectured to occur only for an infinite-dimensional system \cite{Pal2010}, and a rank loop implies the existence of a finite-dimensional system achieving the maximum. Therefore we expected to find no rank loop here, as was indeed the case. For increased efficiency the calculations here were done without the positivity conditions \eqref{eq:bell_positivity}, as they did not seem to improve the results.
	\begin{table}[!h]
	\begin{tabular}{c|c|c}
		level & NPA & NPA+KKT  \\ \hline
        $2$ & $1.2509\ 3972$ & $1.2509\ 3965$  \\
        $3$ & $1.2508\ 7556$ & $1.2508\ 7554$  \\
        $4$ & $1.2508\ 7540$ & $1.2508\ 7538$ \\
        $5$ & $1.2508\ 7538$
	\end{tabular}
	\caption{Results for the I3322 inequality. For comparison, the best known lower bound is $1.2508\ 7538\ 4513$. No rank loop was found.}
	\label{tb:i3322_kkt}
	\end{table}

Our final example is the $I_{4422}^{20}$ inequality~\cite{Brunner2008}, that had a gap between the best known lower bound and the best known upper bound (see Table~IV of Ref.~\cite{Pal2009}). In full correlation notation the coefficients table is given by
\begin{equation}\frac14
\left(\begin{array}{@{}r|rrrr@{}}
     -12   & -1  &  -1  &  -2 &  4 \\ \hline
    -1   & -1  &   1  &   1 &  2 \\
    -1   &  1  &  -1  &   1 &  2 \\
    -2   &  1  &   1  &  -2 &  2 \\
     4   &  2  &   2  &   2 & -2
\end{array}\right).
\end{equation}
The results are shown in Table \ref{tb:i20_kkt}. For increased efficiency the calculations here were done without the positivity conditions \eqref{eq:bell_positivity}, as they did not seem to improve the results.
 \begin{table}[!h]
	\begin{tabular}{c|c|c}
		level & NPA & NPA + KKT  \\ \hline
        $2$ & $0.5070\ 6081$ & $0.5020\ 4577$  \\
        $3$ & $0.4677\ 5783$ & $0.4676\ 7939$  \\
        $4$ & $0.4676\ 7939$
	\end{tabular}
	\caption{Results for the inequality $I_{4422}^{20}$. For comparison, the best known lower bound is $0.4676\ 7939$. With the KKT constraints we get a rank loop at level 3, and without at level 4.}
	\label{tb:i20_kkt}
	\end{table}

We have also experimented with the weak ncKKT conditions from Definition \ref{def:weak_ncKKT}, even though they are not proven to hold for this problem. In all cases we got the same numerical answer as with the partial KKT conditions, except in the tilted CHSH case \eqref{eq:tilted_chsh}, where we got numerical problems.

%Our final example is the $J_{4422}^{30}$ inequality (see Table~I of Ref.~\cite{Pal2010}). In full correlation notation the coefficients table is given by
%\begin{equation}\frac14
%\left(\begin{array}{@{}r|rrrr@{}}
% -12   &  \phantom{-}1   & -4  &  -1  &   0 \\ \hline
%     1   &  1   &  2  &  -1  &   1 \\
%    -3   &  2   & -2  &  -2  &  -1 \\
%     2   &  1   &  2  &   2  &  -1 \\
%    -2   &  1   & -2  &   2  &   1
%\end{array}\right)
%\end{equation}

%	\begin{table}[ht]
%	\begin{tabular}{c|c|c}
%		level & NPA & NPA+ KKT  \\ \hline
%        $2$ & $0.4763\ 8711$ & $0.4716\ 3685$  \\
%        $3$ & $0.4372\ 6295$ & $0.4371\ 1071$ \\
%        $4$ & $0.4364\ 2875$ & $0.4364\ 1954$
%	\end{tabular}
%	\caption{Results for the inequality $J_{4422}^{30}$. For comparison, the best known lower bound is $0.4363\ 8422$.}
%	\label{tb:chsh_kkt}
%	\end{table}
%\clearpage

\section{Conclusion}
In this work, we have generalized the KKT optimality conditions to non-commutative polynomial optimization problems (NPO). These enforce new equality and positive semidefinite conditions on the already existing hierarchies of SDPs used in NPO. 

The state optimality conditions (Eq.~\eqref{state_optimality}) and \core ncKKT (Eq.~\eqref{essential_KKT}), a loose version of the operator optimality conditions, hold for all problems. In contrast, normed and strong ncKKT conditions need to be justified through some constraint qualification. The existence of an SOS certificate to solve the NPO problem is enough to guarantee that the strong ncKKT conditions hold. This property is difficult to verify for most NPO problems \emph{a priori}. However, we found that it is satisfied by all Archimedean NPO problems with strictly feasible convex constraints or a faithful finite-dimensional representation. 

In addition, we generalized a known `classical' qualification constraint: Mangasarian--Fromovitz Constraint Qualification (MFCQ), which legitimates the use of normed ncKKT conditions in NPO. We also presented very mild conditions that guarantee that at least some relaxed form of either the normed or the strong ncKKT conditions holds. Those conditions are satisfied in the NPO formulation of quantum nonlocality, and thus have immediate practical applications.

We tested the effectiveness of the non-commutative KKT conditions by upper bounding the maximal violation of bipartite Bell inequalities in quantum systems. We found that the partial ncKKT conditions do improve the speed of convergence of the SDP hierarchy, sometimes achieving convergence at a finite level. This hints that the collapse of Lasserre's hierarchy of SDP relaxations \cite{Lasserre2001} under the KKT constraints, proven in  \cite{Nie2013}, might extend to the non-commutative case.

Similarly, we applied the state optimality conditions to bound the local properties of ground states of many-body quantum systems. Prior to our work, there was no mathematical tool capable of delivering rigorous bounds that did not rely on variational methods, see \cite{Wang2023}. It is intriguing whether the state optimality conditions can be integrated within renormalization flow techniques, like those proposed in \cite{Kull2022}. That would allow one to skip several levels of the SDP hierarchy through a careful (Hamiltonian-dependent) trimming of irrelevant degrees of freedom, thus delivering much tighter bounds on key physical properties.

\begin{acknowledgements}
\begin{wrapfigure}{r}[0cm]{2cm}
\begin{center}
\includegraphics[width=2cm]{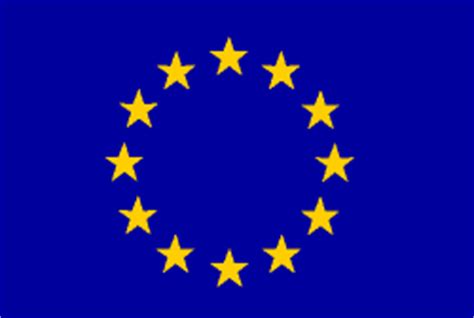}    
\end{center}
\end{wrapfigure} 
This project was funded within the QuantERA II Programme that has received funding from the European Union's Horizon 2020 research and innovation programme under Grant Agreement No 101017733, and from the Austrian Science Fund (FWF), project I-6004. M.A and A.G. acknowledge funding from the FWF stand-alone project P 35509-N. M.A. also acknowledges support by the European Union--Next Generation UE/MICIU/Plan de Recuperación, Transformación y Resiliencia/Junta de Castilla y León, by the Department of Education of the Junta de Castilla y León and FEDER Funds (Reference: CLU-2023-1-05), and by the Spanish Agencia Estatal de Investigación, Grants No. RYC2023-044074-I and PID2024-161725OA-I00. 
I.K.'s work was performed within the project COMPUTE, funded within the QuantERA II Programme that has received funding from the EU's H2020 research and innovation programme under the GA No 101017733.
I.K. was also supported by the Slovenian Research Agency
program P1-0222 and 
grants J1-50002, N1-0217, J1-60011, J1-50001, J1-3004 and J1-60025. Partially supported by the Fondation de l’Ecole polytechnique as part of the Gaspard Monge Visiting Professor Program. I.K. thanks Ecole Polytechnique and Inria Paris Saclay
for hospitality during the preparation of this manuscript.
T.V. acknowledges the support of the European Union (QuantERA eDICT, CHIST-ERA MoDIC), the National Research, Development and Innovation Office NKFIH (Grants No.~2019-2.1.7-ERA-NET-2020-00003,  No.~2023-1.2.1-ERA\_NET-2023-00009 and No.~K145927) and the `Frontline' Research Excellence Program of the NKFIH (No.~KKP133827).

\end{acknowledgements}

\bibliography{biblios_KKT}

\eject
\begin{appendix}
\section{Heuristic derivation of the first-order optimality conditions}
\label{app:heuristic}
The KKT conditions can be non-rigorously derived by considering infinitesimal variations over $x^\star$ of the Lagrangian functional
\begin{equation}
{\cal L}_c(x,\mu,\lambda)=f(x)-\sum_{i\in\mathbb{A}(x^\star)}\mu_ig_i(x)-\sum_j\lambda_jh_j(x),
\label{classical_lagrangian}
\end{equation}
with $\{\mu_i\}_i\subset \R^+$, $\{\lambda_j\}_j\subset \R$.

In this work, we seek the non-commutative analogs of the first-order optimality conditions \eqref{KKT_classical}. To find them, we start by writing a Lagrangian for Problem \eqref{nc_prob_hilbert}. 

Let $(\H^\star,X^\star,\psi^\star)$ be any bounded solution of Problem \eqref{nc_prob_hilbert}. For technical convenience, rather than considering general variations of the operators $X_1^\star,\ldots,X_n^\star$ within $B(\H^\star)$, we will demand those to be within $C^*(X^\star)$, the unital $C^*$-algebra generated by the operators $X^\star:=(X_1^\star,\ldots , X_n^\star)$. Consequently, from now on we regard the state $\psi^\star$ as a linear, positive, normalized functional on $C^*(X^\star)$. 

Note that any operator inequality constraint $g(\bar{X})\geq 0$ can be interpreted as an infinity of inequality constraints of the form $\omega(g(\bar{X}))\geq 0$ for all states $\omega:C^*(X^\star)\to\C$. The set of active constraints at $X^\star$ thus corresponds to $\{\omega(g(\bar{X}))\geq 0 :\omega\geq 0,\;\omega(g(X^\star))=0\}$. Similarly, the equality constraint $h(\bar{X})=0$ is equivalent to the condition $\xi(h(\bar{X}))=0$, for all Hermitian linear functionals $\xi:C^*(X^\star)\to\C$, and the set of active constraints associated to the positivity of the state $\psi^\star$ is $\{\psi(w)\geq0:w\geq0,\psi^\star(w)=0\}$.

Bearing the last two paragraphs in mind, the non-commuting analog of the classical Lagrangian \eqref{classical_lagrangian} is:
\begin{align}
{\cal L}& =\psi(f(\bar{X}))-\int_{\substack{\psi^\star(w)=0,\\[.3mm] w\geq 0\hfill}} M(w)\mathrm{d}w\psi\left(w\right)+\alpha(1-\psi(1))\nonumber\\
&\phantom{=}-\sum_{i}\int_{\substack{\omega(g_i(X^\star))=0,\\[.3mm] \omega\text{ state}\hfill}} \nu_i(\omega)\mathrm{d}\omega\,\omega(g_i(\bar{X}))-\sum_j\int \Theta_j(\xi)\mathrm{d}\xi\,\xi(h_j(\bar{X})),
\label{lagrangian_first_atempt}
\end{align}
where the Hermitian operator variables $\bar{X}=(\bar{X}_1,\ldots ,\bar{X}_n)$ are elements of $C^*(X^\star)$ and $\psi:C^*(X^\star)\to\C$ represents our state variable. The multipliers $\alpha$, $M(w)\mathrm{d}w$, $\{\nu_i(\omega)\mathrm{d}\omega\}_i$, $\{\Theta_j(\xi)\mathrm{d}\xi\}_j$ respectively denote a real variable and measure-type variables over the set of positive semidefinite $w\in C^*(X^\star)$, the set of states $\omega$ and the set of Hermitian linear functionals $\xi$.

Notice that integration over the set of states (or functionals, or positive semidefinite elements) of a $C^*$-algebra might not be well defined if the latter is infinite dimensional. This lack of mathematical rigor is all right: we are not aiming to \emph{prove} optimality conditions, just to guess their form. 

For simplicity, for each $i$, we next define a new multiplier $\mu_i$, of the form
\begin{equation}
\mu_i:=\int_{\substack{\omega(g_i(X^\star))=0,\\[.3mm] \omega\text{ state}\hfill}} \nu_i(\omega){\rm d}\omega\, \omega.
\label{def_mu}
\end{equation}
We do likewise with the integration over $w$, i.e., we define
\begin{equation}
M:=\int_{\substack{\psi^\star(w)=0,\\[.3mm] w\geq 0\hfill}} M(w)\mathrm{d}w\, w.
\label{def_M}
\end{equation}
By \eqref{def_mu}, \eqref{def_M}, it follows that, for each $i$, $\mu_i$ is a positive linear functional and that $M$ is positive semidefinite. Moreover,
\begin{equation}
\begin{aligned}
\mu_i(g_i(X^\star))&=0,\quad i=1,\ldots,m,\\
\psi^\star(M)&= 0.
\label{comp_slack_primitive}
\end{aligned}
\end{equation}
These conditions are a non-commutative analog of \emph{complementary slackness} \cite{Nocedal2006}.

Analogously, we absorb the integrals over $\xi$ by defining the Hermitian linear functionals
\begin{equation}
\lambda_j:=\int \Theta_j(\xi)\mathrm{d}\xi\,\xi.
\end{equation}
Substituting in \eqref{lagrangian_first_atempt}, this expression is simplified to:
\begin{equation}
{\cal L}=\psi(f(\bar{X}))-\psi(M)+\alpha(1-\psi(1))-\sum_{i}\mu_i(g_i(\bar{X}))-\sum_j\lambda_j(h_j(\bar{X})).
\label{lagrangian}
\end{equation}

We next arrive at candidate first-order optimality conditions for Problem \eqref{nc_prob_hilbert} by varying the problem state variable $\psi$ and the operator variables $\bar{X}$ in Eq.~\eqref{lagrangian}, all the while imposing the stationarity of ${\cal L}$. That way, we will obtain two sets of independent constraints: the state optimality conditions \eqref{state_optimality} and strong ncKKT \eqref{strong_conditions}. 

\subsection{State optimality conditions}
Varying the state $\psi:C^*(X^\star)\to\C$ in \eqref{lagrangian} from the optimal $\psi^\star$ to $\psi^\star+\delta\psi$ leads to the condition
\begin{equation}
f(X^\star)-\alpha\id=M\geq 0.
\end{equation}
On the other hand, complementary slackness implies that $\psi^\star(M)=0$, i.e., the optimal state $\psi^\star$ is a ground state of $f(X^\star)-\alpha$ and $E_0(f(X^\star)-\alpha)=0$. From $\psi^\star(f(X^\star))=p^\star$, we have that $\alpha=p^\star$.

Call $H=f(X^\star)$. By Proposition \ref{prop:eigen}, we have
\begin{equation}
\psi^\star(H\bullet)=\psi^\star(\bullet H)=p^\star\psi^\star(\bullet).
\label{eigenvector}
\end{equation}
This implies that 
\begin{equation}
\psi^\star([H,q])=0,\quad\forall q\in C^*(X^\star).
\label{heuris_comm}
\end{equation}
In addition, assuming (w.l.o.g.\footnote{If $\psi^\star$ is not cyclic, then we can apply the GNS construction~\cite{Gelfand1943,Segal1947} and find another solution of Problem~\eqref{nc_prob_hilbert} with the same moments, such that the new state is cyclic.}) that $\psi^\star$ is cyclic, the condition $f(X^\star)-p^\star\geq 0$ is equivalent to
\begin{equation}
\psi^\star(q^* (f(X^\star)-p^\star)q)\geq 0,\quad\forall q\in C^*(X^\star).
\label{heuris_lind}
\end{equation}
Due to \eqref{eigenvector}, the second term of the left-hand side of the above equation can be written as $\frac{1}{2}\psi^\star(\{f(X^\star),q^*q\})$. In addition, by continuity it is enough to demand Eqs.~\eqref{heuris_comm} and \eqref{heuris_lind} to hold for $q\in \A(X^\star)$. Putting everything together, we arrive at Eq.~\eqref{state_optimality_hilbert}.
\subsection{Strong ncKKT}
We next derive the strong ncKKT conditions. Let $p\in C^*(X^\star)^{n}$ be a tuple of Hermitian operators, and let $\epsilon\in\R^+$. Setting $\bar{X}=X^\star+\epsilon p$ in Eq.~\eqref{lagrangian} and demanding stationarity of ${\cal L}$ at order $\epsilon$, we arrive at the condition:
\begin{equation}
\psi^\star\left(\nabla_xf(p)\Bigr|_{x=X^\star}\right)-\sum_i\mu_i\left(\nabla_x g_i(p)\Bigr|_{x=X^\star}\right)-\sum_j\lambda_j\left(\nabla_x h_j(p)\Bigr|_{x=X^\star}\right)=0,\quad\forall p\in C^*(X^\star)^{n}.
\label{optim_middle}
\end{equation}
Again by continuity, it is sufficient to demand the equation above to hold for all $p\in \A(X^\star)^n$.

Together with complementary slackness \eqref{comp_slack_primitive}, we arrive at the definition of strong ncKKT in Section \ref{sec:first_order}.

\section{Re-derivation of the first-order optimality conditions of \cite{helton2023synchronous} through Theorem \ref{essential_theo}}
\label{app:rederivation_helton}
Consider the following problem:
\begin{align}
\inf_{\H,\tau,E} & \tau\left(\sum_{x,y,a,b}W(a,b,x,y)E_{a|x}E_{b|y}\right)\nonumber\\
\mbox{s.t. }&\tau\mbox{ tracial state},\nonumber\\
&E_{a|x}-E_{a|x}^2=0,\quad\forall a,x\nonumber\\
&\sum_aE_{a|x}-1=0,\quad\forall x,
\label{tracial_prob}
\end{align}
where $x=1,\ldots ,n$, $a=1,\ldots ,d$. Here, by tracial state, we mean a state $\tau$ with the property $\tau(ab)=\tau(ba)$, for all $a,b\in B(\H)$.
This problem is considered in \cite{helton2023synchronous} in the context of quantum non-locality. As shown in \cite{Klep_2021}, optimizations over tracial states can also be relaxed through hierarchies of SDPs. Since the associated quadratic module is Archimedean, such hierarchies converge and optimal solutions for (\ref{tracial_prob}) exist. Moreover, the optimal state can be assumed faithful, i.e., $\tau(aa^*)=0$ iff $a=0$, for  $a\in B(\H)$.

Let $(\H^\star,\tau^\star,E^\star)$ be one of those solutions. By Theorem \ref{essential_theo}, it follows that, for any $k\in \N$, there exist linear functionals $\{\lambda^k_{a|x}:\C\langle e\rangle \to\C\}$, $\{\lambda_x^k:\C\langle e\rangle \to\C\}$ such that
\begin{equation}\label{tau_optimality}
\begin{split}
&\tau^\star\left(\sum_{x,y,a,b}W(a,b,x,y)(p_{a|x}(E^\star)E^\star_{b|y}+E^\star_{a|x}p_{b|y}(E^\star))\right)\\
&=\sum_{a,x}\lambda^k_{a|x}\left(p_{a|x}-\{p_{a|x},E_{a|x}\}\right)+\sum_x\lambda^k_x\left(\sum_{a}p_{a|x}\right),
\end{split}
\end{equation}
for any tuple of polynomials $p=(p_{a|x})_{a|x}$ of degree $2(k-1)$. 

Now, for some $z\in\{1,\ldots ,n\}$, $s\in \C\langle e\rangle$, choose $p_{a|z}=i[s,e_{a|z}]$, for all $a$ and $p_{a|x}=0$, for all $x\not=z$. Substituting in (\ref{tau_optimality}), we have that
\begin{equation}
\tau^\star\left(i\left[\sum_{y,a,b}W(a,b,z,y)[s(E^\star),E^\star_{a|z}]E^\star_{b|y}+\sum_{x,a,b}W(a,b,x,z)E^\star_{a|x}[s(E^\star),E^\star_{b|z}])\right]\right)=0.
\end{equation}
Invoking the tracial properties of the state, this can be rewritten as $\tau^\star\left(s(E^\star)q_z(E^\star)\right)=0$, with
\begin{equation}
q_z(E^\star):=\sum_{y,a,b}W(a,b,z,y)[E^\star_{a|z},E^\star_{b|y}]+\sum_{x,a,b}W(a,b,x,z)[E^\star_{b|z},E^\star_{a|x}].    
\end{equation}
For $k\geq 2$, we can choose $s=q_z^*$, and thus $\tau^\star\left(q_z(E^\star)^*q_z(E^\star)\right)=0$. Since by assumption the state $\tau^\star$ is faithful, we have that
\begin{equation}
q_z(E^\star)=0,
\end{equation}
for $z=1,\ldots ,n$. This is the first-order optimality optimality condition derived in \cite[Prop. 7.1]{helton2023synchronous}.

\section{NPO for quantum nonlocality}
\label{app:rollo}
What follows is the original NPO formulation to compute the maximum quantum value of a Bell functional. We have added the partial operator KKT conditions derived in Section \ref{sec:bell}, together with the state optimality conditions \eqref{state_optimality}.
\small
\begin{equation*}
\begin{aligned}
c^\star=&\min\sigma\left(\frac{1}{2}\sum_{a,b,x,y}C(a,b,x,y)\{E_{a|x},F_{b|y}\}\right)\\ 
\mbox{s.t. }&\sigma(ss^*)\geq0,\quad\forall s\in\P,\\ 
&\sigma\Big(s\big((E_{a|x})^2-E_{a|x}\big)s'\Big)=0,\quad\forall a,x,\forall s,s'\in\P\\ 
&\sigma\Big(s\Big(\sum_aE_{a|x}-1\Big)s'\Big)=0,\quad\forall x,\forall s,s'\in\P\\ 
&\sigma\Big(s\big((F_{b|y})^2-F_{b|y}\big)s'\Big)=0,\quad\forall b,y,\forall s,s'\in\P\\ 
&\sigma\Big(s\Big(\sum_bF_{b|y}-1\Big)s'\Big)=0,\quad\forall y,\forall s,s'\in\P\\ 
&\sigma\big(s[E_{a|x},F_{b|y}]s'\big)=0,\quad\forall a,b,x,y,\forall s,s'\in\P\\ 
&\mu^A_{a|x}(ss^*)\geq 0,\quad\forall a,x, \forall s\in\E,\\
&\mu^A_{a|x}\Big(s\big((E_{a'|x'})^2-E_{a'|x'}\big)s'\Big)=0,\quad\forall a,a',x,x',\forall s,s'\in\E\\
&\mu^A_{a|x}\Big(s\Big(\sum_{a'}E_{a'|x'}-1\Big)s'\Big)=0,\quad\forall a,x,x',\forall s,s'\in\E,\\
&\lambda^A_{x}\Big(s\big((E_{a|x'})^2-E_{a|x'}\big)s'\Big)=0,\quad\forall a,x,x',\forall s,s'\in\E\\
&\lambda^A_{x}\Big(s\Big(\sum_{a}E_{a|x}-1\Big)s'\Big)=0,\quad\forall x,\forall s,s'\in\E,\\
&\mu^A_{a|x}(E_{a|x})=0,\quad\forall a,x,\\
&\sigma\left(\frac{1}{2}\sum_{b,y}C(a,b,x,y)\{p,F_{b|y}\}\right)=\mu^A_{a|x}\left(p\right)+\lambda^A_x(p),\quad\forall a,x,\forall p\in \E. \\
&\mu^B_{b|y}(ss^*)\geq 0,\quad \forall b,y,\forall s\in\F\\ 
&\mu^B_{b|y}\Big(s\big(F_{b'|y'})^2-F_{b'|y'}\big)s'\Big)=0,\quad\forall b,b',y,y',\forall s,s'\in\F\\ 
&\mu^B_{b|y}\Big(s\Big(\sum_{b'}E_{b'|y'}-1\Big)s'\big)=0,\quad\forall b,y,y',\forall s,s'\in\F\\ 
&\lambda^B_{y}\Big(s\big(F_{b|y'})^2-F_{b|y'}\big)s'\Big)=0,\quad\forall b,y,y',\forall s,s'\in\F\\ 
&\lambda^B_{y}\Big(s\Big(\sum_{b}F_{b|y}-1\Big)s'\Big)=0,\quad\forall y,\forall s,s'\in\F\\ 
&\mu^B_{b|y}(F_{b|y})=0,\quad\forall b,y,\\ 
&\sigma\left(\frac{1}{2}\sum_{a,x}C(a,b,x,y)\{E_{a|x},p\}\right)=\mu^B_{b|y}\left(p\right)+\lambda^B_y(p),\quad\forall b,y,\forall p\in \F,\\ 
&\sigma\left(\left[\frac{1}{2}\sum_{a,b,x,y}C(a,b,x,y)\{E_{a|x},F_{b|y}\},p\right]\right)=0, \\ 
&\sigma\left(p^* \left(\frac{1}{2}\sum_{a,b,x,y}C(a,b,x,y)\{E_{a|x},F_{b|y}\}\right) p-\frac{1}{2}\left\{\frac{1}{2}\sum_{a,b,x,y}C(a,b,x,y)\{E_{a|x},F_{b|y}\},p^* p\right\}\right)\geq 0,\quad\forall p\in \P.
\end{aligned}
\end{equation*}

\end{appendix}

\end{document}